\newcommand{\rsX}[2]{R^{X,#1}_{#2}}
\newcommand{\qsX}[2]{Q^{X,#1}_{#2}}
\newcommand{\rsY}[2]{R^{Y,#1}_{#2}}
\newcommand{\qsY}[2]{Q^{Y,#1}_{#2}}
\newcommand{\ri}[2]{R_{#2}^{#1}}
\newcommand{\qi}[2]{Q_{#2}^{#1}}
\newcommand{\lft}{\mathsf{left}}
\newcommand{\rgt}{\mathsf{right}}
\newcommand{\shf}{\mathsf{shift}}
\newtheorem*{theorem*}{Theorem}
\newtheorem*{lemma*}{Lemma}
\newtheorem*{corollary*}{Corollary}
\newtheorem*{proposition*}{Proposition}
{}
\newcommand*{\bdiv}{%
  \nonscript\mskip-\medmuskip\mkern5mu%
  \mathbin{\operator@font div}\penalty900\mkern5mu%
  \nonscript\mskip-\medmuskip
}
\def\range{\@ifstar\range@@\range@}
\def\range@#1#2{[\,#1\,.\,.\,#2\,]}
\def\range@@#1#2{\!\left[\,#1\,.\,.\,#2\,\right]\!}
\newcommand\ED{\operatorname{ED}}
\newcommand\TD{\operatorname{TD}}
\newcommand\Int{\mathbb Z}
\newcommand{\Oh}{\mathcal{O}}
\newcommand{\Ot}{\ensuremath{\widetilde{\Oh}}}
\newcommand{\dd}{.\,.}
\newcommand{\Exp}{\mathrm{Exp}}
\def\poly{\operatorname{poly}}
\def\polylog{\operatorname{polylog}}
\newcommand{\Zz}{\mathbb{Z}_{\ge 0}}
\newcommand{\lbl}{\mathsf{label}}
\newcommand{\Lbl}{\mathrm{Labels}}
\newcommand{\val}{\mathsf{val}}
\newcommand{\ins}{\mathsf{insert}}
\newcommand{\sub}{\mathsf{delete}}
\newcommand{\unlbl}{\mathsf{pos}}
\newcommand{\del}{\mathsf{delete}}
\renewcommand{\sub}{\subseteq}
\newcommand{\Tt}{\tilde{T}}
\newcommand{\Xd}{X_{\$}}
\newtheorem{claim}{Claim}[section]
\newtheorem{theorem}[claim]{Theorem}
\newtheorem{lemma}[claim]{Lemma}
\newtheorem{corollary}[claim]{Corollary}
\theoremstyle{definition}
\newtheorem{definition}[claim]{Definition}
\theoremstyle{remark}
\newtheorem{remark}[theorem]{Remark}
\setlist[enumerate]{nosep, topsep=1ex}
\setlist[itemize]{nosep, topsep=1ex}
\setlist[description]{nosep}
\title{Approximating Edit Distance in the Fully Dynamic Model}
\author[1]{Tomasz Kociumaka}
\author[2]{Anish Mukherjee\thanks{Research supported in part by the Centre for Discrete Mathematics and its Applications (DIMAP) and by EPSRC award EP/V01305X/1.}}
\author[3]{Barna Saha\thanks{Partially supported by NSF grants 1652303, 1909046, 2112533, and HDR TRIPODS Phase II grant 2217058.}}
\affil[1]{Max Planck Institute for Informatics, Saarland Informatics Campus, Saarbr\"ucken, Germany}
\affil[ ]{\url{tomasz.kociumaka@mpi-inf.mpg.de}}
\affil[2]{University of Warwick, Coventry, UK}
\affil[ ]{\url{anish.mukherjee@warwick.ac.uk}}
\affil[3]{University of California, San Diego, US}
\affil[ ]{\url{barnas@ucsd.edu}}
\date{}
\begin{document}

\maketitle
\setcounter{page}{0}
\thispagestyle{empty}
\begin{abstract}
    The edit distance is a fundamental measure of sequence similarity, defined as the minimum number of character insertions, deletions, and substitutions needed to transform one string into the other. Given two strings of length at most $n$, a simple dynamic programming computes their edit distance exactly in $\Oh(n^2)$ time, which is also the best possible (up to subpolynomial factors) assuming the Strong Exponential Time Hypothesis (SETH). The last few decades have seen tremendous progress in edit distance approximation, where the runtime has been brought down to subquadratic, to near-linear, and even to sublinear at the cost of approximation.

    In this paper, we study the \emph{dynamic} edit distance problem where the strings change dynamically as the characters are substituted, inserted, or deleted over time. Each change may happen at any location of either of the two strings.
    The goal is to maintain the (exact or approximate) edit distance of such dynamic strings while minimizing the update time. 
    The exact edit distance can be maintained in $\tilde{O}(n)$ time per update (Charalampopoulos, Kociumaka, Mozes; 2020), which is again tight assuming SETH. 
    Unfortunately, even with the unprecedented progress in edit distance approximation in the static setting, strikingly little is known regarding dynamic edit distance approximation.
    Utilizing the best near-linear-time (Andoni, Nosatzki; 2020) and sublinear-time (Goldenberg, Kociumaka, Krauthgamer, Saha; 2022) approximation algorithm, an old exact algorithm (Landau and Vishkin; 1988), and a generic dynamic strings implementation (Mehlhorn, Sundar, Uhrig; 1996), it is possible to achieve an $\Oh(n^{c})$-approximation in \smash{$n^{0.5-c+o(1)}$} update time for any constant \smash{$c\in [0,\frac16]$}. Improving upon this trade-off, characterized by the approximation-ratio and update-time product $n^{0.5+o(1)}$, remains wide open.

    The contribution of this work is a dynamic $n^{o(1)}$-approximation algorithm with amortized expected update time of $n^{o(1)}$. In other words, we bring the approximation-ratio and update-time product down to $n^{o(1)}$, which is also the best possible with the current state of the art in static algorithms. 
    Our solution utilizes an elegant framework of precision sampling tree for edit distance approximation (Andoni, Krauthgamer, Onak; 2010). We show how it is possible to dynamically maintain precision sampling trees which comes with significant nontriviality and can be an independent tool of interest for further development in dynamic string algorithms. 
\end{abstract}
\newpage
\section{Introduction}
Dynamic algorithms model the real-world scenario where data is evolving rapidly and solutions must be efficiently maintained upon every update. 
Owing to significant advances in recent decades, many fundamental problems in graphs and sequences are now well understood in the dynamic setting.
Examples include maximum matchings in graphs \cite{Sankowski07, GP13, BKS23}, connectivity \cite{NS17,ChuzhoyGLNPS20}, maximum flow \cite{ChenGHPS20, BLS23}, minimum spanning trees \cite{NSW17,ChuzhoyGLNPS20},  clustering \cite{CharikarCFM04, HK20, BateniEFHJMW23}, diameter estimation \cite{BN19}, independent set \cite{AOSS18}, pattern matching \cite{ABR00, GKKLS18, CKW20, CGKMU22}, lossless compression \cite{NIIBT20}, string similarity \cite{ABR00, CGP20}, longest increasing subsequence \cite{MS20, KS21, GJ21}, suffix arrays \cite{KK22}, and many others. 
Several related models with varying difficulties have been studied including incremental (insertion-only, where elements can only be added) \cite{BhattacharyaK20, GutenbergWW20, KMS22, BLS23}, decremental (deletion-only, where elements can only be deleted) \cite{Bernstein16, HKN16, BernsteinGW20, BernsteinGS20}, and fully-dynamic that supports both insertions and deletions \cite{DI04, GP13, RZ16, HHS22}.

In this paper, we study dynamic algorithms for computing edit distance, a fundamental measure of sequence similarity that, given two strings $X$ and $Y$, counts the minimum number of character insertions, deletions, and substitutions needed to convert $X$ into~$Y$. 
The edit distance computation is one of the cornerstone problems in computer science with wide applications. 
In the static setting, the complexity of edit distance problem is well understood: Under the Strong Exponential Time Hypothesis (SETH), there does not exist a truly subquadratic  algorithm for computing edit distance~\cite{ABW15,BK15,AHWW16,BI18}, whereas a textbook dynamic programming~\cite{Vin68,NW70,WF74,Sel74} solves the problem trivially in $\Oh(n^2)$ time for strings of lengths at most~$n$.
This has naturally fuels the quest of designing fast approximation algorithms that run in subquadratic, near-linear, or even sublinear time -- an area witnessing tremendous growth \cite{LV88,BEKMRRS03,BES06,AO09,AKO10,BEGHS18,CDGKS18,GKS19,GRS20,KS20b,BR20,AN20,KS20a,GKKS22,BCFN22a,BCFN22b}.
Starting with a $\sqrt{n}$-approximation in linear time \cite{LV88}, a sequence of works \cite{BEKMRRS03,BES06} have led to sub-polynomial~\cite{AO09} and polylogarithmic-factor~\cite{AKO10} approximations in near-linear time, constant-factor approximations in subquadratic time \cite{BEGHS18,CDGKS18,GRS20}, and recently culminated with a constant-factor approximation in near-linear time~\cite{AN20}. 

The development of dynamic edit distance algorithms started with easier variants of the problem.
Initially, most work has focused on updates restricted to the endpoints of the maintained strings~\cite{LMS98, KP04, IIST05,Tis08}. 
The most general of these contributions is Tiskin's~\cite{Tis08} linear-time algorithm for maintaining the edit distance subject to updates 
that can insert or delete a character at either endpoint of either of the two strings.
Much more recently, Charalampopoulos, Kociumaka, and Mozes~\cite{CKM20} gave an $\tilde{O}(n)$-update-time algorithm for the general dynamic edit distance problem, where updates are allowed anywhere within the maintained strings. Note SETH prohibits sublinear update time even in the most restrictive among the settings considered.

Achieving sublinear update time using approximation is therefore one of the tantalizing questions of dynamic edit distance. 
The unprecedented progress in static approximation algorithms implicitly leads to better dynamic algorithms.
In particular, a constant-factor approximation with $\Oh(n^{0.5+\epsilon})$ update time (for any constant $\epsilon > 0$) can be achieved as follows:
The algorithm maintains $\Oh(\log_C n)$ levels, each of which is responsible for distinguishing instances with edit distance at most $k$ from instances with distance at least $3 C k$ (for a constant $C \gg 1$ chosen based on $\epsilon$).
At each level, we test every $k$ updates whether the edit distance is below $2k$ or above $2kC$. 
If it is not above $2k C$, then it cannot get above $k+2kC\le 3k C$ for the next $k$ updates.
Similarly, if it is not below $2k$, then it cannot get below $2k-k=k$ for the next $k$ updates. 
If $k$ is small, we run the exact algorithm of \cite{LV88} using any dynamic strings data structure (e.g.,~\cite{m97,ABR00,GKKLS18})
that, at $\Ot(1)$ updated time, supports substring equality queries in $\Ot(1)$ time.
The cost of this implementation of the \cite{LV88} algorithm is $\Ot(k^2)$, which amortizes to $\Ot(k)$ per update. 
If $k$ is large, on the other hand, we use the approximation algorithm of~\cite{AN20}, which provides a $C$-approximation of edit distance in $\Oh(n^{1+\epsilon})$ time, amortizing to $\Oh(n^{1+\epsilon}/k)$ per update.
Choosing the best of the two approaches, we achieve $\Oh(n^{0.5+\epsilon})$ time for the worst-case level.
A similar strategy yields a more general trade-off: an $\Oh(n^c)$-approximation in $\Ot(n^{0.5-c})$ update time for any positive constant $c\le \frac16$.
The only difference is that we set $C=n^c$ (rather than a constant) and use the recent sublinear-time algorithm of~\cite{GKKS22} for large $k$. Its cost is $\smash{\Ot\big(\frac{nk^2}{(n^{c}k)^2}+\frac{n\sqrt{k}}{n^ck}\big)}$, which amortizes to $\smash{\Ot\big(\frac{n^{1-2c}}{k}+\frac{n^{1-c}}{k^{1.5}}\big)}$ per update.
Choosing the best of the two approaches, we achieve $\Ot\big(n^{0.5-c}+n^{0.4(1-c)}\big)$ time for the worst-case level, which simplifies to $\Ot(n^{0.5-c})$ provided that $c\le \tfrac16$.

Therefore, even with the recent progress in sublinear-time algorithms for edit distance estimation \cite{GKS19,KS20a,GKKS22,BCFN22a,BCFN22b}, improving the trade-off, characterized by the approximation-ratio and update time product $n^{0.5-o(1)}$ remains wide open.
In the recent STOC'22 workshop\footnote{\url{https://sites.google.com/view/stoc22-dynamic-workshop/}} on dynamic algorithms, designing better dynamic algorithms for approximating edit distance has been explicitly posted as an open question.

In this paper, we develop a dynamic $n^{o(1)}$-approximation algorithm with amortized expected update time of $n^{o(1)}$. In other words, we bring the approximation-ratio and update-time product down from $n^{0.5+o(1)}$ to $n^{o(1)}$, which is also the best possible with the current state of the art in static algorithms. We consider the most general fully-dynamic case which allows for all edits (insertions, deletions, and substitutions) at arbitrary locations within the input strings.

\begin{restatable*}{theorem}{thmmain}\label{thm:main}
    There exists a dynamic algorithm that, initialized with integer parameters $2 \le b \le n$,  maintains strings $X$ and $Y$ of lengths at most $n$
    subject to character edits and, upon each update, handled in $b^2 \cdot (\log n)^{\Oh(\log_b n)}$ amortized expected time, outputs an $\Oh(b \log_b n)$-factor approximation of the edit distance $\ED(X,Y)$. The answers are correct with high probability (at least $1-\frac{1}{n}$).
 \end{restatable*}

In particular, with \smash{$b=2^{\Theta(\sqrt{\log n \log \log n})}$}, our algorithm achieves a \smash{$2^{\Oh(\sqrt{\log n \log \log n})}$}-factor approximation with \smash{$2^{\Oh(\sqrt{\log n \log \log n})}$} update time.
Appropriate $b=\log^{\Theta(1/\epsilon)} n$, on the other hand, yields a $(\log{n})^{\Oh(1/\epsilon)}$-factor approximation with $\Oh(n^{\epsilon})$ update time for any constant $\epsilon > 0$.


At the heart of our solution is the static algorithm of Andoni, Krauthgamer, and Onak~\cite{AKO10}, who developed an elegant framework of precision sampling tree for edit distance approximation. One of our major contributions is to show how such a precision sampling tree can be maintained under arbitrary updates. This is especially challenging as a single insertion or deletion can affect the computation at all nodes of the precision sampling tree maintained in \cite{AKO10}. We believe dynamic precision tree maintenance can be an independent tool of interest for further development in dynamic sequence algorithms.
\subsection{Related Work}
Multiple works have considered dynamic data structures for various string problems. The goal here is to maintain one or more dynamic strings subject to various queries such as testing equality between strings~\cite{m97}, longest common prefix and lexicographic comparison queries \cite{m97,ABR00,GKKLS18}, pattern matching queries \cite{CKW20, CGKMU22}, text indexing queries \cite{GKKLS15, NIIBT20,KK22}, longest common substring \cite{CGP20}, longest increasing subsequence~\cite{CCP13,MS20, KS21, GJ21}, compressed representation of a text \cite{NIIBT20}, compact representation of repetitive fragments \cite{ABCK19}, and so on. 
Beyond character edits, some works on dynamic strings (including \cite{m97,ABR00,GKKLS18}) consider more general split and concatenate operations. These primitives can be used to implement not only edits but also cut-paste and copy-paste operations.

One of the major advancements in algorithms recently has been in the area of fully dynamic graph algorithms for many central graph problems such as spanning forest \cite{NS17}, minimum spanning forest \cite{ChuzhoyGLNPS20, NSW17}, transitive closure \cite{Sankowski04, BNS19}, strongly connected component \cite{KMS22}, shortest path \cite{DI04, BN19}, maximum matching \cite{Sankowski07, GP13}, maximal independent set \cite{AOSS18}, maximal matching \cite{BaswanaGS15, Solomon16}, set cover \cite{BHN19, AAGPS19}, vertex cover \cite{BK19} and various others. See the survey~\cite{HHS22} on recent advances in fully dynamic graph algorithms.     
\section{Preliminaries}\label{sec:prelims}

\paragraph{Notations}
The \emph{edit distance} $\ED(X, Y)$ of two strings $X$ and $Y$ is defined as the minimum number of character edits, i.e., \emph{insertions}, \emph{deletions}, and \emph{substitutions}, necessary to transform $X$ into $Y$. Throughout, we assume that the two strings are of length at most $n$, and we want to solve the $(k,K)$-gap edit distance problem, which asks to distinguish $\ED(X,Y)\le k$ from $\ED(X,Y)>K$. In the intermediate regime of $k < \ED(X,Y) \le K$, the algorithm is allowed to output any answer.

Denote by $|X|$ the length of $X$. The range $[i \dd j)$ denotes the set $\{i, i+1, \dots, j-1\}$. For an integer $i\in [0\dd |X|)$, let $X[i]$ denote the $i$-th character in $X$. For integers $0\le i \le j \le |X|$, denote by $X[i \dd j)$ the substring of $X[i]\cdot X[i+1]\cdots X[j-1]$.

Let $T$ be a rooted tree. A node of $T$ is called a \emph{leaf} if it has no children; otherwise, we call it an \emph{internal} node. The depth $d_v$ of a node $v$ of $T$ is the length (the number of edges) of the root-to-$v$ path, and the depth $d$ of the entire tree $T$ is the maximum depth $d_v$ of a node $v$ in~$T$.
We also denote the degree (the number of children) of a node $v$ by $b_v$ and the maximum degree by $b$.

\paragraph*{Precision Sampling Lemma.} 
The Precision Sampling Lemma is a generic technique to estimate a sum $\sum_i A_i$ from weak estimates of each $A_i$. 
Given a set of unknown numbers $A_1, \dots, A_n \in \mathbb{R}_{\ge 0}$,  suppose we have access to the values $\widetilde A_1, \dots, \widetilde A_n$ such that each $\widetilde A_i$ is an $(\alpha,\beta)$-approximation to $A_i$, that is,  $\frac1 \alpha A_i - \beta \leq \widetilde A_i \leq \alpha A_i + \beta$.
Then, $\sum_i \widetilde A_i$ is trivially an $(\alpha, n\beta)$-approximation to $\sum_i A_i$.
Andoni, Krauthgamer, and Onak~\cite{AKO10} gave the following elegant solution which shows that one can avoid the blow-up in the additive error with a good success probability if the estimates $\widetilde A_i$ instead have additive error $\beta \cdot u_i$ for a \emph{non-uniformly sampled} list of \emph{precisions} $u_1, \dots, u_n$.

\newcommand{\Dst}{\mathcal{D}}
\newcommand{\hDst}{\hat{\mathcal{D}}}
\begin{lemma}
[{\cite[Lemma 9]{BCFN22a}}]
\label{lem:precision-sampling}
Let $\epsilon, \delta > 0$. There is a distribution $\Dst(\epsilon, \delta)$ supported over the real interval $(0,1]$ with the following properties: 
\begin{description}
    \item[Accuracy:] Let $A_1, \dots, A_n \in \mathbb{R}_{\ge 0}$ and let $u_1, \dots, u_n \sim \Dst(\epsilon, \delta)$ be sampled independently. There is a recovery algorithm with the following guarantee: Given $(\alpha, \beta \cdot u_i)$-approximations~$\widetilde A_i$ of~$A_i$, the algorithm $((1+\epsilon) \cdot \alpha, \beta)$-approximates $\sum_i A_i$ in time $O(n \cdot \epsilon^{-2} \log(\delta^{-1}))$, with probability at least $1 - \delta$ and for any parameters $\alpha \geq 1$ and $\beta \geq 0$.
    \item[Efficiency:] Sample $u \sim \Dst(\epsilon, \delta)$. Then, for any $N \geq 1$, there is an event $E = E(u,N)$ happening with probability at least $1 - 1/N$, such that $\Exp[\,u^{-1} \mid E\,] \leq \Ot(\epsilon^{-2} \log(\delta^{-1}) \log N)$.
\end{description}
\end{lemma}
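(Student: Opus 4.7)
The plan is to exhibit a heavy-tailed distribution $\Dst(\epsilon,\delta)$ on $(0,1]$ together with a recovery algorithm, and then verify the accuracy and efficiency claims in turn. A natural candidate is to let $u$ have density proportional to $1/u$ on $[1/K,1]$ with $K=\Theta(\epsilon^{-2}\log(1/\delta))$, supplemented by $\Theta(\log(1/\delta))$ independent repetitions to drive the failure probability down. This shape has $\Pr[u\le t]=\log(tK)/\log K$, so a constant fraction of draws land near $1/K$ and furnish very tight additive precision for the corresponding $A_i$'s; yet $\E[1/u]=(K-1)/\log K=\Ot(\epsilon^{-2}\log(1/\delta))$, keeping the cost of inverting precisions affordable. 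The two design goals pull in opposite directions: goal (i), enough mass near $0$ so that individual estimates can be accurate, and goal (ii), $\E[1/u]$ remaining only polylogarithmic; the density $1/u$ is essentially the unique shape that balances them.

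For the recovery, declare index $i$ \emph{retained} when $\widetilde A_i\ge\theta\alpha\beta u_i$ for a threshold $\theta=\Theta(1/\epsilon)$, and output $\widehat S=\sum_{i\in R}\widetilde A_i$; take the median/trimmed mean over $\Theta(\log(1/\delta))$ independent runs to match the stated runtime. Accuracy splits cleanly. For $i\in R$, the hypothesis $\widetilde A_i\in[A_i/\alpha-\beta u_i,\,\alpha A_i+\beta u_i]$ combined with $\beta u_i\le\widetilde A_i/\theta=\epsilon\widetilde A_i$ forces $\widetilde A_i\in[(1-\epsilon)A_i/\alpha,\,(1+\epsilon)\alpha A_i]$, so the retained partial sum is itself a $(1+\epsilon)\alpha$-approximation of $\sum_{i\in R}A_i$. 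For $i\notin R$, symmetrically $A_i\le\alpha\widetilde A_i+\alpha\beta u_i\le\Oh(\alpha^2\theta\beta u_i)$, so the \emph{missed mass} satisfies $\sum_{i\notin R}A_i\le\Oh(\alpha^2\theta\beta)\sum_{i\notin R}u_i$. The remaining task is to bound $\sum_{i\notin R}u_i$ by $\Oh(\alpha^{-2}\theta^{-1})$ with probability $1-\delta$, so that the missed mass is $\Oh(\beta)$. Because $i\notin R$ forces $u_i\gtrsim A_i/(\alpha\beta\theta)$, each summand in the missed mass is capped by a value determined by $A_i$, and a Bernstein/Chernoff bound across the $n$ independent draws, combined with the variance estimate that the $1/u$ density supplies, gives the needed concentration once $K=\Theta(\epsilon^{-2}\log(1/\delta))$ and the repetitions are in place.

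The efficiency claim follows from a direct integration against the chosen density. Taking $E=\{u\ge 1/N\}$ gives $\Pr[E^c]=\log N/\log K\le 1/N$ (for $K$ sufficiently large relative to $N$), and $\E[u^{-1}\mid E]=\Oh(\log N\cdot K/\log K)=\Ot(\epsilon^{-2}\log(1/\delta)\log N)$, matching the stated bound; the $\Ot(\cdot)$ absorbs a $\log\log$ factor from $\log K$ in the denominator. The main obstacle is the missed-mass concentration: proving that the contribution of the discarded indices is only $\Oh(\beta)$ rather than $n\beta$ or $(\log n)\beta$ is exactly where one must exploit the heavy-tailed geometry of $\Dst$, since a uniform-on-$(0,1]$ distribution would fail at this step. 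Calibrating the threshold $\theta$, the scale $K$, and the number of repetitions so that the accuracy and efficiency constants match the statement is routine once this central concentration estimate is in hand.
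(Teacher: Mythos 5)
First, a point of reference: the paper does not prove \cref{lem:precision-sampling} at all --- it is imported verbatim from \cite[Lemma~9]{BCFN22a} (originally due to \cite{AKO10} and refined in \cite{AndoniKO11,Andoni17}), so there is no in-paper proof to compare against. Judged on its own merits, your proposal has a genuine gap at exactly the step you flag as ``the main obstacle'': the missed-mass bound is false for the distribution and recovery rule you propose. Any distribution supported on $[1/K,1]$ with $K=\Theta(\epsilon^{-2}\log(1/\delta))$ independent of $n$ forces $u_i\ge 1/K$ deterministically, so every index with $A_i\lesssim \beta/(\epsilon K)$ is \emph{never} retained (take $\alpha=1$, $\widetilde A_i=\max(0,A_i-\beta u_i)$, which is a legal approximation but always falls below the threshold $\theta\beta u_i$). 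With $n$ such indices, the missed mass is $\Theta(n\beta/(\epsilon K))$, which exceeds the allowed slack $\epsilon\sum_i A_i+\Oh(\beta)$ as soon as $n\gg \epsilon K$. This is not a concentration issue that Bernstein can rescue: in this instance $\Pr[i\notin R]=1$ for every $i$, so the \emph{expected} missed mass already equals the full sum. Your intermediate target $\sum_{i\notin R}u_i=\Oh(\alpha^{-2}\theta^{-1})$ is likewise unattainable, since $\sum_i u_i\ge n/K$. Raising $K$ to depend on $n$ repairs accuracy but destroys the efficiency bound $\Exp[u^{-1}]=\Ot(\epsilon^{-2}\log(1/\delta)\log N)$, so the two requirements genuinely cannot be met by a truncated log-uniform precision with a thresholded-sum recovery.

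The actual constructions in the literature differ in both ingredients. The precision is taken so that $u^{-1}$ has an \emph{unbounded} heavy tail (essentially $u_i^{-1}$ exponentially distributed, up to rescaling and $\Theta(\epsilon^{-2}\log(1/\delta))$-fold repetition), which is precisely why the Efficiency clause of the lemma is stated \emph{conditionally} on an event $E(u,N)$ for arbitrary $N$ rather than unconditionally --- a truncated distribution like yours would satisfy an unconditional bound, and indeed \cref{cor:precision-sampling} in the paper is exactly the device that converts the conditional guarantee into an unconditional one by truncating at the $\frac12\delta$-quantile. And the recovery is not a sum over retained indices but exploits the max-stability of exponentials: if $w_i\sim\Exp(1)$ i.i.d., then $\max_i A_i/w_i$ is distributed as $(\sum_i A_i)/w$ with $w\sim\Exp(1)$, so an order statistic of the values $\widetilde A_i/u_i$ estimates the sum; small $A_i$'s then contribute through the distributional identity rather than having to be individually detected, which is what defeats the counterexample above. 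Separately, your efficiency computation is backwards for your own density: $\Pr[u<1/N]=1-\log N/\log K$, not $\log N/\log K$, so your event $E$ has probability $\log N/\log K$, which is small rather than close to $1$ (though for a truncated density one could simply take $E$ to be the whole space, so this part is reparable).
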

The lemma was first shown in~\cite{AKO10} and later refined and simplified in~\cite{AndoniKO11, Andoni17}. 
To streamline the analysis of our dynamic algorithm, we use the following simple corollary.
\begin{restatable}{corollary}{precisionsampling}
\label{cor:precision-sampling}
    Let $\epsilon, \delta > 0$. There is a distribution $\hDst(\epsilon, \delta)$ supported over the real interval $(0,1]$ with the following properties:
    \begin{description}
        \item[Accuracy:] Let $A_1, \dots, A_n \in \mathbb{R}_{\ge 0}$ and let $u_1, \dots, u_n \sim \hDst(\epsilon, \delta)$ be sampled independently. There is a recovery algorithm $\textsc{Recover}(\widetilde A_1,\ldots,\widetilde A_n,u_1,\ldots,u_n)$ with the following guarantee: Given $(\alpha, \beta \cdot u_i)$-approximations $\widetilde A_i$ of~$A_i$, the algorithm $((1+\epsilon) \cdot \alpha, \beta)$-approximates $\sum_i A_i$ in time $O(n \cdot \epsilon^{-2} \log(\delta^{-1}))$, with probability at least $1 - n\delta$ and for any parameters $\alpha \geq 1$ and $\beta \geq 0$.
        \item[Efficiency:] If $u \sim \hDst(\epsilon, \delta)$, then $\Exp[\,u^{-1}\,] \leq \Ot(\epsilon^{-2} \log^2(\delta^{-1}))$.
    \end{description}
\end{restatable}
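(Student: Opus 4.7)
The plan is to derive $\hat{\mathcal{D}}$ from the distribution $\mathcal{D}$ of Lemma~\ref{lem:precision-sampling} by truncating away the low-probability event on which the conditional bound on $\Exp[u^{-1}\mid E]$ ceases to control $u^{-1}$. Concretely, I would invoke Lemma~\ref{lem:precision-sampling} with its efficiency parameter set to $N := 1/\delta$, obtaining an event $E = E(u, 1/\delta)$ with $\Pr[\overline{E}] \le 1/N = \delta$ and $\Exp[u^{-1} \mid E] \le \Ot(\epsilon^{-2} \log(\delta^{-1})\log N) = \Ot(\epsilon^{-2} \log^2(\delta^{-1}))$. Then $\hat{\mathcal{D}}(\epsilon, \delta)$ is the law of the sample $\hat u$ produced by drawing $u \sim \mathcal{D}(\epsilon, \delta)$ and setting $\hat u := u$ if $E$ holds, and $\hat u := 1$ otherwise; in either case $\hat u \in (0,1]$.

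The efficiency claim is then immediate from the law of total expectation: $\Exp[\hat u^{-1}] = \Pr[E]\cdot \Exp[u^{-1}\mid E] + \Pr[\overline{E}]\cdot 1 \le \Ot(\epsilon^{-2} \log^2(\delta^{-1})) + \delta$, which matches the stated bound.

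For accuracy, I would take the new algorithm $\textsc{Recover}(\widetilde A_1,\ldots,\widetilde A_n, \hat u_1,\ldots,\hat u_n)$ to be literally the recovery algorithm of Lemma~\ref{lem:precision-sampling} applied to the same inputs. To analyze it, I couple: jointly sample $u_1,\ldots,u_n \sim \mathcal{D}$ independently, let $\hat u_i$ be the truncation of $u_i$ as above, and set $F := \bigcap_{i=1}^n E(u_i, 1/\delta)$. By a union bound $\Pr[\overline F] \le n\delta$, and on $F$ we have $\hat u_i = u_i$ for every $i$, so the assumed $(\alpha, \beta\hat u_i)$-approximations $\widetilde A_i$ are in fact $(\alpha, \beta u_i)$-approximations. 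Hence Lemma~\ref{lem:precision-sampling} guarantees that the joint probability that $F$ holds and $\textsc{Recover}$ nevertheless fails is at most $\delta$. The overall failure probability is therefore at most $n\delta + \delta$, which is absorbed into $n\delta$ by a constant-factor tightening of $\delta$ (hidden inside the $\Ot$ in the efficiency bound).

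The main subtlety is the distributional mismatch: the recovery algorithm of Lemma~\ref{lem:precision-sampling} is analyzed for inputs whose precisions are drawn from $\mathcal{D}^n$, whereas in our setting they are drawn from $\hat{\mathcal{D}}^n$ and the approximation error is measured against the truncated $\hat u_i$ rather than the original $u_i$. The coupling resolves this because on the high-probability event $F$ the two joint laws agree coordinatewise, so the recovery algorithm sees exactly the configuration to which Lemma~\ref{lem:precision-sampling} applies, and the contribution of $\overline F$ is already accounted for in the accuracy budget.
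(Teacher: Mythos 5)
Your proof is correct and takes essentially the same approach as the paper: both start from the distribution $\Dst$ with a constant-factor tighter failure parameter, push it forward through a truncation map so that $u^{-1}$ is a.s.\ bounded, and establish accuracy by a union bound over the $n$ per-coordinate bad events plus the recovery-failure event. The one structural difference is the truncation rule. You set $\hat u := 1$ precisely on $\overline E$, so the truncation is tied to the efficiency event, and the efficiency bound is immediate: $\Exp[\hat u^{-1}] = \Pr[E]\,\Exp[u^{-1}\mid E] + \Pr[\overline E]\cdot 1$. The paper instead truncates at a fixed quantile $\tau$ (taking $\hat u = \max(u,\tau)$), which is decoupled from $E$, and must therefore also prove the auxiliary bound $\tau^{-1} \le \tfrac4\delta \Exp[u^{-1}\mid E]$ by showing $\Pr[F\cap E]\ge \tfrac14\delta$ for $F=\{u\le\tau\}$. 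Your variant sidesteps that extra step. Two small remarks: (i) the argument implicitly assumes the event $E(u,N)$ of Lemma~\ref{lem:precision-sampling} is a deterministic (measurable) function of $u$, which is needed for $\hat u$ to be a well-defined pushforward of $\Dst$ — this is indeed how that lemma is meant to be read; and (ii) the ``constant-factor tightening of $\delta$'' you invoke at the end should really be baked into the construction up front (e.g.\ draw $u\sim\Dst(\epsilon,\delta/2)$ and use $N=2/\delta$), which is exactly what the paper does by starting from $\Dst(\epsilon,\tfrac12\delta)$; with that in place, the $(n+1)$ factor is absorbed as you say, and the $\log^2(\delta^{-1})$ is unchanged up to the $\Ot$.
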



\section{Technical Overview}\label{sec:tech-overview}
\subsection{Step 0: Static AKO Algorithm \& Modifications (\cref{sec:ako-static})}

The starting point of our dynamic algorithm is the Andoni--Krauthgamer--Onak algorithm \cite{AKO10} that,
given strings $X,Y$ of length at most $n$ and a parameter $b\in [2\dd n]$,
solves the $(k,K)$-gap edit distance problem for $K/k = \Oh(b \log_b n)$ in $n \cdot (\log n)^{\Oh(\log_b n)}$ time.
They introduce an auxiliary string similarity measure \emph{tree distance} $\TD(X,Y)$ which approximates the edit distance $\ED(X,Y)$ and is defined over an underlying rooted tree $T$ called the \emph{precision sampling tree}.

\begin{restatable*}[Precision Sampling Tree]{definition}{defpstree}\label{def:ps-tree}
    A rooted tree $T$ with $n$ leaves is called a \emph{precision sampling tree} if each node $v$ in $T$ is labeled with a non-empty interval $I_v$ such that 
\begin{itemize}
\item for the root node $v_{root}$, we have $I_{v_{root}} = [0\dd n)$;
\item for every internal node $v$ with children $v_0, \dots, v_{b_v-1}$, $I_v$ is the concatenation of $I_{v_0}, \dots, I_{v_{b_v-1}}$;
\item for every leaf node $v$, we have $|I_v|=1$.
\end{itemize}
\end{restatable*}

\begin{restatable*}[Tree Distance]{definition}{deftd}\label{def:tree-distance}
Let $X, Y$ be strings, and let $T$ be a precision sampling tree with $|X|$ leaves. For every node $v$ in $T$ and every shift $s \in \mathbb{Z}$:
\begin{itemize}
\item If $v$ is a leaf with $I_v = \{i\}$, then 
\[\TD_{v, s}(X, Y)=\begin{cases} \ED(X[i],Y[i+s]) & \text{if }i+s\in [0\dd |Y|),\\
    1 & \text{otherwise.}\end{cases}\]
\item If $v$ is an internal node with children $v_0, \dots, v_{b_v-1}$, then
\begin{equation} \label{eq:tree-distance}
    \TD_{v, s}(X, Y) = \sum_{h=0}^{b_v-1} \min_{s_h \in \Int} \,(\TD_{v_h, s_h}(X, Y) + 2 \cdot |s - s_h| ).
\end{equation}
\end{itemize}
We write $\TD_T(X, Y) = \TD_{v_{root}, 0}(X, Y)$.
\end{restatable*}

The \emph{precision sampling tree} splits the computation into independent subproblems. The main idea is to approximate the tree distances $\TD_{v,s}(X,Y)$ of a block in one string (represented by a node $v$ in the tree) with several shifts $s$ of its corresponding block in the other string, and then carefully combine the results. Figure~\ref{fig:tree-dist} gives an illustration of the definition. 

\begin{figure}[tb!]
    \centering
    \begin{tikzpicture}[scale=0.95]
    \draw[very thick] (0,0) -- (17, 0) (0,.7) -- (17,.7) (0,3) -- (17, 3) (0,3.7) -- (17,3.7);
    \filldraw[draw=blue, fill=green!50!black, fill opacity = 0.5, text opacity=1] (2, 0) rectangle node{$v_0$} (5, 0.7);
    \draw[green!50!black] (2, 0.7) -- (2.7, 3) (5, 0.7) -- (5.7, 3);
    \filldraw[draw=blue, fill=green!50!black, fill opacity = 0.5, text opacity=1] (2.7, 3) rectangle (5.7, 3.7);
    \filldraw[draw=blue, fill=green!50!black, fill opacity = 0.5, text opacity=1] (3, 4.4) rectangle (6, 5.1);

    \filldraw[draw=blue, fill=orange, fill opacity = 0.5, text opacity=1] (5, 0) rectangle node{$v_1$} (8, 0.7);
    \draw[orange] (5, 0.7) -- (5.5, 3) (8, 0.7) -- (8.5, 3);
    \filldraw[draw=blue, fill=orange, fill opacity = 0.5, text opacity=1] (5.5, 3) rectangle (8.5, 3.7);
    \filldraw[draw=blue, fill=orange, fill opacity = 0.5, text opacity=1] (5.25, 4.4) rectangle (8.25, 5.1);

    \filldraw[draw=blue, fill=green!50!black, fill opacity = 0.5, text opacity=1] (8, 0) rectangle node{$v_2$} (11, 0.7);
    \draw[green!50!black] (8, 0.7) -- (10.8, 3) (11, 0.7) -- (13.8, 3);
    \filldraw[draw=blue, fill=green!50!black, fill opacity = 0.5, text opacity=1] (10.8, 3) rectangle (13.8, 3.7);
    \filldraw[draw=blue, fill=green!50!black, fill opacity = 0.5, text opacity=1] (10.5, 4.4) rectangle (13.5, 5.1);

    \filldraw[draw=blue, fill=orange, fill opacity = 0.5, text opacity=1] (11, 0) rectangle node{$v_3$} (14, 0.7);
    \draw[orange, opacity=0.5] (11, 0.7) -- (13.6, 3) (14, 0.7) -- (16.6, 3);
    \filldraw[draw=blue, fill=orange, fill opacity = 0.5, text opacity=1] (13.6, 3) rectangle (16.6, 3.7);
    \filldraw[draw=blue, fill=orange, fill opacity = 0.5, text opacity=1] (13.5, 4.4) rectangle (16.5, 5.1);

    \foreach \x in {2,5,8,11,14}{
        \draw[densely dashed, thick] (\x, .7) -- (\x+2, 3);
        \draw[densely dashed, thick, blue] (\x+2, 3) -- (\x+2, 3.7);
    }
    \draw[blue] (2, 0.7) -- (2, 3);
    \draw[thick, draw=blue,text=black, latex-latex] (2.7, 3.2) -- node[above=-3]{$s{-}s_0$} (4, 3.2);
    \draw[thick, draw=blue,text=black, latex-latex] (5.5, 3.2) -- node[above=-3]{$s{-}s_1$} (7, 3.2);
    \draw[thick, draw=blue,text=black, latex-latex] (10.8, 3.2) -- node[above=-3]{$s_2{-}s$} (10, 3.2);
    \draw[thick, draw=blue,text=black, latex-latex] (13.6, 3.2) -- node[above=-3]{$s_3{-}s$} (13, 3.2);

    \draw[latex-latex, thick] (2, -.2) node[below]{$\ell\vphantom{r \ell}$} -- (14, -.2) node[below]{$r\vphantom{r \ell}$};
    \draw[latex-latex, thick] (4, 2.8) node[below]{$\ell{+}s\vphantom{r \ell}$} -- (16, 2.8) node[below]{$r{+}s\vphantom{r \ell}$};
    \draw[latex-latex, thick] (2, 2.8)  -- node[below=-2]{$s$} (4, 2.8);

    \foreach \x in {0,0.75,1.5,...,17} {
        \draw[blue] (\x, 3.7) -- (\x, 4.3);
    }
    \draw (-0.2, 0.35) node[right]{$X[\ell\dd r)$};
    \draw (-0.2, 3.35) node[right]{$Y[\ell+s\dd r+s)$};
    \draw (-0.2, 4.75) node[right]{\footnotesize \parbox{\widthof{only specific shifts}}{only specific shifts\\ are allowed}};

\end{tikzpicture}
    \caption{The definition of tree distance $\TD_{v,s}(X,Y)$ at a node $v$ with $I_v = [\ell\dd r)$, shift $s$, and children $v_0$, $v_1$, $v_2$, and $v_3$. The dashed lines show the shift given by $s$. The blue lines show the relative shift $|s-s_h|$ at each child. Shift sparsification is shown on the top, which only allows a subset of the shifts.}
    \label{fig:tree-dist}
\end{figure}
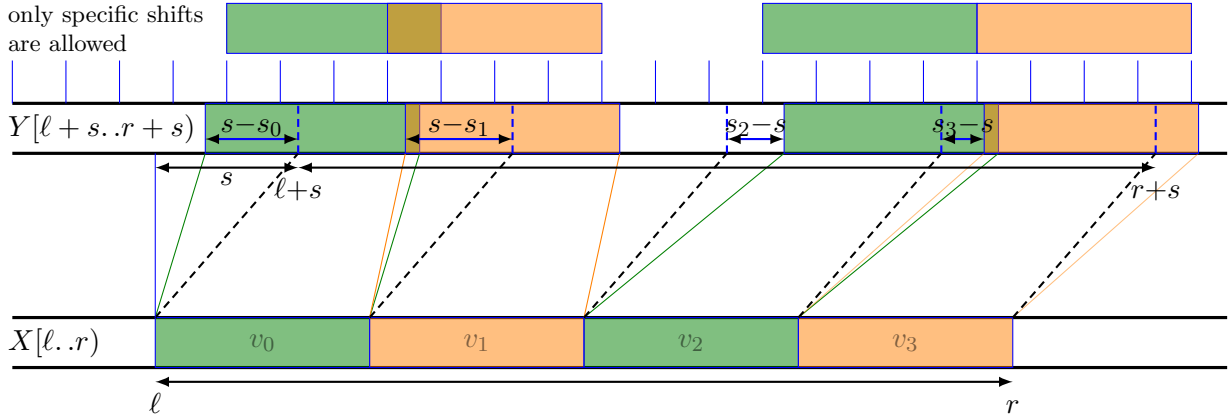

The following lemma establishes the relation (and the usefulness) of the tree distance measure with the edit distance. 

\begin{restatable*}[{\cite[Lemma 6]{BCFN22a}}]{lemma}{lemequivedtd}\label{lem:equiv-ed-td}
Let~$X, Y$ be strings and let~$T$ be a precision sampling tree with $|X|$ leaves, degree at most $b$, and depth at most $d$. 
Then, $\ED(X, Y) \leq \TD_T(X, Y) \leq 2 bd \cdot \ED(X, Y)$.
\end{restatable*}

In particular, if the precision sampling tree is balanced, its depth is bounded by $\Oh(\log_b n)$, so the above lemma shows that $\TD_T(X, Y)$ 
is an $\Oh(b\log_b n)$-approximation of $\ED(X,Y)$.

Since we are only interested in the $(k,K)$-gap edit distance problem for $K/k=\Theta(b \log_b n)$, we only need to consider the shifts $s$ in a range of $[-K \dd K]$ and also focus on computing the capped tree distances $\TD_{v,s}^{\le K}(X,Y) = \min(\TD_{v,s}(X,Y), K)$. Approximating the {tree distance} problem boils down to the following.

For each node $v$ of the precision sampling tree $T$, given appropriate parameters $\alpha_v\ge 1$ and $\beta_v\ge 0$,
for every shift $s\in [-K\dd K]$, compute an $(\alpha_v,\beta_v)$-approximation  $\Delta^{\le K}_{v,s}(X,Y)$ of $\TD^{\le K}_{v,s}(X,Y)$.
In other words, the value $\Delta^{\le K}_{v,s}(X,Y)$ needs to satisfy the following property:
\[
    \tfrac1{\alpha_v} \TD_{v,s}^{\le K}(X,Y) - \beta_v \leq \Delta^{\le K}_{v,s}(X,Y) \leq \alpha_v \TD^{\le K}_{v, s}(X, Y) + \beta_v.
\]

The AKO algorithm evaluates the following expression approximately (see \cref{def:tree-distance}):
\begin{equation}\label{eq:noshiftsparse}
    \TD^{\le K}_{v, s}(X, Y) = \min\Bigg(\sum_{h=0}^{b_v-1} A_{h, s},\,K\Bigg), \qquad A_{h, s} = \min_{s_h \in [-K \dd K]} \left(\TD^{\le K}_{v_h, s_h}(X, Y) + 2 \cdot |s - s_h|\right).
\end{equation}
The algorithm processes the precision sampling tree $T$ in a bottom-up fashion so that $(\alpha_{v_h}, \beta_{v_h})$-approximations $\Delta^{\le K}_{v_h,s}(X,Y)$ of the values $\TD^{\le K}_{v, s}(X, Y)$ are available when the algorithm processes~$v$.
These values are used to obtain \[\tilde{A}_{h,s}= \min_{s_h \in [-K \dd K]} \left(\Delta^{\le K}_{v_h, s_h}(X, Y) + 2 \cdot |s - s_h|\right).\]  
Hence, $\tilde{A}_{h,s}$ is also an $(\alpha_{v_h},\beta_{v,h})$-approximation of $A_{h,s}$.
By setting $\beta_{v_h}=\beta_{v}\cdot u_{v_h}$ and $\alpha_{v_h}=\alpha_v/(1+\epsilon)$, and using the Precision Sampling Lemma (\cref{lem:precision-sampling}) with appropriate parameters, they obtain $\Delta^{\le K}_{v,s}(X,Y)$ as an $(\alpha_v, \beta_v)$-approximation of $\TD^{\le K}_{v, s}(X, Y)$. In particular, this leads to an efficient computation of $(10, 0.001K)$-approximation $\Delta^{\le K}_{v_{root},0}(X,Y)$  of $\TD^{\le K}_{T}(X,Y)$.
To distinguish $\ED(X,Y)\le k$ from $\ED(X,Y)>K$, it suffices to check whether $\Delta^{\le K}_{v_{root},0}(X,Y)\le 0.099K$.

\paragraph{Modification to the Static Algorithm}
Unfortunately, the off-the-shelf AKO algorithm is not sufficient for us to build on it in the dynamic setting. First, we need to strengthen the definition of capped tree distance as follows.

\begin{restatable*}[Capped Tree Distance]{definition}{defcaptd}\label{def:captd}
    For strings $X,Y$, precision sampling tree $T$ with $|X|$ leaves, and an integer threshold $K \geq 0$, define \emph{$K$-capped tree distance} for each node $v \in T$, and shift $s \in [-K \dd K]$ as follows:
    \begin{itemize}
        \item If $v$ is a leaf with $I_v=\{i\}$, then $\TD^{\le K}_{v,s}(X, Y)=\min{(\ED(X[i], Y[i+s]),K-|s|)}$, assuming that $\ED(X[i],Y[i+s])=1$ if $i+s\notin [0\dd |Y|)$.
        \item If $v$ is an internal nodes with children $v_0,\ldots, v_{b_v-1}$, then 
        \begin{equation} \label{eq:capped-td}
        \TD^{\le K}_{v,s}(X, Y) = \min\left(\sum_{h=0}^{b_v-1} \,\min_{s_h \in [-K \dd K]}\, \left( \TD_{v_h, s_h}^{\le K}(X, Y) + 2 \cdot |s - s_h| \right),\, K-|s| \right).
    \end{equation}
    \end{itemize}
\end{restatable*}
By capping the distance to $K-|s|$ rather than $K$, we can prove a stronger relation between the capped tree distance and original tree distance at all internal nodes which previously was true only at the root and only for $s=0$.

\begin{restatable*}[Relationship between Capped and Original Tree Distance]{lemma}{lemcaptd}
    \label{lem:captd}
    For every node $v$ and shift $s\in [-K\dd K]$, we have 
    $\TD_{v,s}^{\le K}(X,Y) = \min(\TD_{v,s}(X,Y), K-|s|)$.
    In particular, $\TD^{\le K}_T(X,Y)= \TD_{v_{root},0}^{\le K}(X,Y) = \min(\TD_T(X,Y),K)$.
\end{restatable*}

Next, to reduce the number of computations per node while evaluating \cref{eq:noshiftsparse}, we consider only \emph{active} nodes, where a node $v$ is called active if $|I_v| > \beta_v$; otherwise, returning $\Delta^{\le K}_{v,s}(X,Y)=0$ for all shifts $s \in [-K \dd K]$ is a valid $(\alpha_v,\beta_v)$-approximation. Therefore, all \emph{inactive} nodes can be pruned from further computation. Moreover, on each active node, we only consider a subset $S_v \subseteq [-K \dd K]$ of shifts selected uniformly at random with rate $\min\left(1,\, 48\beta_{v}^{-1}(b_v+1)\ln{n}\right)$. 

By \cref{def:tree-distance}, any two shifts $s,s' \in [-K\dd K]$ satisfy
\begin{equation}
    \left|\TD_{v,s}^{\le K}(X,Y)-\TD_{v,s'}^{\le K}(X,Y)\right|\leq 2(b_{v_h}+1)|s-s'|
\end{equation}
Consequently, by definition of $S_{v}$, for any shift $s \in [-K \dd K]$, the closest shift 
$\tilde{s} \in S_{v}$ satisfies 
\[\Pr\left[|s-\widetilde s| > \frac{\beta_{v}}{8(b_{v}+1)}\right] \le \left(1-\frac{48(b_{v}+1)\ln n}{\beta_{v}}\right)^{\frac{\beta_{v}}{8(b_{v}+1)}} \le \exp\left(-\frac{48(b_{v}+1)\ln n \cdot \beta_{v}}{\beta_{v}\cdot 8(b_{v}+1)}\right)\le   \frac{1}{n^6}.\]

Utilizing the above, we can show the shift sparsification does not add too much additional error in the computation of $\Delta^{\le K}_{v,s}(X,Y)$. Shift sparsification has been considered before (at a much lower rate and deterministically) in \cite{BCFN22b}, whereas pruning was used in \cite{AKO10, BCFN22a}. 
However, in the static setting, simultaneous pruning and sparsification is not useful (see \cref{remark:staticvsdyn}). 
On the other hand, it is crucial that we utilize both in the dynamic setting. 
Therefore, our analysis in the static setting (building up to the dynamic setting) needs to be different. 
The running time bound follows from \cref{lem:exp-precision}, which proves that  \[\Exp[\beta_v^{-1}] = \tfrac1K (\log n)^{\Oh(d_v)} = \tfrac1K \cdot (\log n)^{\Oh(\log_b n )}.\]
The time taken by \cref{alg:ako} for execution at node $v$ is bounded by $\Ot(\sum_{h=0}^{b_v-1}(|S_v|+|S_{v_h}|))$ (\cref{lem:ako-time}).
 Since $\Exp[|S_w|]\le 48K (b_w+1) \ln n\cdot \Exp[\beta_w^{-1}]=b \cdot (\log n)^{\Oh(\log_b n)}$ holds by \cref{lem:exp-precision} for every node $w$, this is $b^2\cdot (\log n)^{\Oh(\log_b n)}$ per node. Across all nodes, the expected running time of \cref{alg:ako} is $nb^2 \cdot (\log n)^{\Oh(\log_b n)}$.
Moreover, the approximation guarantees hold with high probability ($1-1/\poly(n)$).

\subsection{Step 1: Substitutions in both X and Y (\cref{sec:subXsubY})}
With the modifications in the static algorithm in place, we now consider a simple dynamic scenario for which already no better result exists in the literature. 
Namely, we allow for updates to be substitutions anywhere within the maintained strings $X$ and $Y$. 
We can assume that a precision sampling tree $T(X,Y)$ has already been built, and the goal is to maintain it so that, for all $v$ and $s \in S_v$, the value $\Delta^{\le K}_{v,s}(X,Y)$ remains a valid $(\alpha_v, \beta_v)$-approximation of $\TD_{v,s}^{\le K}(X,Y)$ at all times.

An advantage of allowing only substitutions is that structurally the precision sampling tree does not change, that is, at every node $v$, the interval $I_v$ remains the same.
Upon a substitution, the algorithm recomputes all values $\Delta^{\le K}_{v,s}(X,Y)$ for which $\TD^{\le K}_{v,s}(X,Y)$ might have potentially changed.
Note that considering all shifts $s \in [-K\dd K]$ even on a single node is not possible as $K$ can be large. 
Thus, shift sparsification is necessary.
On the other hand, considering even a single shift per node for which the $\TD^{\le K}_{v,s}(X,Y)$ value can be affected could be costly for some levels of the precision sampling tree (e.g., the leaf level). We show the number of active nodes on those levels is small and, overall, few values $\Delta^{\le K}_{v,s}(X,Y)$ need updating.

The following notion of \emph{relevance} characterizes the shifts for which $\TD^{\le K}_{v,s}(X,Y)$ might have changed: 
We will show that a substitution can affect $\TD^{\le K}_{v,s}(X,Y)$ only if $s$ is relevant.
If $v$ is not a leaf, then, in order to recompute $\Delta^{\le K}_{v,s}(X,Y)$, we will need to access some values $\Delta_{v_h,s_h}^{\le K}(X,Y)$
at the children $v_h$ of $v$. To capture the scope of the necessary shifts $s_h$, we introduce the notion of \emph{quasi-relevance}.

\begin{restatable*}{definition}{defrelevance}\label{def:relevant}
    Consider a node $v\in T$. We define the sets $\rsX{i}{v},\rsY{i}{v}\sub [-K\dd K]$ of shifts \emph{relevant} for substitutions at $X[i]$ and $Y[i]$, respectively, as follows, denoting $K_v := \min(K, |I_v|)$:
    \begin{align*}
        \rsX{i}{v} &= \{s\in [-K\dd K] : i \in I_v\},\\
        \rsY{i}{v} &= \{s\in [-K\dd K] : [i-s-K_v\dd i-s+K_v]\cap I_v \ne \emptyset\}.
    \end{align*}
    Moreover, we say that a shift  is \emph{quasi-relevant} with respect to a substitution if it is at distance at most $K_v$ from a relevant shift.
    Formally, \begin{align*}
        \qsX{i}{v} &= \{s\in [-K\dd K] : \exists_{s'\in \rsX{i}{v}}\; |s-s'|\le K_v\},\\
        \qsY{i}{v} &= \{s\in [-K\dd K] : \exists_{s'\in \rsY{i}{v}}\; |s-s'|\le K_v\}.
    \end{align*}
    Furthermore, we say that a node $v$ is \emph{in scope} of a substitution if the corresponding set of relevant shifts is non-empty.
\end{restatable*}

\begin{figure}[t]
    \centering
    \begin{tikzpicture}[scale=0.8]
    \draw[fill=black!50] (1, 3) rectangle (8, 3.5);
    \draw(4.5, 3) -- (1.5, 2.5) (4.5, 3) -- (4.5, 2.5) (4.5,3) -- (7.5, 2.5);

    \foreach \x in {0,1,2}{
        \draw[fill=orange!50] (1/3+3*\x, 2) rectangle (8/3+3*\x, 2.5);
        \draw(1.5+3*\x, 2) -- (0.5+3*\x, 1.5) (1.5+3*\x, 2) -- (1.5+3*\x, 1.5) (1.5+3*\x,2) -- (2.5+3*\x, 1.5);
    }
    \foreach \x in {1}{
        \draw[fill=black!50] (1/3+3*\x, 2) rectangle (8/3+3*\x, 2.5);
    }
    \foreach \x in {0,1,...,8}{
        \draw (1/9+\x, 1) rectangle (8/9+\x, 1.5);
        \draw(0.5+\x, 1) -- (1/6+\x, 0.5) (0.5+\x, 1) -- (0.5+\x, 0.5) (0.5+\x,1) -- (5/6+\x, 0.5);
    }
    \foreach \x in {3,5}{
        \draw[fill=orange!50] (1/9+\x, 1) rectangle (8/9+\x, 1.5);
    }
    \foreach \x in {4}{
        \draw[fill=black!50] (1/9+\x, 1) rectangle (8/9+\x, 1.5);
    }
    \foreach \x in {0,1,...,26}{
        \draw (1/27+\x/3, 0) rectangle (8/27+\x/3, 0.5);
        \draw(1/6+\x/3, 0) -- (1/18+\x/3, -0.5) (1/6+\x/3, 0) -- (1/6+\x/3, -0.5) (1/6+\x/3,0) -- (5/18+\x/3, -0.5);
    }
    \foreach \x in {13,14}{
        \draw[fill=orange!50] (1/27+\x/3, 0) rectangle (8/27+\x/3, 0.5);
    }
    \foreach \x in {12}{
        \draw[fill=black!50] (1/27+\x/3, 0) rectangle (8/27+\x/3, 0.5);
    }

    \foreach \x in {0,1,...,80}{
        \draw (1/81+\x/9, -1) rectangle (8/81+\x/9, -0.5);
    }
    \foreach \x in {36,38}{
        \draw[fill=orange!50] (1/81+\x/9, -1) rectangle (8/81+\x/9, -0.5);
    }
    \foreach \x in {37}{
        \draw[fill=black!50] (1/81+\x/9, -1) rectangle (8/81+\x/9, -0.5);
    }

    \draw[right] (0, 3.25) node{$X$};
    \draw (4.5/81+37/9, 3.5) node[above]{$X[i]$};

    \filldraw[yellow, fill opacity=0.3] (1/81+37/9,3.5) rectangle (8/81+37/9, -1);

    \begin{scope}[xshift=11cm]

        \draw (1, 4) rectangle (8, 4.5);

        \draw[fill=black!50] (1, 3) rectangle (8, 3.5);
        \draw(4.5, 3) -- (1.5, 2.5) (4.5, 3) -- (4.5, 2.5) (4.5,3) -- (7.5, 2.5);
    
        \foreach \x in {0,1,2}{
            \draw[fill=orange!50] (1/3+3*\x, 2) rectangle (8/3+3*\x, 2.5);
            \draw(1.5+3*\x, 2) -- (0.5+3*\x, 1.5) (1.5+3*\x, 2) -- (1.5+3*\x, 1.5) (1.5+3*\x,2) -- (2.5+3*\x, 1.5);
        }
        \foreach \x in {1}{
            \draw[fill=black!50] (1/3+3*\x, 2) rectangle (8/3+3*\x, 2.5);
        }
        \foreach \x in {0,1,...,8}{
            \draw (1/9+\x, 1) rectangle (8/9+\x, 1.5);
            \draw(0.5+\x, 1) -- (1/6+\x, 0.5) (0.5+\x, 1) -- (0.5+\x, 0.5) (0.5+\x,1) -- (5/6+\x, 0.5);
        }
        \foreach \x in {3,4,5}{
            \draw[fill=black!50] (1/9+\x, 1) rectangle (8/9+\x, 1.5);
        }
        \foreach \x in {0,1,...,26}{
            \draw (1/27+\x/3, 0) rectangle (8/27+\x/3, 0.5);
            \draw(1/6+\x/3, 0) -- (1/18+\x/3, -0.5) (1/6+\x/3, 0) -- (1/6+\x/3, -0.5) (1/6+\x/3,0) -- (5/18+\x/3, -0.5);
        }
        \foreach \x in {16,17}{
            \draw[fill=orange!50] (1/27+\x/3, 0) rectangle (8/27+\x/3, 0.5);
        }
        \foreach \x in {9,10,...,15}{
            \draw[fill=black!50] (1/27+\x/3, 0) rectangle (8/27+\x/3, 0.5);
        }
    
        \foreach \x in {0,1,...,80}{
            \draw (1/81+\x/9, -1) rectangle (8/81+\x/9, -0.5);
        }
        \foreach \x in {27,47}{
            \draw[fill=orange!50] (1/81+\x/9, -1) rectangle (8/81+\x/9, -0.5);
        }
        \foreach \x in {28,29,...,46}{
            \draw[fill=black!50] (1/81+\x/9, -1) rectangle (8/81+\x/9, -0.5);
        }
    
        \draw[left] (9, 3.25) node{$X$};
        \draw[left] (9, 4.25) node{$Y$};

        \draw (4.5/81+37/9, 4.5) node[above]{$Y[i]$};
    
        \filldraw[yellow, fill opacity=0.3] (1/81+37/9,4) -- (1/81+37/9,4.5) -- (8/81+37/9, 4.5) -- (8/81+37/9, 4) -- (8/81+37/9+1, 3.5) -- (8/81+37/9+1, -1) --(1/81+37/9-1, -1) -- (1/81+37/9-1, 3.5) -- cycle;

        \draw[thick, latex-latex] (8/81+37/9+1, -1.2) --node[below]{$\Oh(K)$} (1/81+37/9-1, -1.2);
    
    \end{scope}

\end{tikzpicture}
    \caption{(Left) Substitution in $X$ at $X[i]$. The relevant nodes are shown in dark forming a root-to-leaf path. (Right) Substitution in $Y$ at $Y[i]$. A node $v$ is relevant if there exists $s\in [-K\dd K]$ such that $I_v \cap [i-s-K_v\dd i-s+K_v]\ne \emptyset$. The relevant nodes are shown in dark within a contiguous yellow band of width $\Oh(k)$. Computation at relevant nodes may need to access values stored at orange nodes. The shifts at orange nodes that are needed for computation at relevant nodes are called quasi-relevant shifts.}
    \label{fig:2}
\end{figure}
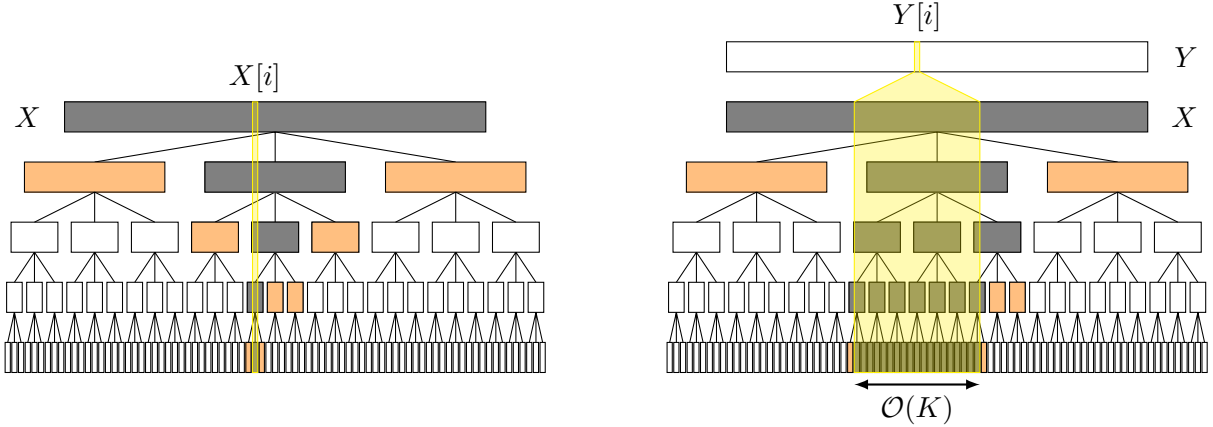

Figure~\ref{fig:2} illustrates the nodes in scope. For a substitution in $X$, they form a single root-to-leaf path, whereas for a substitution in $Y$ using \cref{def:relevant}, either the number of nodes in scope is $\Oh(1)$ per level, or the total interval spanned by such nodes per level is $\Oh(K)$. The orange nodes shown in Figure~\ref{fig:2} are children of nodes in scope. They are not in scope themselves, but we may need to access need to access $\Delta_{v',s'}^{\le K}(X,Y)$ stored at such a node $v'$ (colored orange) for shifts $s'$ if $s'$ is a quasi-relevant shift at its parent. 

The following locality lemma shows that $\TD_{v,s}^{\le K}(X,Y)$ changes only if $s$ is a relevant shift.

\begin{restatable*}{lemma}{lemlocality}\label{lem:locality2}
    Consider a node $v$ with $I_v = [i_v\dd j_v]$ of a precision sampling tree $T$ with $n$ leaves.
    Moreover, consider strings $X,X'\in \Sigma^n$ and $Y,Y'\in \Sigma^*$,
    as well as a shift $s\in [-K\dd K]$ such that $\TD^{\le K}_{v,s}(X,Y) = t$.
    If $X[i_v\dd j_v]=X'[i_v\dd j_v]$ and $Y[i_v+s-t\dd j_v+s+t]=Y'[i_v+s-t\dd j_v+s+t]$,
    then $\TD^{\le K}_{v,s}(X',Y')=\TD^{\le K}_{v,s}(X,Y)$.
\end{restatable*}

\begin{restatable*}{corollary}{corlocality}\label{cor:locality}
Consider a node $v\in T$ and a shift $s\in [-K\dd K]$.
If $s\notin \rsY{i}{v}$ is not relevant for a substitution at $Y[i]$, then the value $\TD^{\le K}_{v,s}(X,Y)$ is not affected by this substitution. Similarly, if $s\notin \rsX{i}{v}$ is not relevant for a substitution at $X[i]$, then the value $\TD^{\le K}_{v,s}(X,Y)$ is not affected by this substitution. 
\end{restatable*}

The update time bound is established by (i) showing that the total number of quasi-relevant shifts is $\Oh(K\cdot \log_b n)$ (\cref{lem:subtree}), and (ii) bounding the number of active nodes in scope (\cref{lem:active}). The probability of $v$ being active does not exceed $\frac{|I_v|}{K}\cdot (\log{n})^{\Oh(\log_b{n})} \leq 
\frac{K_v}{K}\cdot (\log{n})^{\Oh(\log_b{n})}$ (\cref{eq:active}).
Note that each node $v$ in scope of a substitution at $Y[i]$ contributes at least $|\qsY{i}{v}|\ge K_v$ quasi-relevant shifts (from the definition of quasi-relevance).
Consequently, for such node, the probability of being active does not exceed $\frac{|\qsY{i}{v}|}{K}\cdot (\log{n})^{\Oh(\log_b{n})}$. Therefore, using \cref{lem:subtree}, the number of active nodes in scope is bounded by $(\log{n})^{\Oh(\log_b{n})}$.
The running time of our procedure handling a substitution at $Y[i]$ can be shown to be $\Ot(\sum_{h=0}^{b_v-1}(|\rsY{i}{v}\cap S_v|+|\qsY{i}{v}\cap S_{v_h}|))$,
which is $b^2\cdot \frac{|\qsY{i}{v}|}{K}\cdot (\log{n})^{\Oh(\log_b{n})}$ per node $v$ in scope (\cref{lem:subY}). Now utilizing the bound on quasi-relevant shifts (\cref{lem:subtree}), and the number of active nodes in scope (\cref{lem:active}), we get the expected running time to be $b^2(\log{n})^{\Oh(\log_b{n})}$ in total.

\subsection{Step 2: Substitutions in X and Edits in Y (\cref{sec:subXedY})}

We now move to the considerably more challenging case of handling insertions and deletions. Note that, as long as the updates in $X$ are only substitutions, the structure of the precision sampling tree does not change following the discussion of the preceding section. Therefore, while we allow all possible edits (insertions, deletions, and substitutions) in $Y$, we still restrict to only substitutions in $X$. Even then, as Figure~\ref{fig:3} illustrates, a single deletion (or insertion) in $Y$ can affect all the nodes in $T(X,Y)$. Therefore, we would need to update the $\Delta^{\le K}_{v,s}(X,Y)$ values for all $v \in T(X,Y)$ and $s \in S_v$ (recall that $S_v$ are the allowed shifts at $v$ after sparsification), which is equivalent to building the precision sampling tree from scratch. In light of that, it seems almost hopeless that a precision sampling tree $T(X,Y)$ can be updated efficiently when edits are allowed even in one string.

\begin{figure}[H]
    \centering
    \begin{tikzpicture}[scale=0.8]
    \foreach \x in {0,0.4,...,17}{
        \fill[black!30] (\x,3) rectangle (\x+0.2,3.7);
    }

    \fill[red!30] (0,3) rectangle (0.2,3.7);
    \draw[red](0.1, 3.35) node{$\times$};

    \foreach \x in {0,0.4,...,16.6}{
        \fill[black!30] (\x+.2,3.9) rectangle (\x+0.4,4.6);
    }
    \draw[very thick] (0,0) -- (17, 0) (0,.7) -- (17,.7) (0,3) -- (17, 3) (0,3.7) -- (17,3.7) (0, 3.9) -- (16.8, 3.9) (0, 4.6) -- (16.8, 4.6);
    \filldraw[draw=blue, fill=green!50!black, fill opacity = 0.5, text opacity=1] (2, 0) rectangle node{$v_0$} (5, 0.7);
    \filldraw[draw=blue, fill=orange, fill opacity = 0.5, text opacity=1] (5, 0) rectangle node{$v_1$} (8, 0.7);
    \filldraw[draw=blue, fill=green!50!black, fill opacity = 0.5, text opacity=1] (8, 0) rectangle node{$v_2$} (11, 0.7);
    \filldraw[draw=blue, fill=orange, fill opacity = 0.5, text opacity=1] (11, 0) rectangle node{$v_3$} (14, 0.7);

    \foreach \x in {2,5,8,11,14}{
        \draw[densely dashed, very thick] (\x, .7) -- (\x+2, 3);
    }
    \draw[densely dotted, very thick] (0.2, 3.7) -- (0, 3.9);
    \draw[densely dotted, very thick] (17, 3.7) -- (16.8, 3.9);

    \filldraw[draw=blue, fill=green!50!black, fill opacity = 0.25, text opacity=1] (4, 3) rectangle (7, 3.7) (4, 3.9) rectangle (7, 4.6);
    \filldraw[draw=blue, fill=orange, fill opacity = 0.25, text opacity=1] (7, 3) rectangle (10, 3.7) (7, 3.9) rectangle (10, 4.6);
    \filldraw[draw=blue, fill=green!50!black, fill opacity = 0.25, text opacity=1] (10, 3) rectangle (13, 3.7) (10, 3.9) rectangle (13, 4.6);
    \filldraw[draw=blue, fill=orange, fill opacity = 0.25, text opacity=1] (13, 3) rectangle (16, 3.7) (13, 3.9) rectangle (16, 4.6);

    \draw[right] (0, 2.5) node{\footnotesize \parbox{\widthof{$Y[0]$ gets}}{$Y[0]$ gets\\deleted}};

    \draw[right] (17, 0.35) node{$X$};
    \draw[right] (17, 3.35) node{$Y$};
    \draw[right] (17, 4.25) node{$Y$ \footnotesize{after deletion}};

\end{tikzpicture}
    \caption{A single insertion or deletion can affect all the nodes in the precision sampling tree. For example, when $Y[0]$ is deleted, then $\Delta^{\le K}_{v,s}(X,Y)$ may need to be updated for all nodes. In the figure, each of $v_0$, $v_1$, $v_2$, and $v_3$ is being compared with different substrings of $Y$ after the deletion.}
    \label{fig:3}
\end{figure}
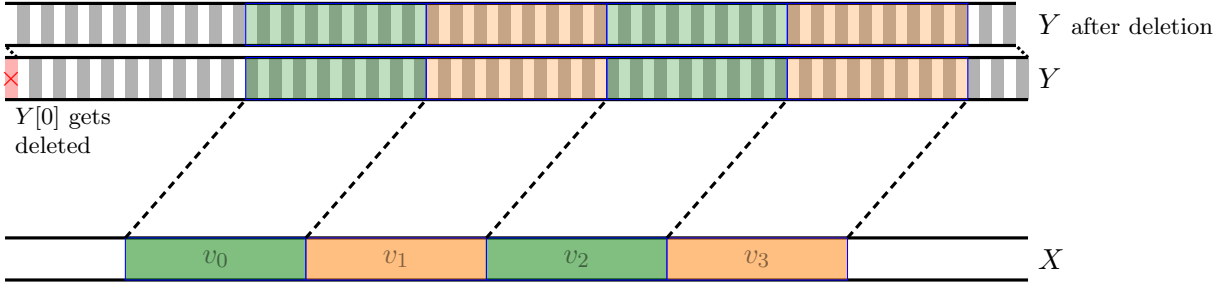
We bring in several major ideas here. Suppose that $\ED(X,Y) \leq k$.
Removing at most $k$ characters corresponding to the edits according to the optimal alignment, we can decompose $X$ and $Y$ into at most $k+1$ disjoint parts that are matched perfectly. We use this observation to solve the following {\sc Dynamic Pattern Matching} problem building upon earlier work on dynamic internal pattern matching~\cite{GKKLS18,CKW20} and dynamic suffix arrays~\cite{KK22}.

\begin{restatable*}[{\sc Dynamic Pattern Matching}]{lemma}{lemipm}\label{lem:ipm}
    There is a dynamic algorithm that maintains two strings $X$ and $Y$ subject to edits and, given a fragment $Y[i\dd j)$ and an integer $k\ge 0$,
    either:
    \begin{enumerate}
      \item returns a shift $s\in [-2k\dd 2k]$ such that $Y[i\dd j)=X[i+s\dd j+s)$, or
      \item returns NO, indicating that there is no shift $s\in [-k\dd k]$ such that $Y[i\dd j)=X[i+s\dd j+s)$.
    \end{enumerate}
    The algorithm supports updates in $\Ot(1)$ amortized time and queries in $\Ot(1)$ time.
\end{restatable*}

We use \cref{lem:ipm} iteratively to decompose $Y$ into $2k+1$ parts satisfying the following property.

\begin{restatable*}[{\sc Dynamic Decomposition}]{proposition}{prpipm}\label{prp:ipm}
    There is a dynamic algorithm that maintains two strings $X$ and $Y$ subject to edits and, upon a query, given an integer $k\in [0\dd |Y|]$,
    either reports that $\ED(X,Y)>k$
    or returns a partition of $Y=\bigodot_{i=0}^{2k} Y[y_i\dd y_{i+1})$ and a sequence $(s_0,s_2,\ldots,s_{2k})\in [-2k\dd 2k]^{k+1}$ such that $Y[y_i\dd y_{i+1})=X[y_i+s_i\dd y_{i+1}+s_i)$ if $i$ is even and $|Y[y_i\dd y_{i+1})|=1$ if $i$ is odd.
    The algorithm supports updates in $\Ot(1)$ amortized time and queries in $\Ot(k)$ time.
  \end{restatable*}

\begin{figure}[h]
    \centering
    \begin{minipage}{.48\textwidth}
        \centering
        \begin{tikzpicture}[scale=.9]
    \draw[white] (0,-3) -- (8, 3);
    \draw (0,0) rectangle (8, 0.5);
    \draw (5, 1) node {$Y$};

    \foreach [count=\i from 0] \x in {0,2,2.3,3.7,4,6.5,6.8,8}{
        \draw (\x,0.4) node[above]{\footnotesize $y_\i$};
    }
    \foreach \x in {2,3.7,6.5}{
        \draw[fill=orange] (\x, 0) rectangle (\x+.3, 0.5);
    }
    \fill[black, opacity=0.25] (0,0) rectangle (2,0.5) (2.3,0) rectangle (3.7,0.5) (4,0) rectangle (6.5,0.5) (6.8,0) rectangle (8,0.5);
    \draw[pattern={north west lines}] (0,0) rectangle (2,0.5);
    \draw[pattern={horizontal lines}] (2.3,0) rectangle (3.7,0.5);
    \draw[pattern={north east lines}] (4,0) rectangle (6.5,0.5);
    \draw[pattern={vertical lines}] (6.8,0) rectangle (8,0.5);

    \draw (0, -1) rectangle (8, -1.5);

    \fill[black, opacity=0.25] (0.5,-1) rectangle (2.5,-1.5);
    \fill[black, opacity=0.25] (1.8,-1) rectangle (3.2,-1.5);
    \fill[black, opacity=0.25] (4.1,-1) rectangle (6.6,-1.5);
    \fill[black, opacity=0.25] (6.5,-1) rectangle (7.7,-1.5);  
    \draw[pattern={north west lines}] (0.5,-1) rectangle (2.5,-1.5);
    \draw[pattern={horizontal lines},pattern color=black] (1.8,-1) rectangle (3.2,-1.5);
    \draw[pattern={north east lines}] (4.1,-1) rectangle (6.6,-1.5);
    \draw[pattern={vertical lines}] (6.5,-1) rectangle (7.7,-1.5);

    \foreach \x/\xx in {0/0.5, 2/2.5, 2.3/1.8, 3.7/3.2, 4/4.1,6.5/6.6, 6.8/6.5, 8/7.7}{
        \draw[thick, densely dotted] (\x, 0) -- (\xx, -1);
    }
\end{tikzpicture}
        \caption{Decomposition of $Y$ into $2k+1$ fragments such that if $i$ is even, then $Y[y_i\dd y_{i+1})=X[y_i+s_i\dd y_{i+1}+s_i)$ for some $s_i\in [-2K\dd 2K]$. Such a decomposition can be maintained in a ``lazy'' fashion.}\label{fig:4a}
    \end{minipage}\hfill
    \begin{minipage}{.48\textwidth}
        \begin{tikzpicture}[scale=0.9]
    \draw (0,0) rectangle (8, 0.5);
    \draw (5, 1) node {$Y$};
    \foreach [count=\i from 0] \x in {0,2,2.3,3.7,4,6.5,6.8,8}{
        \draw (\x,0.4) node[above]{\footnotesize $y_\i$};
    }
    \foreach \x in {2,3.7,6.5}{
        \draw[fill=orange] (\x, 0) rectangle (\x+.3, 0.5);
    }
    \fill[black, opacity=0.25] (0,0) rectangle (2,0.5) (2.3,0) rectangle (3.7,0.5) (4,0) rectangle (6.5,0.5) (6.8,0) rectangle (8,0.5);
    \draw[pattern={north west lines}] (0,0) rectangle (2,0.5);
    \draw[pattern={horizontal lines}] (2.3,0) rectangle (3.7,0.5);
    \draw[pattern={north east lines}] (4,0) rectangle (6.5,0.5);
    \draw[pattern={vertical lines}] (6.8,0) rectangle (8,0.5);

    \begin{scope}[yshift=-3.25cm,xshift=.5cm,scale=0.8]

        \draw (1, 3) rectangle (8, 3.5);
        \draw(4.5, 3) -- (1.5, 2.5) (4.5, 3) -- (4.5, 2.5) (4.5,3) -- (7.5, 2.5);
    
        \foreach \x in {0,1,2}{
            \draw (1/3+3*\x, 2) rectangle (8/3+3*\x, 2.5);
            \draw(1.5+3*\x, 2) -- (0.5+3*\x, 1.5) (1.5+3*\x, 2) -- (1.5+3*\x, 1.5) (1.5+3*\x,2) -- (2.5+3*\x, 1.5);
        }
        \foreach \x in {0,1,...,8}{
            \draw (1/9+\x, 1) rectangle (8/9+\x, 1.5);
            \draw(0.5+\x, 1) -- (1/6+\x, 0.5) (0.5+\x, 1) -- (0.5+\x, 0.5) (0.5+\x,1) -- (5/6+\x, 0.5);
        }
        \foreach \x in {0,1,...,26}{
            \draw (1/27+\x/3, 0) rectangle (8/27+\x/3, 0.5);
            \draw(1/6+\x/3, 0) -- (1/18+\x/3, -0.5) (1/6+\x/3, 0) -- (1/6+\x/3, -0.5) (1/6+\x/3,0) -- (5/18+\x/3, -0.5);
        }
    
        \foreach \x in {0,1,...,80}{
            \draw (1/81+\x/9, -1) rectangle (8/81+\x/9, -0.5);
        }
    
        \draw[left] (9, 3.25) node{$X$};

        \draw(6.5, 1.25) node{$v$};

        \filldraw[yellow, fill opacity=0.3] (6+1/9,1) -- (6+1/9,1.5) -- (5.5, 4.05) -- (7.5, 4.05) -- (6+8/9,1.5) -- (6+8/9, 1) -- cycle;

        \draw (4.5, -1) node[below]{$T(X,X)$};
    
    \end{scope}

\end{tikzpicture}
        \caption{Precision sampling tree built using $X$ vs $X$. $\Delta^{\le K}_{v,s}(X,Y)$ depends upon $X_v$ and $Y'=Y[i_v+s-K_v\dd j_v+s+K_v)$. If $Y'$ is con\-tained in $Y[y_2\dd y_3)=X[y_2+s_2\dd y_3+s_2)$, then $\TD^{\le K}_{v,s}(X,Y)=\min(\TD^{\le 3K}_{v,s+s_2}(X,X),K-|s|)$.}    \label{fig:4b}
    \end{minipage}
\end{figure}
Figure~\ref{fig:4a} illustrates such a decomposition of $Y$. 
Consider a node $v$ and a shift $s$ such that $[i_v+s-K_v, j_v+s+K_v) \subseteq [y_i \dd y_{i+1})$ holds for some even $i\in [0\dd 2k]$.
Then, according to \cref{prp:ipm}, $Y[i_v+s-K_v, j_v+s+K_v)=X[i_v+s+s_i-K_v, j_v+s+s_iK_v)$, where $s_i \in [-2K \dd 2K]$. Let us maintain a precision sampling tree $\tilde{T}(X,X)$ (that is, two copies of $X$) with a slightly higher threshold of $3K$ instead of $K$. 
Since we only allow substitutions on $X$, the algorithm from the preceding section (supporting only substitutions) can maintain $\tilde{T}^{\leq 3K}(X,X)$ in expected update time of $b^2(\log{n})^{\Oh(\log_{b}{n})}$ simply by replacing $Y$ with $X$ in that section (and changing the threshold). 
We observe that
$Y[i_v+s-K_v\dd j_v+s+K_v)=X[i_v+s+s_i-K_v\dd j_v+s+s_i+K_v)$ implies, by 
the locality lemma (\cref{lem:locality}) (which is a slight generalization of \cref{lem:locality2}), $\min(\TD_{v,s}(X,Y), K-|s|)=\min(\TD_{v,s+s_i}(X,X), K-|s|)$ as they are bounded by $\min(K-|s|, |I_v|) \leq K_v$. We get
\begin{align*}
    \TD_{v,s}^{\le K}(X,Y) 
    &= \min(\TD_{v,s}(X,Y), K-|s|)  & \text{From \cref{lem:captd}}\\
    &= \min(\TD_{v,s+s_i}(X,X), K-|s|) & \text{From \cref{lem:locality}}\\
     &= \min(\TD_{v,s+s_i}(X,X), K-|s|, 3K - |s+s_i|) & \text{Since $|s_i+s|\le |s_i|+|s| \le 2K+|s|$}\\
    &= \min(\TD_{v,s+s_i}^{\le 3K}(X,X), K-|s|) & \text{From \cref{lem:captd}}
\end{align*}
Therefore, in this scenario (formalized in \cref{lem:reuse}), we can obtain $\Delta^{\le K}_{v,s}(X,Y)$ essentially by just copying from $\Delta_{v,s+s_i}^{\leq 3K}(X,X)$; this can add a small extra additive error due to shift sparsification.

Overall, we only need to compute $\Delta^{\le K}_{v,s}(X,Y)$ from scratch if $v$ is in scope of $\{y_1,y_3,..,y_{2k-1}\}$. This can be done in time $b^2k(\log{n})^{\Oh(\log_b{n})}$ again following the analysis of the preceding section (\cref{sec:subXsubY}) as long as we have access to $T^{\leq 3K}(X,X)$. 
Whenever we need the value of $\Delta_{v',s'}^{\le K}(X,Y)$ for a quasi-relevant shift $s'$ at a node $v'$, if the shift is not relevant, we can copy the value from appropriate $\Delta_{v',s''}^{\leq 3K}(X,X)$ given by \cref{lem:reuse}. Finally, we only need to return $\Delta_{v_{root},0}^{\le K}(X,Y)$, which is always relevant for $\{y_1,y_3,..,y_{2k-1}\}$. 

We run this entire procedure (\cref{alg:dyn-edY}) lazily after every $k/2$ updates so that the amortized expected cost over $k/2$ updates is $b^2(\log{n})^{\Oh(\log_b{n})}$ with high probability. Since the edit distance can only change by $k/2$ in $k/2$ updates, we still solve the $(k/2,k\cdot \Oh(b\log_b n))$-gap edit distance problem correctly with high probability in amortized expected update time of $b^2 \cdot (\log n)^{\Oh(\log_b n)}$.

\subsection{Step 3: Edits in both X and Y (\cref{sec:edXedY})}
We now arrive at the most general dynamic scenario where we allow substitutions, insertions, and deletions in both $X$ and $Y$.
If we can maintain approximately the values $\Delta^{\le 3K}_{v,s}(X,X)$, then rest of the algorithm can follow the approach presented in \cref{sec:subXedY} to approximate $\ED(X,Y)$ based on $T^{\le 3K}(X,X)$. However, there are multiple significant challenges on the way.

The first difficulty is that the tree $T$ cannot be static anymore: insertions and deletions require adding and removing leaves. Multiple insertions can violate our upper bound $\Oh(b)$ on the node degrees, 
and thus occasional rebalancing is needed to simultaneously bound the degrees and the tree depth $d \le 2\log_b n$. What makes rebalancing challenging in our setting is that the data we maintain has bi-directional dependencies. On the one hand, the values $\TD^{\le K}_{v,s}(X,X)$ we approximate depend on the shape of the subtree rooted at $v$. On the other hand, our additive approximation rates $\beta_v$ depend on the values $u_w$ sampled at the ancestors of $w$.
Thus, we cannot use rotations or other local rules.
Instead, we use weight-balanced B-trees~\cite{BDF05,AV03}, which support $\Oh(\log_b n)$-amortized-time updates and satisfy several useful properties. Specifically, we use the following property crucially: {\it if a non-root node $v$ at depth $d_v$ is rebalanced, then $\Omega(b^{d-d_v})$ updates in its subtree are necessary before it is rebalanced again.} 

Whenever a node $v$ is rebalanced, we will recompute all the data associated to all its descendants $w$ (including the approximation rates $\beta_w$, the precisions $u_w$, the sets $S_w$, and the values $\Delta^{\le K}_{w,s}(X,X)$). Moreover,
we will recompute the values $\Delta^{\le K}_{w,s}(X,X)$ for all ancestors $w$ of $v$.

\begin{restatable*}{lemma}{lemrebalance}\label{lem:rebalance}
    Upon rebalancing a node $v$, the precision sampling tree $T$ can be updated in $|I_v|\cdot b^2 \cdot (\log n)^{\Oh(\log_b n)}$ expected time.
\end{restatable*}

The above lemma along with the property of weight-balanced B-trees allows us to amortize the cost over $|I_v|$ updates. We can thus separate the description of the rebalancing subroutine from the description of the actual update algorithm. From now on, we assume that an insertion or a deletion of a single character $X[x]$ does not alter the precision sampling tree except that the underlying leaf is enabled or disabled (assume there is an invisible leaf which becomes visible when an insertion happens; symmetrically, a visible leaf becomes invisible upon deletion).

Another major difficulty is that, even though we are considering $\Tt(X,X)^{\leq 3K}$, the number of relevant shifts can become $\Omega(K^2)$. Consider a an insertion at $X[x]$ (deletion is symmetric). For every leaf $v$ with $I_v=\{i\}$, the value $\TD^{\le K}_{v,s}(X,X)$
changes whenever $x\in (i_v \dd i_v+s)$. 
This is simply because, if $X$ has been obtained by inserting $X[x]$ to  $X'=X[0\dd x)\cdot X(x\dd |X|)$, then the character $X[i_v+s]$ is derived from $X'[i_v+s-1]$ rather than $X'[i_v+s]$. 
However, note that in this scenario, we have  $\TD^{\le K}_{v,s}(X,X)=\TD^{\le K}_{v,s-1}(X',X')$. Thus, instead of recomputing $\Delta^{\le K}_{v,s}(X,X)$ we should rather observe that we can copy $\Delta^{\le K}_{v,s-1}(X',X')$ instead.

To avoid the blow up in relevant shifts, we associate unique labels with characters of $X$ that stay intact even while insertions and deletions change the character's indices. In particular, the shift $s$ at node $v$ should be identified with the label of $X[i_v+s]$.
For this, we maintain a uniquely labelled string $\Xd$ over $\Sigma \times \Zz$. 
For $c:=(a,\ell)\in \Sigma\times \Zz$, we say that $\val(c):=a$ is the \emph{value} of $c$ and $\lbl(c):=\ell$ is the \emph{label}
of $c$. For a node $v \in T$ with $I_v=[i_v\dd j_v]$, we shall store $\lft_v = \lbl(\Xd[i_v+K])$ and $\rgt_v = \lbl(\Xd[j_v+K])$ (the addition of $K$ is to avoid out-of-bound indices; this is handled by actually assigning labels to a longer string $\Xd = \$^K\cdot X \cdot \$^K$). Observe that, unlike indices $i_v,j_v$, which are shifted by edits at position $x<i_v$, the labels $\lft_v$ and $\rgt_v$ generally stay intact (unless the leaves in subtree of $v$ are inserted or deleted).

We can now define shifts as $\shf_v(\Xd,\ell) = \unlbl(\Xd,\ell)-\unlbl(\Xd,\lft_v)$, where $\unlbl(S,\ell)$ returns the index in $S$ of the unique label $\ell$. 
Furthermore, if $s:= \shf_v(\Xd,\ell)\in [-K\dd K]$, we denote $\TD_{v,\ell}^{\le K}(X,X):=\TD_{v,s}^{\le K}(X,X)$.
We also say that $(v,\ell)$ is \emph{relevant} for an insertion at $X[x]$ whenever $(v,\shf_v(\Xd,\ell))$ is relevant for that insertion.

\begin{figure}[h]
    \centering
    \begin{tikzpicture}[scale=0.8]

    \begin{scope}
    \draw[fill=black!70] (1,1.5) rectangle (8,2);
    \draw (1,1.75) node[left]{$X$};

    \draw[fill=black!70] (1,0) rectangle (8,0.5);
    \draw (1,0.25) node[left]{$X$};

    \draw[fill=orange] (2.2, 0) rectangle (3.7, 0.5);
    \draw (2.2, 0) node[below]{$\vphantom{i j}i$};
    \draw (3.7, 0) node[below]{$\vphantom{i j}r$};
    \draw[fill=green!50!black] (4.1, 1.5) rectangle (5.6, 2);
    \draw (4.1, 1.5) node[below]{$\vphantom{i j}i{+}s$};
    \draw (5.6, 1.5) node[below]{$\vphantom{i j}j{+}s$};
    \draw[densely dotted] (2.2, 0.5) -- (4.1, 1.5) (3.7, 0.5) -- (5.6, 1.5);
    \end{scope}

    \begin{scope}[yshift=-3.5cm]
        \draw[fill=black!70] (1,1.5) rectangle (8.25,2);
        \draw (1,1.75) node[left]{$X'$};
    
        \draw[fill=black!70] (1,0) rectangle (8.25,0.5);
        \draw (1,0.25) node[left]{$X'$};    
        \draw (4.125, 2) node[above]{$x$};
        \draw[fill=white!50!red] (4,0) rectangle (4.25,0.5) (4, 1.5) rectangle (4.25, 2);
    
        \draw[fill=orange] (2.2, 0) rectangle (3.7, 0.5);
        \draw (2.2, 0) node[below]{$\vphantom{i j}i$};
        \draw (3.7, 0) node[below]{$\vphantom{i j}j$};
        \draw[fill=green!50!black] (4.35, 1.5) rectangle (5.85, 2);
        \draw (4.35, 1.5) node[below]{$\vphantom{i j}i{+}s{+}1$};
        \draw (5.85, 1.5) node[below]{$\vphantom{i j}j{+}s{+}1$};
        \draw[densely dotted] (2.2, 0.5) -- (4.35, 1.5) (3.7, 0.5) -- (5.85, 1.5);
    \end{scope}

    \begin{scope}[xshift=11cm]
        \draw[fill=black!70] (1,1.5) rectangle (8,2);
        \draw (1,1.75) node[left]{$X$};
    
        \draw[fill=black!70] (1,0) rectangle (8,0.5);
        \draw (1,0.25) node[left]{$X$};
    
        \draw[fill=orange] (2.2, 0) rectangle (3.7, 0.5);
        \draw[fill=orange] (2.2, -0.1) rectangle node{$\texttt{ABCD}$} (3.7, -0.6);
        \draw (3.7, -0.35) node[right]{labels};

        \draw[fill=green!50!black] (4.1, 1.5) rectangle (5.6, 2);
        \draw[fill=green!50!black] (4.1,2.1) rectangle node{$\texttt{@\#\$\%}$} (5.6, 2.6);
        \draw (5.6, 2.35) node[right]{labels};
        \draw[densely dotted] (2.2, 0.5) -- (4.1, 1.5) (3.7, 0.5) -- (5.6, 1.5);

    \end{scope}

    \begin{scope}[xshift=11cm, yshift=-3.5cm]
        \draw[fill=black!70] (1,1.5) rectangle (8.25,2);
        \draw (1,1.75) node[left]{$X'$};
    
        \draw[fill=black!70] (1,0) rectangle (8.25,0.5);
        \draw (1,0.25) node[left]{$X'$};    
        \draw (4.125, 2) node[above]{$x$};
        \draw[fill=white!50!red] (4,0) rectangle (4.25,0.5) (4, 1.5) rectangle (4.25, 2);
        \draw[fill=orange] (2.2, -0.1) rectangle node{$\texttt{ABCD}$} (3.7, -0.6);
        \draw (3.7, -0.35) node[right]{labels};
    
        \draw[fill=orange] (2.2, 0) rectangle (3.7, 0.5);
        \draw[fill=green!50!black] (4.35, 1.5) rectangle (5.85, 2);
        \draw[fill=green!50!black] (4.35,2.1) rectangle node{$\texttt{@\#\$\%}$} (5.85, 2.6);
        \draw (5.85, 2.35) node[right]{labels};
        \draw[densely dotted] (2.2, 0.5) -- (4.35, 1.5) (3.7, 0.5) -- (5.85, 1.5);
    \end{scope}

\end{tikzpicture}
    \caption{Original string $X$ and the new string $X'$ upon insertion of $x$ into $X$. (Left) Working with ordinary string representation, $\TD_{v,s}(X,X)\ne \TD_{v,s}(X',X')$. (Right) Working with uniquely labelled strings, $\TD_{v,@}(X,X)=\TD_{v,@}(X',X')$.}
    \label{fig:5}
\end{figure}
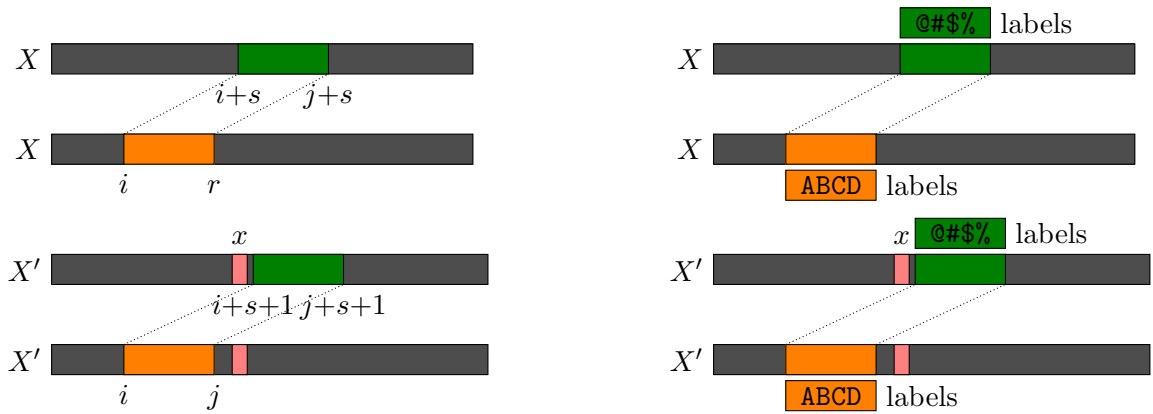

Working with labeled strings comes with significant advantages. Previously, whenever we had $\TD^{\le K}_{v,s}(X,X)=\TD^{\le K}_{v,s-1}(X',X')$, we would now have $\TD_{v,\ell}^{\le K}(X,X)=\TD_{v,\ell}^{\le K}(X',X')$ instead. Figure~\ref{fig:5} illustrates this. Adapting the proof of \cref{cor:locality}, we get the corollary below.

\begin{restatable*}{corollary}{corlocalityedit}\label{cor:locality-edit}
  Consider a string $X\in \Sigma^*$, a position $x\in [0\dd |X|)$, and a string $X'=X[0\dd x)\cdot X(x\dd |X|)$.
  Let $T$ be a precision sampling tree with $n$ leaves and let $T'$ be obtained from $T$ by removing the $x$th leaf.
  Consider a node $v\in T$ and a label $\ell\in \Lbl(\Xd)$ such that $s:= \shf_v(\Xd,\ell)$.
  If $s\in [-K\dd K]\setminus \ri{x}{v}$, then $\TD_{v,\ell}^{\le K}(X,X)=\TD_{v,\ell}^{\le K}(X',X')$.
\end{restatable*}

With this, we can almost follow our previous strategy to compute $\Delta_{v,\ell}^{\le K}(X,X)$ as in \cref{sec:subXsubY}: for each node $v$, identify the set of \emph{relevant} shifts $\ri{x}{v}$,
for each shift $s\in S_v\cap \ri{x}{v}$, update the underlying values $\Delta^{\le K}_{v,\ell}(X,X)$ based on the values $\Delta^{\le K}_{v_h,\ell_h}(X,X)$
obtained from the children $v_h$, with the shift $s_h$ restricted to $s_h \in S_{v_h}\cap \qi{x}{v}$, where $\qi{x}{v}$ consists of quasi-relevant shifts that are close to relevant ones. Note that $s := \shf_v(\Xd,\ell) = \unlbl(\Xd,\ell)-\unlbl(\Xd,\lft_v)$.

The remaining challenge is that, if an insertion at $X[x]$ transforms $X'$ to $X$ and $\shf_v(\Xd,\ell)\ne \shf_v(\Xd',\ell)$, then we should also reflect the change in the contents of the set $S_v$ of allowed shifts. In particular, our algorithm now maintains a set $L_v = \{\ell\in \Lbl(\Xd) : \shf_v(\Xd,\ell)\in S_v\}$, which is more `stable' than $S_v$ itself. The only issue is that, unlike $[-K\dd K]$, which is static, the set $\{\ell\in \Lbl(\Xd) : \shf_v(\Xd,\ell)\in [-K\dd K]\}$ may change. Fortunately, subject to an insertion at $X[x]$, it may change only for nodes in scope of the insertion,
with at most one element entering $L_v$ (the label of the newly inserted character) and one leaving $L_v$ (which used to correspond to $\pm K$ but now corresponds to $\pm (K+1)$). 
Thus, our algorithm can keep track of these changes, and, whenever a new label is inserted, we can insert it to $L_v$ with probability $\max(1, 48\beta_v^{-1}(b_v+1)\ln n)$ so that the set $S_v$ is still distributed as in \cref{sec:ako-static}. This also explains why we use randomization in the shift sparsification process. 

Finally, since $\TD^{\le K}_{v,s}(X,X)$ is capped by $K-|s|$, if an insertion of $x$ happens in between $I_v$ and $I_v$
 shifted by $s$, that is $[\min(x-s,x)\dd \max(x-s,x)]\cap I_v \ne \emptyset$ and $|s|\ge K-|I_v|$, then $\TD^{\le K}_{v,s}(X,X)=K-|s|$ might hold.
 In that case, $\TD^{\le K}_{v,s}(X,X)$ may change by $\pm 1$ as the numerical value of $s$ changes. 
 To account for that, we have to expand the set of relevant shifts.

 \begin{restatable*}{definition}
 {newrelevance}\label{def:newrelevance}
     Consider a string $v\in T$ and an index $x\in [0\dd |X|)$. We define the set $\ri{x}{v}\sub [-K\dd K]$ of shifts \emph{relevant} for an insertions at $X[x]$
  so that $s\in \ri{x}{v}$ if it satisfies at least one of the following conditions:
  \begin{enumerate}
    \item $x\in I_v$,
    \item $[x-s-K_v\dd x-s+K_v]\cap I_v \ne \emptyset$, or
    \item $[\min(x-s,x)\dd \max(x-s,x)]\cap I_v \ne \emptyset$ and $|s|\ge K-|I_v|$.
  \end{enumerate}
  Moreover, the set $\qi{x}{v}$ of shifts quasi-relevant for an insertion at $X[x]$ is defined as 
  \[ \qi{x}{v} = \{s \in [-K\dd K] : \exists_{s'\in \ri{x}{v}}\; |s-s'|\le K_v\}.\]
  Furthermore, we say that a node $v$ is in scope of an insertion at $X[x]$ if $\ri{x}{v}\ne \emptyset$.
 \end{restatable*}

Nevertheless, even with this modified definition, we can bound the number of quasi-relevant shifts by $\Oh(K \log_b n)$ (\cref{lem:subtree-edit}). The expected number of active nodes remain bounded at $(\log{n})^{\Oh(\log_{b} n)}$ (\cref{lem:active-edit}), which together with the bound on the quasi-relevant shifts leads to an expected update time of $b^2\cdot (\log{n})^{\Oh(\log_b{n})}$ to maintain $\Tt^{\leq 3K}(X,X)$. Following the analysis of \cref{sec:subXedY} for computing $\Delta_{v_{root},0}^{\le K}(X,Y)$, we get our final theorem. 

\thmmain*

\section{The Andoni-Krauthgamer-Onak Framework }\label{sec:ako-static}
Our dynamic algorithm builds on the algorithm of Andoni, Krauthgamer, and Onak~\cite{AKO10}. 
We begin with a self-contained overview of the algorithm of  \cite{AKO10} in the static setting, closely following the description presented in \cite{BCFN22a, BCFN22b}, with some crucial modifications that we describe along the way.

Given two length-$n$ strings $X$ and $Y$, the goal is to solve the $(k,K)$-gap edit distance problem for $K/k = (\log n)^{\Oh(1/\epsilon)}$. An important ingredient of their algorithm is a way to split the edit distance computation into smaller and independent subproblems. As the edit distance might depend on a global alignment of the two strings, one cannot simply divide the two strings into equally sized smaller blocks, recursively compute their edit distances, and combine the results. The main idea is to  compute the edit distances of a block in one string with several shifts of its corresponding block in the other string, and then carefully combine the results. In \cite{AKO10}, this is formalized through an auxiliary string similarity measure called  \emph{tree distance} (originally referred to as \emph{$\mathcal{E}$-distance}) that gives a good approximation to the edit distance. The tree distance is defined over an underlying tree $T$, which we call the \emph{precision sampling tree}, that splits the computation into independent subproblems.
\paragraph*{Precision Sampling Tree: Approximate Balanced Condition}

\defpstree

Here, we consider $T$ as an approximately balanced $b$-ary tree with $n:=|X|$ leaves so the depth of $T$ is bounded by $\Oh(\log_b n)$. In the static setting, $T$ is a perfectly balanced $b$-ary tree. As we will see later in the dynamic setting (while allowing insertions and deletions), it is not possible to maintain $T$ to be perfectly balanced. However, we will maintain $T$ to be approximately balanced,  with the degree of each internal node between $0.5b$ to $2b$ and depth at most $2\log_b n$.

The label $I_v$ of the $i$-th leaf of $T$ (from the left) is set to $\{i\}$, which also uniquely determines the intervals for all internal nodes. The tree determines a decomposition of two $n$ length strings $X$ and $Y$ such that, for a node $v$ labeled with interval $I_v = [i \dd j)$, we consider the substrings $X[i \dd j)$ and $Y[i \dd j)$.

We now define the tree distance as follows:
\deftd

The following establishes the relation (and the usefulness) of the tree distance measure with the edit distance.

\lemequivedtd

In particular, since an approximately balanced $b$-ary tree with $n$ leaves has depth bounded by $\Oh(\log_b n)$ and degree bounded by $\Oh(b)$, the above lemma shows that $\TD_T(X, Y)$ is at most $\Oh(b\log_b n)$ times the edit distance of $X, Y$. In the algorithm of \cite{AKO10}, the idea is to approximately compute the tree distance at each node $v$ in the precision sampling tree. For the leaf nodes, one can directly evaluate the edit distance, and for the internal nodes use \eqref{eq:tree-distance} to combine the recursively computed values.

\paragraph{Capped Distance: New Definition}

Tree distance \eqref{eq:tree-distance} minimizes over an infinite number of shifts $s' \in \Int$. Since we are only interested in the $(k, K)$-gap problem, if the tree distance exceeds the threshold $K$, we can report that they are far. Hence, one can restrict the attention to approximating the \emph{capped} tree distance by considering only shifts $s' \in [-K\dd K]$.

We introduce a few extra notations here. 
The substring $X[i\dd j)$ of $X$ corresponding to node $v$ is denoted by $X_v$. In particular, $X_v$ is a single character for each leaf $v$, and $X_v = X$ when $v$ is the root of $T$. 
 $T_v$ denote the subtree of $T$ rooted at node $v$.

\begin{definition}[Capped Edit Distance] \label{def:capped}
For strings $X,Y$ and $K \geq 0$, define the \emph{$K$-capped edit distance as $\ED^{\leq K}(X, Y)$ $= \min(\ED(X, Y), K)$.}
\end{definition}

\defcaptd
{\it Remark.}
 Note that the above definition of capped tree distance (\ref{eq:capped-td}) is different from \cite{AKO10,BCFN22a, BCFN22b} where the recurrence is defined as

\begin{align}
\label{eq:cap-old}
    \TD_{v, s}^{\leq K}(X, Y) = \min\left(\sum_{h=0}^{b_v-1} \,\min_{s' \in [-K \dd K]}\, \left( \TD_{v_h, s'}^{\leq K}(X, Y) + 2 \cdot |s - s'| \right),\, K \right).
\end{align}

While (\ref{eq:cap-old}) is sufficient to show 
$\TD^{\leq K}(X, Y)=\min(\TD(X, Y), K)$, 
it is not generalizable to internal nodes and for all shifts $s \in [-K \dd K]$. In particular, $\TD_{v,s}^{\leq K}(X, Y)=\min(\TD(X, Y), K)$ does not hold for any internal node $v$, and shifts $s \in [-K \dd K]$ under the old definition (\ref{eq:cap-old}).

Using our new definition, the following lemma relates capped tree distance to the original tree distance.

\lemcaptd
\begin{proof}
Consider any $v$, and $s \in [-K \dd K]$, and consider the subtree $T_v$ rooted at $v \in T$. Let $s_{s,w}^{\leq K}$ denote the optimal shift chosen at node $w \in T_v$ according to definition (\ref{eq:capped-td}) while computing $\TD_{v,s}^{\leq K}(X,Y)$. Clearly, these are valid shifts to choose while computing $\TD_{v,s}(X,Y)$ according to definition (\ref{eq:tree-distance}). Hence, $\TD_{v,s}^{\leq K}(X,Y) \geq \min(\TD_{v,s}(X,Y), K-|s|)$. 

Now, we show the other direction, that is $\TD_{v,s}^{\leq K}(X,Y) \leq \min(\TD_{v,s}(X,Y), K-|s|)$. We may assume $\TD_{v,s}(X,Y) <  K-|s|$ as otherwise the statement is clear. Let $s_{s,w}$ denote the optimal shift chosen for any node $w \in T_v$ according to definition (\ref{eq:tree-distance}) for computing $\TD_{v,s}(X,Y)$.
Consider the path from $v$ to $w$ in $T$, and let $x_i$ denote the $i$th vertex in that path. Therefore, $x_0=v$, and $x_l=w$ where $l$ is the length of the path. Then using (\ref{eq:tree-distance}) 
\begin{align*}
|s_{s,w}-s| \leq \sum_{i=1}^{l} |s_{s,x_{i}}-s_{s,x_{i-1}}| \leq K-|s|
\end{align*}
which implies $|s_{s,w}| -|s|\leq |s_{s,w}-s| \leq K-|s|$, or $s_{s,w} \in [-K \dd K]$. Therefore, the shifts $s_{s,w}$ for all nodes $w \in T_v$ are among the valid shifts to choose in the computation of $\TD_{v,s}^{\leq K}(X,Y)$. Hence, we have $\TD_{v,s}^{\leq K}(X,Y) \leq \TD_{v,s}(X,Y)$.
\end{proof}

\paragraph{Accuracy Parameters: Sparsification \& Pruning}
We introduce two approximation parameters $\alpha_v$ and $\beta_v$ for a node $v$, i.e., the output at node $v$ is allowed to be $(\alpha_v, \beta_v)$-approximate. The \emph{multiplicative} accuracy $\alpha_v$ determines the multiplicative approximation ratio at $v$. The \emph{additive} accuracy $\beta_v$ determines the number of characters allowed to be read at $v$ which is $(2\beta_v)^{-1}|I_v|$. 

\noindent{\it Sparsification.}
For every node $v$, we restrict the set of possible shifts $s$ at $v$ with respect to the additive accuracy $\beta_v$. In particular we define a set of \emph{allowed} shifts $S_{v} \sub [-K \dd K]$ with $\Pr[s\in S_v]=\min(1,48\beta_v^{-1}(b_v+1)\ln n)$ independently across all $s\in [-K\dd K]$. Then instead of going over all possible shifts in $[-K \dd K]$, we only try for the set of allowed shifts $S_v$. We show with high probability, this only incurs another additive $\beta_v/2$-approximation at each node. Combined with approximation due to the number of characters read at each node, this provides a total additive accuracy $\beta_v$ at node $v$. 

\noindent{\it Pruning.}
We also introduce the following pruning rule: \emph{If~$|I_v| \leq \beta_v$, then return $\Delta_{v, s}^{\le K}(X,Y) = 0$ for all $s \in S_v$.} This is correct, since he $K$-capped tree distance $\TD^{\le K}_{v,s}(X,Y)$ is at most $|I_v|$. Hence, $0$ is an $(\alpha_v, \beta_v)$ approximation of $\TD^{\le K}_{v,s}(X,Y)$.

We now specify the computational task for each node:
\begin{definition}[Tree Distance Problem] \label{def:problem}
Given a node $v$ of the precision sampling tree $T$, let $S_{v} \subseteq [-K\dd K]$ be defined as above. 
The goal is to compute, for every $s\in S_v$, a value $\Delta_{v, s}^{\le K}(X,Y)$ that is an $(\alpha_v,\beta_v)$-approximation of $\TD^{\leq K}_{v, s}(X, Y)$, that is,
\begin{equation} \label{eq:delta-approx}
    \tfrac1{\alpha_v} \TD_{v,s}^{\leq K}(X,Y) - \beta_v \leq \Delta^{\le K}_{v, s}(X,Y) \leq \alpha_v \TD^{\leq K}_{v, s}(X, Y) + \beta_v.
\end{equation}
\end{definition}

In fact, one can compute these values with the stronger guarantee that $\Delta_{v,s}^{\le K}(X,Y)$ is an $(\alpha_v, \beta_v)$-approximation of $\TD_{v, s}^{\leq K}(X, Y)$ w.h.p. (see \cref{lem:delta}). 

\paragraph{Precision Sampling.}
In the next step, we assign the appropriate parameter $\beta_v$ to each node in the precision sampling tree. The challenge now is to assign the parameters to $\beta_{v_h}$ the $b_v$ children of $v$ in such a way as to obtain a good approximation of tree distance at $v$ after combining the results from the children. Naive solutions would require $\sum_{h=0}^{b_v}\beta_{v_h}\le \beta_v$ since additive errors cumulate. The Precision Sampling Lemma~(\cref{cor:precision-sampling}) allows much larger errors $\beta_{v_h}$ with $\Exp[\beta_v / \beta_{v_h}]=\Oh(\polylog(n))$.

Finally, we need an efficient algorithm to combine the computed tree distances. Specifically, the following subproblem can be solved efficiently:

\begin{lemma}[{\cite[Lemma 17]{BCFN22b}}]\label{lem:range-minimum}
There is an $\Oh(|S|+|S'|)$-time algorithm for the following problem: Given integers $A_{s'}$ for $s' \in S'$ and a set $S$,
compute the following values for each $s\in S$:
\begin{equation*}
    B_s = \min_{s' \in S'} \left(A_{s'} + 2 \cdot |s - s'|\right).
\end{equation*}
\end{lemma}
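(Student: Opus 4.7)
The plan is to reduce the two-sided minimum involving $|s-s'|$ to two one-sided subproblems, each of which can be solved by a single monotone sweep. First I would split
\[
  B_s \;=\; \min\bigl(B^+_s,\; B^-_s\bigr),\quad B^+_s := \min_{\substack{s'\in S'\\ s'\le s}}\!\bigl(A_{s'} + 2(s-s')\bigr),\quad B^-_s := \min_{\substack{s'\in S'\\ s'\ge s}}\!\bigl(A_{s'} + 2(s'-s)\bigr),
\]
and then isolate the $s$-dependence by rewriting $B^+_s = 2s + \min_{s'\in S',\, s'\le s}(A_{s'}-2s')$ and $B^-_s = -2s + \min_{s'\in S',\, s'\ge s}(A_{s'}+2s')$. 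For each $s\in S$ the task reduces to a prefix (respectively suffix) minimum over $S'$ of a quantity that depends only on $s'$.

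Next I would rely on the fact that, in the context where this lemma is applied, the shift sets $S$ and $S'$ are maintained in sorted order (they are subsets of $[-K\dd K]$ stored in order-aware containers). Thus the merged sorted sequence of $S\cup S'$ can be produced in $\Oh(|S|+|S'|)$ time. I would then perform a left-to-right sweep over this merged sequence, maintaining the running quantity $M^+ := \min\{A_{s'}-2s' : s'\in S',\ s' \le \text{current position}\}$, updated whenever an element of $S'$ is encountered; whenever an element $s\in S$ is encountered, I record $B^+_s := 2s + M^+$. A symmetric right-to-left sweep computes $B^-_s$. Each sweep does constant work per element, giving total time $\Oh(|S|+|S'|)$, and the final combination $B_s = \min(B^+_s,B^-_s)$ is an $\Oh(|S|)$ post-pass.

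The main obstacle I expect is really only the sortedness assumption: if $S$ and $S'$ were given as unordered lists one would incur a logarithmic factor. My plan to handle this is to invoke the invariant of the surrounding algorithm, namely that the sets $S_v$ of allowed shifts (and analogously the labeled sets $L_v$) are already kept in sorted order, so that any monotone traversal of the underlying container yields the merged sequence in linear time. Everything else in the sweep is routine: correctness of $B^+_s$ follows because, when the sweep reaches $s$, the running minimum $M^+$ ranges exactly over $\{s'\in S' : s'\le s\}$; correctness of $B^-_s$ is symmetric; and the identity $|s-s'|=\max(s-s',s'-s)$ guarantees that one of the two one-sided minima attains the two-sided minimum.
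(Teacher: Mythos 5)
The paper does not prove this lemma itself; it imports it by citation from \cite{BCFN22b}. Your argument is correct and matches the standard approach (and the one used in the cited source): decompose $|s-s'|$ into the two one-sided cases, isolate the $s$-dependence to reduce each case to a prefix or suffix minimum over $S'$, and compute those minima by two linear sweeps over the merged sorted sequence $S\cup S'$. You are right to flag the sortedness requirement — without it the merge costs an extra logarithmic factor — and the justification (the shift sets $S_v$ are maintained in order) is the appropriate invariant to invoke. One small point worth being explicit about in a cleaned-up writeup: when an element of $S$ coincides with an element of $S'$ at the same value, your two one-sided minima both include the $s'=s$ case, so no tie-breaking convention in the merge can cause a wrong answer.
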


Improving on the approximate solution in \cite{AKO10}, Bringmann, Cassis, Fischer, and Nakos \cite{BCFN22a, BCFN22b} give an exact algorithm to solve this. 

We now present the AKO algorithm (with our modifications) in full in \cref{alg:ako} and then proceed to bound the error probability and run time of this algorithm. We start with fixing some parameters first.

\begin{algorithm}[t]
\caption{The Andoni-Krauthgamer-Onak Algorithm} \label{alg:ako}
\begin{algorithmic}[1]
\medskip
\Statex \textbf{Input:} 
Strings $X,Y$, a node $v$ in tree $T$ with allowed shifts $S_{v}$ and additive accuracy $\beta_v$
\Statex \textbf{Output:} The values $\Delta_{v,s}^{\le K}(X,Y)$ for shifts $s \in S_{v}$
\medskip
\Statex \hrule

\Procedure{Combine}{$v,S,D$}
\For{$h\in [0\dd b_v)$}
    \State Compute $\widetilde A_{h, s} = \min_{(h',s',\Delta) \in D\;:\;h'=h} \left(\Delta + 2 \cdot |s - s'|\right)$ 
    for all  $s\in S$\label{alg:ako:line:range-minimum} \Comment{\cref{lem:range-minimum}}
\EndFor
\For{$s \in S$} \label{alg:ako:line:iter-output}
    \State $\delta_{v, s} = \Call{Recover}{\widetilde A_{0, s}, \dots, \widetilde A_{b_v-1, s},u_{v_0}, \dots, u_{v_{b_v-1}}}$ \Comment{\cref{cor:precision-sampling}} \label{alg:ako:line:recovery}
\EndFor
\State\Return $\Delta_{v,s}^{\le K}(X,Y)=\min(\delta_{v, s}, |I_v|,K-|s|)$ for all $s \in S$ \label{alg:ako:line:return}
\EndProcedure
\Statex \hrule
\Procedure{AKO}{$v,\beta_v$}
\If{$|I_v| \le \beta_v $} \label{alg:ako:line:test-2-condition}
    \State\Return $\Delta_{v,s}^{\le K}(X,Y)= 0$ for all $s \in S_{v}$ \label{alg:ako:line:test-2}
\EndIf
\If{$v$ is a leaf} \label{alg:ako:line:test-1-condition}
    \State\Return $\Delta_{v,s}^{\le K}(X,Y)= \TD^{\le K}_{v,s}(X,Y)$ for all $s \in S_{v}$ \label{alg:ako:line:test-1}
\EndIf
\For{$h \in [0\dd b_v)$} \label{alg:ako:line:iter-children}
    \State Let $v_h$ be the $h$-th child of $v$
    \State Sample $u_{v_h} \sim \hDst((4\log n)^{-1}, 0.01\cdot K^{-1} \cdot n^{-4})$ \label{alg:ako:line:precision} \Comment{\cref{cor:precision-sampling}}
    \State Call $\Call{AKO}{v_h,\frac12 \beta_v u_{v_h}}$ to compute $\Delta_{v_h, s_h}^{\le K}(X,Y)$ for all $s_h \in S_{v_h}$
    \label{alg:ako:line:recursion}
\EndFor
\State \Return {\sc Combine}($v,S_v,\{(h,s_h,\Delta_{v_h,s_h}^{\le K}(X,Y)): h\in [0\dd b_v), s_h\in S_{v_h}\}$)
\EndProcedure
\end{algorithmic}
\end{algorithm}

\paragraph{Fixing the Parameters.}
\label{par:parameter-sko-static}
We assume that $T$ is an $\Oh(b)$-ary tree with $n$ leaves and depth $d\le 2\log_b n$. We also fix the approximation parameters $\alpha_v$ and $\beta_v$ for each node $v$.
For the root, we set $\beta_v$ to be a constant fraction of $K$, say $K/1000$. If $w$ is a child of $v$ then sample~$u_w \sim \hDst((4 \log n)^{-1}, 0.01 \cdot K^{-1} \cdot n^{-4})$ independently and set $\beta_w = \frac12\beta_v u_w$.
For any node $v$, set $\alpha_v = 2 \cdot (1 - (4 \log n)^{-1})^{d_v}$ where $d_v$ is the depth of $v$. Note that $1\le \alpha_v \le 2$ since $d_v \leq 2\log_b n \le 2\log n$.

\subsection{Correctness}\label{subsec:correctness}
We show that for each node $v$, \cref{alg:ako} solves the tree distance problem at $v$ w.h.p. We proceed inductively over the depth of the computation tree assuming that the algorithm correctly solves all recursive calls to the children of $v$. 

The \emph{inactive} nodes with $|I_v|\le \beta_v$ are handled in \crefrange{alg:ako:line:test-2-condition}{alg:ako:line:test-2}.
This is correct because $0$ is an additive $\beta_v$-approximation of~$\TD^{\leq K}_{v, s}(X, Y)$, which is at most $|I_v|$.
The active leaf nodes are handled in \crefrange{alg:ako:line:test-1-condition}{alg:ako:line:test-1}. 
In that case, we compute $\TD^{\leq K}_{v, s}(X, Y)$ exactly. 

This serves as the base case. In the rest of the algorithm, we use the approximations recursively for $v$'s children to solve the tree distance problem at $v$. For the inductive step, assume that the values $\Delta_{v_h, s}^{\le K}(X,Y)
$ for $s \in S_v$ recursively computed in \cref{alg:ako:line:recursion} satisfy \cref{def:problem}. The algorithm then evaluates the following expression approximately (see \cref{def:tree-distance}):
\begin{equation*}
    \TD^{\leq K}_{v, s}(X, Y) = \min\Bigg(\sum_{h=0}^{b_v-1} A_{h, s},\,K-|s|\Bigg), \quad A_{h, s} = \min_{s_h \in [-K \dd K]} \left(\TD^{\leq K}_{v_h, s_h}(X, Y) + 2 \cdot |s - s_h|\right).
\end{equation*}

\newcommand{\shopt}{s_h^*}
\newcommand{\tshopt}{\tilde{s}_h^*}

In particular, for every shift $s\in S_v$, the algorithm computes
\begin{equation*}
    \widetilde A_{h, s} = \min_{s' \in S_{v_h}} \left( \Delta_{v_h, s_h} + 2 \cdot |s - s_h|\right),
\end{equation*}
By induction hypothesis, for every $s\in S_{v_h}$, the value $\Delta_{v_h,s_h}^{\le K}(X,Y)$ is an $(\alpha_{v_h}, \beta_{v_h})$-approximation of $\TD^{\le K}_{v_h,s_h}(X,Y)$. Suppose $\shopt$ is the shift that minimizes $A_{h,s}$. Let $\tshopt\in S_{v_h}$ be closest to $\shopt$.

Note that from \cref{def:tree-distance}, for any two shifts $s,s' \in [-K\dd K]$
\begin{equation}
    \Bigg|\TD_{v_h,s}^{\leq K}(X,Y)-\TD_{v_h,s'}^{\leq K}(X,Y)\Bigg|\leq 2(b_{v_h}+1)|s-s'|
\end{equation}
Now, by definition of $S_{v_h}$, for any shift $s \in [-K \dd K]$, the closest shift 
$\tilde{s} \in S_{v_h}$ satisfies 
\[\Pr\left[|s-\widetilde s| > \frac{\beta_{v_h}}{8(b_{v_h}+1)}\right] \le \left(1-\frac{48(b_{v_h}+1)\ln n}{\beta_{v_h}}\right)^{\frac{\beta_{v_h}}{8(b_{v_h}+1)}} \le \exp\left(-\frac{48(b_{v_h}+1)\ln n \cdot \beta_{v_h}}{\beta_{v_h}\cdot 8(b_{v_h}+1)}\right)\le   \frac{1}{n^6}.\]
Taking a union bound over all nodes $v_h\in T$ and shifts $s\in [-K\dd K]$, we conclude that, with probability at least $1-\frac{1}{n^4}$,
for any shift $s \in [-K\dd K]$ choosing $\tilde{s} \in S_{v_h}$ closest to $s$ gives
\begin{equation}
\label{eq:shiftsparse}
    \Bigg|\TD_{v_h,s}^{\leq K}(X,Y)-\TD_{v_h,\tilde{s}}^{\leq K}(X,Y)\Bigg|\leq \frac{\beta_{v_h}}{4}
\end{equation}
Assuming the above event, due to $1\le \alpha_{v_h}\le 2$, we have
\begin{align*}
    \tilde{A}_{h,s} & \leq \Delta_{v_h,\tshopt}^{\le K}(X,Y)+2|s-\tshopt| & \text{ since } \tshopt \in S_{v_h} \\  
    & \leq \alpha_{v_h} \TD_{v_h,\tshopt}(X,Y)+\beta_{v_h}+2|s-\shopt|+2|\shopt-\tshopt| & \text{ by induction hypothesis}\\
    & \leq \alpha_{v_h} \left(\TD_{v_h,\shopt}(X,Y)+\tfrac14\beta_{v_h}\right)+2|s-\shopt|+ \tfrac{3}{2}\beta_{v_h} & \text{from \cref{eq:shiftsparse} since $|\shopt-\tshopt| \leq \tfrac{\beta_{v_h}}{4(b_{v_h}+1)}$}\\
    & \le \alpha_{v_h} \TD_{v_h,\shopt}(X,Y) + \tfrac12 \beta_{v_h}+ 2|s-\shopt|+\tfrac{3}{2}\beta_{v_h} & \text{ since $\alpha_{v_h} \le 2$}\\
    & \le \alpha_{v_h}(\TD_{v_h,\shopt}(X,Y)+ 2|s-\shopt|)+2\beta_{v_h} & \text{ since $\alpha_{v_h} \ge 1$}\\
    &=\alpha_{v_h}\cdot A_{h,s}+2\beta_{v_h}   & \text{ from the choice of $\shopt$} 
\end{align*}
To show the other direction let $s'_h$ be the optimum shift for computing $\tilde{A}_{h,s}$.
\begin{align*}
A_{h,s} &\leq \TD_{v_h,s'_h}^{\leq K}(X,Y)+2|s-s'_h|\\
& \leq \alpha_{v_h} \Delta_{v_h,s'_h}^{\le K}(X,Y)+\beta_{v_h}+2|s-s'_h| &\text{ by induction hypothesis}\\
&\leq \alpha_{v_h}(\Delta_{v_h,s'_h}^{\le K}(X,Y)+2|s-s'_h|)+\beta_{v_h} & \text{ since $\alpha_{v_h} \ge 1$}\\
&=\alpha_{v_h}\tilde{A}_{h,s}+\beta_{v_h} & \text{ by the choice of $s'_h$}
\end{align*}
Therefore, overall we have $\tilde{A}_{h,s}$ is an $(\alpha_{v_h},2\beta_{v_h})$ approximation of $A_{h,s}$.
Recall that the additive approximation parameters $\beta_{v_h}$ are set to $\beta_{v_h} = \frac12\beta_v u_{v_h}$ for all children. 
The precision parameters $\{u_{v_h}\}$ are chosen as independent samples from $\hDst(\epsilon = (4\log n)^{-1}, \delta = 0.01 \cdot K^{-1} \cdot n^{-4})$.
We can also verify that $(1 + \epsilon) \cdot \alpha_{v_h} = 2 \cdot (1 + (4 \log n)^{-1}) \cdot (1 - (4 \log n)^{-1})^{d_{v_h}} \leq \alpha_v$.  Therefore in \cref{alg:ako:line:recovery} by applying the Precision Sampling Lemma (\cref{cor:precision-sampling}) we get $\Delta_{v, s}(X,Y)$ is an  $(\alpha_v, \beta_v)$ approximation of $\TD_{v,s}^{\leq K}(X,Y)$. 


Next, we bound the error probability of the algorithm. Except for the choice of $S_v$ (which we have already analyzed),
the only source of randomness in the algorithm is from precision sampling and the recovery using \cref{cor:precision-sampling}. The algorithm at $v$ succeeds if the recovery algorithm in \cref{alg:ako:line:recovery} succeeds for values~\makebox{$s \in S_v$}. Taking a union bound over these at most $2K+1\le 3K$ many error events, each happening at most with probability $n\delta = 0.01 \cdot K^{-1} \cdot n^{-3}$, we can bound the total error probability by $0.03/n^3$. Taking into account the error probability coming from the shifts $S_{v_h}$, the total error probability at any node $v$ does not exceed $1/n^3$.
Hence, we have the following

\begin{lemma}\label{lem:delta}
    Let $v$ be an internal node of the tree $T$. Assume that, for every child $v_h$ of $v$ and shift $s_h\in S_{v_h}$, the value $\Delta_{v_h,s_h}^{\le K}(X,Y)$ is an $(\alpha_{v_h},\beta_{v_h})$-approximation of $\TD^{\le K}_{v_h,s_h}(X,Y)$.
    With probability at least $1-\frac{1}{n^3}$, every value $\Delta_{v,s}^{\le K}(X,Y)$ returned by \Call{Combine}{$v,S_v,\{(h,s_h,\Delta_{v_h,s_h}^{\le K}(X,Y))\}$} is an $(\alpha_v,\beta_v)$-approximation of $\TD^{\le K}_{v,s}(X,Y)$.
\end{lemma}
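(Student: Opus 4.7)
The plan is to follow the inductive structure already sketched in the preceding discussion, showing that the Combine procedure at $v$ preserves the approximation guarantees of its children, up to the scaling factors dictated by the choice of $\alpha_v$ and $\beta_v$. The proof breaks naturally into three stages: bounding the error introduced by shift sparsification, applying the inductive hypothesis to translate this into an approximation guarantee for each $\widetilde A_{h,s}$ relative to $A_{h,s}$, and finally invoking \cref{cor:precision-sampling} to obtain the claimed approximation of $\TD_{v,s}^{\le K}(X,Y)$.

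First I would handle the shift sparsification. Since each shift is included in $S_{v_h}$ independently with probability $\min(1,48\beta_{v_h}^{-1}(b_{v_h}+1)\ln n)$, a direct tail bound shows that, for any fixed $s\in [-K\dd K]$, the probability that the closest $\tilde s\in S_{v_h}$ to $s$ lies farther than $\beta_{v_h}/(8(b_{v_h}+1))$ is at most $n^{-6}$; a union bound over the $\Oh(K)=\Oh(n)$ shifts and children keeps the total failure probability polynomially small. Under this event, the $2(b_{v_h}+1)$-Lipschitz property of $\TD^{\le K}_{v_h,\cdot}(X,Y)$ in the shift argument (which is immediate from \cref{def:captd}, by charging the change in $|s-s_h|$) gives $|\TD^{\le K}_{v_h,s}(X,Y)-\TD^{\le K}_{v_h,\tilde s}(X,Y)|\le \beta_{v_h}/4$.

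Next I would prove that each $\widetilde A_{h,s}$ is an $(\alpha_{v_h},2\beta_{v_h})$-approximation of $A_{h,s}$ by a two-sided comparison. For the upper bound, I pick the optimal $\shopt$ in the definition of $A_{h,s}$, let $\tshopt\in S_{v_h}$ be its closest sparsified shift, and chain the inductive $(\alpha_{v_h},\beta_{v_h})$ guarantee with the Lipschitz estimate from the previous paragraph and the bound $\alpha_{v_h}\le 2$; this is the computation already laid out above the lemma. For the lower bound, I take the minimizer $s'_h\in S_{v_h}$ realizing $\widetilde A_{h,s}$, recognize it as a feasible witness for $A_{h,s}$, and apply the inductive hypothesis in the opposite direction together with $\alpha_{v_h}\ge 1$.

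Finally I would apply \cref{cor:precision-sampling} in \cref{alg:ako:line:recovery}: the $\widetilde A_{h,s}$ are $(\alpha_{v_h}, 2\beta_{v_h})=(\alpha_{v_h}, \beta_v\cdot u_{v_h})$-approximations of $A_{h,s}$ (by the choice $\beta_{v_h}=\tfrac12\beta_v u_{v_h}$), so the \textsc{Recover} routine returns a $((1+\epsilon)\alpha_{v_h},\beta_v)$-approximation of $\sum_h A_{h,s}$ with $\epsilon = (4\log n)^{-1}$, and the identity $(1+\epsilon)\alpha_{v_h}\le \alpha_v$ follows from the definition $\alpha_w=2(1-\epsilon)^{d_w}$. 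Taking the minimum with $|I_v|$ and $K-|s|$ in \cref{alg:ako:line:return} is safe since $\TD^{\le K}_{v,s}(X,Y)$ is already bounded by both quantities, so it only shrinks the approximation error. A union bound over the at most $2K+1$ recovery calls (each failing with probability $n\delta\le 0.01\,K^{-1}n^{-3}$) together with the earlier sparsification failure bound keeps the total error probability below $1/n^3$.

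The main conceptual obstacle is the bookkeeping in the two-sided comparison of $\widetilde A_{h,s}$ and $A_{h,s}$: the additive slack must remain $\Oh(\beta_{v_h})$ despite being inflated both by the $(\alpha_{v_h},\beta_{v_h})$-approximation of the child value and by the Lipschitz penalty $2(b_{v_h}+1)|\shopt-\tshopt|$ from passing to the sparsified shift. The chosen sampling rate for $S_{v_h}$ is calibrated so that the latter contributes at most $\beta_{v_h}/4$, which, combined with $\alpha_{v_h}\le 2$, keeps the final additive error at exactly $2\beta_{v_h}=\beta_v u_{v_h}$, as required for \cref{cor:precision-sampling} to be applicable with the target additive error $\beta_v$.
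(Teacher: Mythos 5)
Your proposal matches the paper's own proof step for step: the shift-sparsification error bound via the Lipschitz property of $\TD^{\le K}_{v_h,\cdot}$, the two-sided $(\alpha_{v_h},2\beta_{v_h})$-comparison of $\widetilde A_{h,s}$ with $A_{h,s}$, the application of \cref{cor:precision-sampling} with $\beta_{v_h}=\tfrac12\beta_v u_{v_h}$ and the identity $(1+\epsilon)\alpha_{v_h}\le\alpha_v$, and the concluding union bound over sparsification and recovery failures. The only small addition you make — explicitly noting that the final clamping $\min(\delta_{v,s},|I_v|,K-|s|)$ can only shrink the error since $\TD^{\le K}_{v,s}(X,Y)$ is already bounded by both quantities — is a valid and slightly more careful touch than the paper's write-up, but it does not change the argument.
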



\begin{corollary}\label{cor:ako-correctness}
With probability at least $1-\frac{1}{n^2}$,
the value $\Delta_{root,0}^{\le K}(X,Y)$ at the root $(\alpha_{root},\beta_{root})$-approximates $\TD^{\le K}(X,Y)$.
\end{corollary}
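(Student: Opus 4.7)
The plan is to prove, by induction on the structure of the precision sampling tree $T$ (from leaves up to the root), the stronger statement that with probability at least $1-\frac{N_v}{n^3}$, for every shift $s\in S_v$, the value $\Delta_{v,s}^{\le K}(X,Y)$ computed by \Call{AKO}{$v,\beta_v$} is an $(\alpha_v,\beta_v)$-approximation of $\TD_{v,s}^{\le K}(X,Y)$, where $N_v$ denotes the number of nodes in the subtree $T_v$ rooted at $v$. Applying this to the root (with $N_{root}\le 2n$) immediately yields the corollary, since $2n/n^3 \le 1/n^2$ for all sufficiently large $n$.

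For the base case, I would handle the two situations in which \Call{AKO}{$v,\beta_v$} returns without recursing. If $|I_v|\le \beta_v$, the algorithm returns $\Delta_{v,s}^{\le K}(X,Y)=0$, which is a valid $(\alpha_v,\beta_v)$-approximation because $0 \le \TD_{v,s}^{\le K}(X,Y) \le |I_v| \le \beta_v$ by \cref{def:captd}, so both inequalities in \cref{eq:delta-approx} hold deterministically. If $v$ is an active leaf, the algorithm returns the exact value $\TD_{v,s}^{\le K}(X,Y)$, which is trivially an $(\alpha_v,\beta_v)$-approximation for any $\alpha_v\ge 1$ and $\beta_v\ge 0$. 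In either case, the statement holds with probability $1$, and $N_v\ge 1$ makes the error budget $N_v/n^3$ vacuously satisfied.

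For the inductive step at an internal node $v$, I would invoke the induction hypothesis on each child $v_h$: with probability at least $1 - \sum_{h=0}^{b_v-1} N_{v_h}/n^3$ (by a union bound), all values $\Delta_{v_h,s_h}^{\le K}(X,Y)$ with $s_h\in S_{v_h}$ are $(\alpha_{v_h},\beta_{v_h})$-approximations of $\TD_{v_h,s_h}^{\le K}(X,Y)$. Conditioned on this event, \cref{lem:delta} applies and yields, with probability at least $1-1/n^3$, that every $\Delta_{v,s}^{\le K}(X,Y)$ returned by the call to \Call{Combine}{} is an $(\alpha_v,\beta_v)$-approximation of $\TD_{v,s}^{\le K}(X,Y)$. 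Combining via a union bound gives an overall failure probability of at most $\bigl(1 + \sum_{h=0}^{b_v-1} N_{v_h}\bigr)/n^3 = N_v/n^3$, completing the induction.

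The main subtlety, and the only nontrivial place to be careful, is that \cref{lem:delta} is a conditional statement (it assumes correct approximations at all children), so the randomness used for the recovery step at $v$ (the samples $u_{v_h}$ and the calls to \textsc{Recover}) is independent of the randomness used deeper in the tree. Because precisions $u_{v_h}$ are sampled fresh at each invocation and each child's subtree uses its own independent samples, the events across different subtrees and across depths are independent, which legitimizes the union bound; I would state this explicitly before combining the probabilities. No additional calculations beyond the union bound are required.
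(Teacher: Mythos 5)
Your overall strategy — a bottom-up induction with a union bound over all nodes — is exactly what the paper implicitly does (it gives no explicit proof of the corollary, which is meant to follow from \cref{lem:delta} by this argument). However, there is an arithmetic flaw at the very end: $N_{root} \le 2n$ gives $N_{root}/n^3 \le 2n/n^3 = 2/n^2$, and $2/n^2 > 1/n^2$ for every $n$. Your claim that ``$2n/n^3 \le 1/n^2$ for all sufficiently large $n$'' is simply false. To actually land at $1/n^2$ you need to unpack the constant in \cref{lem:delta}: the per-node recovery failure from \cref{cor:precision-sampling} is at most $3K \cdot n\delta = 0.03/n^3$, and the shift-sparsification failure (\cref{eq:shiftsparse}) is a \emph{global} $1/n^4$ event that should be counted once, not once per node. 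With those constants the union bound gives $2n \cdot 0.03/n^3 + 1/n^4 \le 0.06/n^2 + 1/n^4 < 1/n^2$, which closes the gap.

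A smaller point: the ``subtlety'' you raise is real but your framing of it is off. Union bounds are valid for arbitrary, possibly dependent events, so independence is not what ``legitimizes the union bound.'' What independence \emph{does} buy you is the right to apply \cref{lem:delta} conditionally: the lemma bounds the failure probability of \textsc{Recover} at $v$ over the fresh randomness sampled at $v$ (the $u_{v_h}$'s and the recovery coins), assuming the input approximations are correct. Since that randomness is independent of everything in the children's subtrees, the bound $\Pr[\,\text{recovery at $v$ fails} \mid \text{all children correct}\,] \le 1/n^3$ is justified, and then $\Pr[E_v] \le \Pr[\bigcup_h E_{v_h}] + \Pr[\,\text{recovery fails} \mid \text{children correct}\,]$ finishes the inductive step. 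I would phrase the subtlety in those terms rather than as independence enabling the union bound.
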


\subsection{Running Time.} We now analyze the run time of \cref{alg:ako}. First, we start by analyzing the additive accuracy $\beta_v$ at any node $v$.

\begin{lemma}\label{lem:exp-precision}
Let $T$ be a $\Oh(b)$-ary precision sampling tree with additive approximation $\beta_v$ at a node $v$. Then,
\[\Exp[\beta_v^{-1}] = \tfrac1K (\log n)^{\Oh(d_v)} = \tfrac1K \cdot (\log n)^{\Oh(\log_b \log n)}.\]
\end{lemma}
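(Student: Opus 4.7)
The plan is to unroll the recursive definition of $\beta_v$ along the root-to-$v$ path, exploit independence of the sampled precisions, and then invoke the efficiency clause of \cref{cor:precision-sampling}. Specifically, recall that the algorithm sets $\beta_{v_{\text{root}}} = K/1000$, and for every non-root node $w$ with parent $p(w)$, it samples $u_w \sim \hDst((4\log n)^{-1},\, 0.01\cdot K^{-1}\cdot n^{-4})$ independently and sets $\beta_w = \tfrac12 \beta_{p(w)} u_w$. Telescoping along the path $v_{\text{root}} = w_0, w_1, \ldots, w_{d_v} = v$ gives
\[
\beta_v \;=\; \frac{K}{1000 \cdot 2^{d_v}} \,\prod_{i=1}^{d_v} u_{w_i},
\qquad\text{so}\qquad
\beta_v^{-1} \;=\; \frac{1000 \cdot 2^{d_v}}{K} \,\prod_{i=1}^{d_v} u_{w_i}^{-1}.
\]

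Next, I would observe that the variables $u_{w_1},\ldots,u_{w_{d_v}}$ are mutually independent (each sampled fresh in \cref{alg:ako:line:precision} of \cref{alg:ako}), so the expectation factors:
\[
\Exp\!\left[\beta_v^{-1}\right] \;=\; \frac{1000 \cdot 2^{d_v}}{K} \,\prod_{i=1}^{d_v} \Exp\!\left[u_{w_i}^{-1}\right].
\]
Now I would invoke the efficiency guarantee of \cref{cor:precision-sampling}: with $\epsilon = (4\log n)^{-1}$ and $\delta = 0.01\cdot K^{-1}\cdot n^{-4}$, we have $\epsilon^{-2} = O(\log^2 n)$ and $\log(\delta^{-1}) = O(\log n + \log K) = O(\log n)$ (since $K \le n$), hence
\[
\Exp\!\left[u_{w_i}^{-1}\right] \;\le\; \tilde{O}\bigl(\epsilon^{-2}\log^2(\delta^{-1})\bigr) \;=\; (\log n)^{O(1)}.
\]

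Multiplying $d_v$ such bounds yields $\prod_{i=1}^{d_v} \Exp[u_{w_i}^{-1}] \le (\log n)^{O(d_v)}$, and since the factor $2^{d_v}$ is dominated by $(\log n)^{O(d_v)}$ whenever $d_v \ge 1$, we conclude
\[
\Exp\!\left[\beta_v^{-1}\right] \;\le\; \frac{1}{K}\cdot (\log n)^{O(d_v)}.
\]
Substituting the standing bound $d_v \le 2\log_b n$ from the parameter choice gives the second form $\tfrac{1}{K}\cdot (\log n)^{O(\log_b n)}$ stated in the lemma.

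There is no real obstacle here; the only subtlety worth double-checking is the independence of the $u_{w_i}$, which holds by construction since each is drawn at its own recursive call and the $\hDst$ distribution does not depend on the sampled values at ancestors. A minor bookkeeping point is to ensure that $\log(\delta^{-1})$ is indeed $O(\log n)$; this uses $K \le n$, which is implicit throughout since $K$ is the gap threshold for strings of length at most $n$.
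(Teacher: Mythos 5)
Your proof is correct and follows essentially the same route as the paper: unroll $\beta_v^{-1} = \frac{1000\cdot 2^{d_v}}{K}\prod_{i} u_{w_i}^{-1}$ along the root-to-$v$ path, factor the expectation using independence of the $u_{w_i}$, bound each $\Exp[u_{w_i}^{-1}]$ via the efficiency clause of \cref{cor:precision-sampling}, and finish with $d_v \le 2\log_b n$. (Incidentally, the lemma statement as printed has a typo, reading $\log_b\log n$ where it should be $\log_b n$; both your derivation and the paper's own proof arrive at $\tfrac1K(\log n)^{\Oh(\log_b n)}$, which is the correct form.)
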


\begin{proof}

 Applying \cref{cor:precision-sampling} we obtain that, for any sample $u_v$ from $\hDst(\epsilon=(4\log n)^{-1}, \delta=0.01\cdot K^{-1} \cdot n^{-4})$,
\begin{equation*}
    \Exp[u_v^{-1}] \leq \widetilde{O}(\epsilon^{-2} \log(\delta^{-1}) \log n) = \polylog(n).
\end{equation*}
Since $\beta_v = (K \cdot u_{v_1} \dots u_{v_{d_v}})/(1000\cdot 2^{d_v})$ where $v_0, v_1, \dots, v_{d_v} = v$ is the path from root to node $v$, using the facts that $u_v$'s are sampled independently and $d_v \leq d$ we get:
\begin{equation}
    \Exp[\beta_v^{-1}] = \frac{1000}K \cdot \prod_{i=1}^{d_{v}} 2\cdot \Exp[u_{v_i}^{-1}] \leq \frac1K \cdot (\log n)^{\Oh(d_v)}  \leq \frac1K \cdot (\log n)^{\Oh(\log_b n)}.\qedhere
\end{equation}
\end{proof}

Note that the recursive calls do not necessarily reach every node in the tree. Some nodes are trivially solved by pruning; thus, their children are never explored. We say a node $v$ is \emph{active} all the ancestors $w$ of $v$ (including $v$) satisfy $|I_w|>\beta_w$. The following lemma bounds the expected number of active nodes in the computation tree by~$n / K \cdot (\log n)^{\Oh(\log_b n)}$. 

\begin{lemma}
\label{lem:node}
    The expected number of active nodes  in \cref{alg:ako} is $\frac{n}{K}(\log{n})^{\Oh(\log_b{n})}$.
\end{lemma}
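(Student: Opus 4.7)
The plan is a straightforward first-moment calculation combining Markov's inequality with \cref{lem:exp-precision}.

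First, for a fixed node $v$, I would upper bound the probability that $v$ is active. By definition, if $v$ is active then in particular $|I_v| > \beta_v$, i.e., $\beta_v^{-1} > |I_v|^{-1}$. Since $\beta_v^{-1}$ is a nonnegative random variable, Markov's inequality yields
\[
\Pr[v \text{ active}] \;\le\; \Pr\bigl[\beta_v^{-1} > |I_v|^{-1}\bigr] \;\le\; |I_v| \cdot \Exp[\beta_v^{-1}].
\]
Plugging in the bound $\Exp[\beta_v^{-1}] \le \tfrac{1}{K}(\log n)^{\Oh(\log_b n)}$ from \cref{lem:exp-precision}, we obtain
\[
\Pr[v \text{ active}] \;\le\; \frac{|I_v|}{K}\,(\log n)^{\Oh(\log_b n)}.
\]

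Next, I would sum over all nodes of $T$. Fix any depth $\ell\in [0\dd d]$. The intervals $I_v$ associated with the nodes at depth $\ell$ are pairwise disjoint subsets of $[0\dd n)$, so $\sum_{v\text{ at depth }\ell} |I_v| \le n$. Since $T$ has depth $d \le 2\log_b n$, summing across depths gives
\[
\sum_{v\in T} |I_v| \;\le\; (d+1)\cdot n \;=\; \Oh(n\log_b n).
\]
By linearity of expectation,
\[
\Exp[\#\text{active nodes}] \;=\; \sum_{v\in T} \Pr[v \text{ active}] \;\le\; \frac{1}{K}(\log n)^{\Oh(\log_b n)} \sum_{v\in T} |I_v| \;\le\; \frac{n}{K}(\log n)^{\Oh(\log_b n)},
\]
which absorbs the extra $\log_b n$ factor into the $(\log n)^{\Oh(\log_b n)}$ term and gives the claimed bound.

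I don't anticipate a real obstacle: the only mildly subtle point is that ``$v$ active'' technically requires every ancestor (not just $v$) to have $|I_w| > \beta_w$, but dropping the ancestor constraints only weakens the event, so the Markov bound on $v$ alone suffices. The $\beta_v$'s across different nodes are correlated (they share common factors $u_{v_1},\dots,u_{v_{d_v}}$ along the root-to-$v$ path), but linearity of expectation doesn't require independence, so this causes no issue.
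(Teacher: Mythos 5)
Your proof is correct and follows essentially the same approach as the paper: both apply Markov's inequality to bound $\Pr[v \text{ active}] \le |I_v|\cdot \Exp[\beta_v^{-1}]$, invoke \cref{lem:exp-precision}, and then sum over nodes using the fact that the intervals at each depth partition $[0\dd n)$. The only cosmetic difference is that you use the uniform bound $(\log n)^{\Oh(\log_b n)}$ for all depths and sum $|I_v|$ over the whole tree at once, whereas the paper keeps the depth-dependent exponent $(\log n)^{\Oh(d')}$ and sums level by level; both give the same result.
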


\begin{proof}
We use the fact that a node $v$ is active only if $|I_v| > \beta_v$.
   Using Markov's inequality we obtain that,
\begin{equation}
\label{eq:active}
    \Pr[\text{$v$ is active}] = \Pr[\beta_v < |I_v|] = \Pr\left[\beta_v^{-1} > \frac{1}{|I_v|}\right] \leq |I_v| \cdot \Exp[\beta_v^{-1} ] \leq \frac{|I_v|}{K} \cdot (\log n)^{\Oh(d_v)}.
\end{equation}
So the expected number of active nodes at depth $d'$ is $\sum_v |I_v| / K \cdot (\log n)^{\Oh(d')} = n / K \cdot (\log n)^{\Oh(d')}$. Here the sum is taken over all nodes $v$ at depth $d'$; hence, we have $\sum_v |I_v| = |X| = n$. Therefore, summing over all depths and noting that the deepest level is $\Oh(\log_{b}{n})$, we get the lemma.
\end{proof}

We are now ready to show the running time bound.

\begin{lemma}\label{lem:ako-time}
The expected running time of \cref{alg:ako} is $nb^2 \cdot (\log n)^{\Oh(\log_b n)}$.
\end{lemma}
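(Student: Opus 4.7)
The plan is to bound the expected time at each visited node of $T$ separately and then sum over the $\Oh(n)$ nodes of the tree. The analysis is essentially an accounting exercise once the per-node cost and the expected size of $S_w$ are understood; no step is a real obstacle.

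First, I would pin down the time spent at an active internal node $v$. The dominant contributions come from \cref{alg:ako:line:range-minimum}, which invokes \cref{lem:range-minimum} for each child $v_h$ at cost $\Ot(|S_v|+|S_{v_h}|)$, and \cref{alg:ako:line:recovery}, which runs \textsc{Recover} from \cref{cor:precision-sampling} on $b_v$ inputs for each $s\in S_v$ at a total cost of $\Ot(b_v\cdot |S_v|)$. Together these give per-node work $\Ot\bigl(\sum_{h=0}^{b_v-1}(|S_v|+|S_{v_h}|)\bigr)$. Pruned nodes and active leaves contribute at most $\Oh(1)$ or $\Ot(|S_v|)$ work and are trivially dominated.

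Next, I would bound $\Exp[|S_w|]$ for a generic node $w$. Each shift $s\in[-K\dd K]$ enters $S_w$ independently with probability at most $48\beta_w^{-1}(b_w+1)\ln n$ conditional on $\beta_w$, so unconditionally
\[\Exp[|S_w|]\le 48(2K{+}1)(b_w+1)\ln n \cdot \Exp[\beta_w^{-1}] \;=\; b\cdot (\log n)^{\Oh(\log_b n)},\]
where the last equality applies \cref{lem:exp-precision}. Since $b_v,b_{v_h}=\Oh(b)$, the expected per-node time is thus $b^2\cdot(\log n)^{\Oh(\log_b n)}$.

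Finally, I would sum over all $\Oh(n)$ nodes of $T$, which is a valid upper bound on the set of nodes that the algorithm ever visits. This yields total expected running time $nb^2\cdot(\log n)^{\Oh(\log_b n)}$. The only mildly subtle point is that whether $w$ is visited depends on the $\beta$-values at its ancestors, introducing correlations with $|S_w|$; however, because the $(nb^2)$ bound is lossy compared to the refinement available via \cref{lem:node}, it suffices to drop the visit indicator and use the unconditional $\Exp[|S_w|]$ throughout, after which linearity of expectation immediately closes the argument.
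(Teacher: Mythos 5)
Your proof is correct and follows essentially the same route as the paper: the same per-node cost bound $\Ot\bigl(\sum_{h=0}^{b_v-1}(|S_v|+|S_{v_h}|)\bigr)$ via \cref{lem:range-minimum} and \cref{cor:precision-sampling}, the same expected-size bound $\Exp[|S_w|]\le \Oh(K b \log n)\cdot\Exp[\beta_w^{-1}]$ via \cref{lem:exp-precision}, and then summation over $\Oh(n)$ nodes. Your explicit remark on dropping the visit indicator to avoid correlations is a small refinement the paper leaves implicit, but the underlying argument is identical.
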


\begin{proof}
Consider the execution of \cref{alg:ako} at node $v$. If $v$ is a leaf, the total time taken in \crefrange{alg:ako:line:test-1-condition} {alg:ako:line:test-1} is $\Ot(|S_v|)$. If $v$ is an internal node,
we bound the running time of a single execution of \cref{alg:ako} (ignoring the recursive calls) by $\Ot(\sum_{h=0}^{b_v-1}(|S_v|+|S_{v_h}|))$.

\crefrange{alg:ako:line:test-2-condition}{alg:ako:line:test-2} and \cref{alg:ako:line:return} take $\Oh(|S_v|)$ time to produce the output.
In \crefrange{alg:ako:line:iter-children}{alg:ako:line:recursion} the for-loop runs for $b_v$ iterations. In each iteration, we sample a precision in \cref{alg:ako:line:precision}, and perform some recursive computation that we ignore for now. Therefore, ignoring the recursive computation, \crefrange{alg:ako:line:iter-children}{alg:ako:line:recursion} takes time 
$\Oh(b_v)$.

In \cref{alg:ako:line:range-minimum} we apply \cref{lem:range-minimum} 
running in time $\Oh(|S_v| + |S_{v_h}|)$ by taking $S=S_v $ and $S' = S_{v_h}$ for each $v_h$. The total time spent is thus 
$\Ot(\sum_{h=0}^{b_v-1}(|S_v|+|S_{v_h}|))$.


In \crefrange{alg:ako:line:iter-output}{alg:ako:line:recovery} the for-loop runs for $\Oh(|S_v|)$ iterations and, in each iteration, we apply the algorithm from \cref{cor:precision-sampling} with the parameters $\epsilon = \Omega(\log n)$ and $\delta \geq n^{-\Oh(1)}$. Each execution of the loop runs in time $\Oh(b_v \epsilon^{-2} \log(\delta^{-1})) = b_v \cdot (\log n)^{\Oh(1)}$. Therefore, \crefrange{alg:ako:line:iter-output}{alg:ako:line:recovery} take time 
$\Ot(b_v |S_v|)$.

Therefore, the total time taken by \cref{alg:ako} for execution at node $v$ is 
$\Ot(\sum_{h=0}^{b_v-1}(|S_v|+|S_{v_h}|))$. Since $\Exp[|S_w|]\le 48K (b_w+1) \ln n\cdot \Exp[\beta_w^{-1}]=b \cdot (\log n)^{\Oh(\log_b n)}$ holds by \cref{lem:exp-precision} for every node $w$, this is $b^2\cdot (\log n)^{\Oh(\log_b n)}$ per node. Across all nodes, the expected running time of \cref{alg:ako} is $nb^2 \cdot (\log n)^{\Oh(\log_b n)}$.
\end{proof}

\begin{remark}
\label{remark:staticvsdyn}
We could have also bounded the expected running time of \cref{alg:ako} by $\Ot(K \cdot b \cdot \sum_{v \text{is active}} \Exp[\beta_v^{-1} \mid \beta_v < |I_v|])$, which is $nb \cdot (\log n)^{\Oh(\log_b n)}$ using Lemma~\ref{lem:exp-precision} and \cref{eq:active}.
This gives a slightly better bound which is valid even if we had not sparsified the shifts. 
Moreover, the bound in \cref{lem:ako-time} remains valid even if we do not prune the inactive nodes. 
This also illustrates for the static algorithm, it suffices to either prune inactive nodes or sparsify the shifts. Only in the case of dynamic algorithms, we need to utilize both as will be evident later.
\end{remark}




\section{Dynamic Edit Distance with Substitutions}\label{sec:subXsubY}
We start with a simple dynamic scenario where 
in every time step a character in either the string $X$ or $Y$ is substituted by another character from the alphabet $\Sigma$. As part of preprocessing, we are given the precision sampling tree $\TD^{\leq K}(X,Y)$ along with $X$, $Y$, $|X|=|Y|=n$, the set of allowed shifts $S_v$ for all $v \in T$, $\tilde{A}_{v,s}$ for all $v \in T$ and $s \in S_v$, and the generated random variables $\{\beta_v, u_v\}$ for all $v \in T$ according to the distributions in Section~\ref{sec:ako-static} (See Fixing the parameters in Section~\ref{sec:ako-static}). Our goal is to maintain $\TD^{\leq K}(X,Y)$ with every update so that we can distinguish $\ED(X,Y) < k$ and $\ED(X,Y) \geq  K$ for $K=(\log{n})^{\Oh(\frac{1}{\epsilon})}k$.

\paragraph*{Algorithm.}
The main idea of our algorithm is simple. Upon a substitution at $X[i]$ or $Y[i]$, we use parts of \cref{alg:ako} to recompute all the (potentially) affected values $\Delta_{v,s}^{\le K}(X,Y)$ across active nodes $v\in T$ and shifts $s\in S_v$.

The following notion of \emph{relevance} characterizes the shifts for which $\Delta_{v,s}^{\le K}(X,Y)$ will be recomputed: 
We will show that a substitution can affect $\TD^{\le K}_{v,s}(X,Y)$ only if $s$ is relevant.
If $v$ is not a leaf, then, in order to recompute $\Delta_{v,s}^{\le K}(X,Y)$ for relevant shifts, we will need to access some values $\Delta_{v_h,s_h}^{\le K}(X,Y)$
at the children $v_h$ of $v$. To capture the scope of the necessary shifts $s_h$, we introduce the notion of \emph{quasi-relevance}.
\defrelevance
Every active node maintains the values $\Delta_{v,s}^{\le K}(X,Y)$ (which $(\alpha_v,\beta_v)$-approximate $\TD_{v,s}^{\le K}(X,Y)$) for all shifts $s \in S_v$ and can return any of them in $\Oh(1)$ time upon request. Any non-active node with an active parent can also return $\Delta_{v,s}^{\le K}(X,Y)=0$ in $\Oh(1)$ time upon request. 

Upon substitution at $X[i]$ for any $i \in [1 \dd |X|]$, the algorithm executes a {\sc Substitution-X} procedure, initially called at the root, that, at given node $v$, first recurses on all active children of $v$ that are in scope of the substitution at $X[x]$, and then updates $\Delta_{v,s}^{\le K}(X,Y)$ for all $s\in \rsX{i}{v}\cap S_v$. If $v$ is an internal nodes, this requires using the {\sc Combine} procedure whose 
outputs are the values $\Delta_{v,s}^{\le K}(X,Y)$ with $s\in \rsX{i}{v}\cap S_v$ and whose inputs are the values $\Delta_{v_h,s_h}^{\le K}(X,Y)$ for $s_h \in \qsX{i}{v}\cap S_{v_h}$ across all the children $v_h$ of $v$.

Upon substitution at $Y[i]$ for any $i \in [1 \dd |Y|]$, the algorithm executes a {\sc Substitution-Y} procedure implemented analogously.

\begin{algorithm}[t]
\caption{Dynamic-Substitution-Only} \label{alg:dyn-sub}
\begin{algorithmic}[1]
\medskip
\Statex \textbf{Input:}
The location $i$ of the substitution, a node $v\in T$.
\Statex \textbf{Output:} The updated values $\Delta_{v,s}^{\le K}(X,Y)$.
\medskip
\Statex \hrule
\Procedure{Substitution-$X$}{$i, v$}\label{alg:dyn-sub:proc-subx}
\If{$|I_v| > \beta_v$ \textbf{and} $\rsX{i}{v}\ne \emptyset$} \label{alg:dyn-sub:terminate-x}
\If{$v$ is a leaf}
    \State\Return $\Delta_{v,s}^{\le K}(X,Y) = \TD_{v,s}^{\le K}(X,Y)$ for all $s \in S_{v}\cap \rsX{i}{v}$ \label{alg:dyn-sub:leaf-x}
\EndIf
\State Recursively run \Call{Substitution-$X$}{$i,v_h$} for $h\in [0\dd b_v)$
\State \Return \Call{Combine}{$v,\rsX{i}{v}\cap S_v,\{(h,s_h,\Delta_{v_h,s_h}^{\le K}(X,Y)): h\in [0\dd b_v), s_h\in \qsX{i}{v}\cap S_{v_h}\}$}\label{alg:dyn-sub:combine-x}
\EndIf
\EndProcedure
\Statex \hrule
\Procedure{Substitution-$Y$}{$i, v$}
\label{alg:dyn-sub:proc-suby}
\If{$|I_v| > \beta_v$ \textbf{and} $\rsY{i}{v}\ne \emptyset$} \label{alg:dyn-sub:terminate-y}
\If{$v$ is a leaf}
    \State\Return $\Delta_{v,s}^{\le K}(X,Y) = \TD_{v,s}^{\le K}(X,Y)$ for all $s \in S_{v}\cap \rsY{i}{v}$  \label{alg:dyn-sub:leaf-y}
\EndIf
\State Recursively run \Call{Substitution-$Y$}{$i,v_h$} for $h\in [0\dd b_v)$
\State \Return \Call{Combine}{$v,\rsY{i}{v}\cap S_v,\{(h,s_h,\Delta_{v_h,s_h}^{\le K}(X,Y)): h\in [0\dd b_v), s_h\in \qsY{i}{v}\cap S_{v_h}\}$}\label{alg:dyn-sub:combine-y}
\EndIf
\EndProcedure
\end{algorithmic}
\end{algorithm}

\paragraph*{Analysis.}

Notice that a substitution either in $X$ or $Y$ does not alter the structure of the precision sampling tree; we just update the values $\Delta_{v,s}^{\le K}(X,Y)$ for active nodes $v\in T$ and relevant shifts $s\in S_v$. As for correctness, our invariant is that (with high probability), each value $\Delta_{v,s}^{\le K}(X,Y)$ 
is an $(\alpha_v,\beta_v)$-approximation of $\TD^{\le K}_{v,s}(X,Y)$. Thus, due to \cref{lem:delta}, it suffices to prove that $\TD^{\le K}_{v,s}(X,Y)$ may change only if the shift $s$ is relevant. As for the update time, the main effort is to bound the expected number of active nodes in scope of a given substitution and the number of quasi-relevant shifts at each such node.

As a warm-up, we consider a simple case of substitutions in $X$.
\paragraph{Substitution in $X$.}
\begin{lemma}\label{lem:subX}
Upon any substitution in $X$, we can update the values $\Delta_{v,s}^{\le K}(X,Y)$
across all active nodes $v\in T$ and shifts $s \in S_v$ in expected time $b^2(\log{n})^{\Oh(\log_b{n})}$.
\end{lemma}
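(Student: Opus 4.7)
\textbf{Proof proposal for \cref{lem:subX}.}
The plan is to show that a substitution at position $i$ in $X$ has a very restricted ``footprint'' in the precision sampling tree, and then bound the cost of rebuilding only on this footprint, reusing the static analysis of \cref{alg:ako} node-by-node. By \cref{def:relevant}, $\rsX{i}{v}$ is non-empty if and only if $i\in I_v$, in which case $\rsX{i}{v}=\qsX{i}{v}=[-K\dd K]$. Hence the set of nodes in scope is exactly the root-to-leaf path $v_0=\text{root},v_1,\dots,v_L$ whose leaf has $I_{v_L}=\{i\}$; in particular, there are at most $L+1 \le d+1 = O(\log_b n)$ such nodes. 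By \cref{cor:locality}, for every node $v$ not on this path and every shift $s\in [-K\dd K]$, the value $\TD^{\le K}_{v,s}(X,Y)$ is unchanged by the substitution, so the stored $\Delta^{\le K}_{v,s}(X,Y)$ remains a valid $(\alpha_v,\beta_v)$-approximation and needs no update.

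For the nodes on the path I would process them bottom-up, mirroring \cref{alg:ako}. Inactive nodes with $|I_v|\le\beta_v$ still admit the all-zero approximation, so \cref{alg:dyn-sub:terminate-x} correctly skips them (the invariant that $0$ is an additive $\beta_v$-approximation depends only on $|I_v|$ and $\beta_v$, both preserved under substitutions in $X$). At an active leaf $v_L$, I directly recompute $\TD^{\le K}_{v_L,s}(X,Y)=\min(\ED(X[i],Y[i+s]),K-|s|)$ for every $s\in S_{v_L}$ using constant-time character access; this costs $O(|S_{v_L}|)$. At an active internal node $v$ on the path, I invoke \Call{Combine}{} with target shift set $S_v$ (since $\rsX{i}{v}\cap S_v=S_v$) and input shift sets $S_{v_h}$ for each child (since $\qsX{i}{v}\cap S_{v_h}=S_{v_h}$), reusing the freshly updated values at the one child on the path and the unchanged, previously-stored values at the other $b_v-1$ children.

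Correctness of the $(\alpha_v,\beta_v)$-approximation at each updated node follows verbatim from the inductive argument used for \cref{lem:delta} in \cref{subsec:correctness}: the sampled precisions $u_{v_h}$, the accuracy parameters $\beta_{v_h}$, and the allowed-shift sets $S_{v_h}$ are all preserved by the substitution, so the Precision Sampling Lemma applies with the same parameters as in the static case, giving the claimed guarantee with probability at least $1-1/n^3$ per node. A union bound over the $O(\log_b n)$ updated nodes keeps the total failure probability at $1/\mathrm{poly}(n)$.

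For the running time, by the cost analysis in the proof of \cref{lem:ako-time}, a single execution of the per-node logic runs in $\tilde O\bigl(\sum_{h=0}^{b_v-1}(|S_v|+|S_{v_h}|)\bigr)$ time, whose expectation is $b^2\cdot(\log n)^{\Oh(\log_b n)}$ because \cref{lem:exp-precision} gives $\Exp[|S_w|]\le 48K(b_w+1)\ln n\cdot \Exp[\beta_w^{-1}] = b\cdot(\log n)^{\Oh(\log_b n)}$ for every node $w$. Summing over the at most $d+1=O(\log_b n)$ nodes on the path and absorbing the logarithmic factor into $(\log n)^{\Oh(\log_b n)}$ yields the claimed expected update time of $b^2\cdot(\log n)^{\Oh(\log_b n)}$. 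The main technical point to watch out for is that the bound on $\Exp[|S_w|]$ is only useful because the $S_w$ at unrelated nodes are independent of the event ``$v$ is on the update path'' (the path is determined by $i$ alone), so we may freely apply linearity of expectation across the path nodes; this is straightforward but worth stating explicitly.
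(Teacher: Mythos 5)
Your proposal is correct and takes essentially the same approach as the paper: identify that for a substitution at $X[i]$, $\rsX{i}{v}=\qsX{i}{v}$ is either $[-K\dd K]$ (when $i\in I_v$) or $\emptyset$, so the nodes in scope form the root-to-leaf path to leaf $i$; justify that nothing off the path needs updating; charge each node on the path the same $\Ot\bigl(\sum_h(|S_v|+|S_{v_h}|)\bigr)$ cost as in the static analysis with expectation $b^2(\log n)^{\Oh(\log_b n)}$ via \cref{lem:exp-precision}; and sum over the $O(\log_b n)$ path nodes. The paper derives the off-path unchangedness directly from inspecting \eqref{eq:capped-td} while you invoke \cref{cor:locality}, but these are equivalent justifications; your closing remark about the update path being determined by $i$ alone (hence independent of the $S_w$'s) is a sound and slightly more careful articulation of why linearity of expectation applies, but it does not change the argument.
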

\begin{proof}
Suppose the $i$-th character in $X$ is substituted. An inspection of~\eqref{eq:capped-td} reveals that $\TD^{\le K}_{v,s}(X,Y)$ does not depend on any character of $X$ outside $X_v$. This is consistent with \cref{def:relevant}, which reduces to $\qsX{i}{v}=\rsX{i}{v}=[-K\dd K]$ if $i\in I_v$ and $\qsX{i}{v}=\rsX{i}{v}=\emptyset$ otherwise. 
In other words, it is indeed true that updating $\Delta_{v,s}^{\le K}(X,Y)$ for $s\in \rsX{i}{v}\cap S_v$ is sufficient to maintain our invariants.
Thus, the {\sc Substitute-X} procedure is correct.

As for the running time analysis, observe that all nodes $v$ satisfying $i\in I_v$ lie on a single path $P$ connecting the root node $v_{root}$ to the $i$-th leaf node (from the left) in $T$.
Consider a single execution of procedure {\sc Substitution-$X$}
\cref{alg:dyn-sub:proc-subx} at node $v \in P$ (at other nodes, the procedure fails the test in \cref{alg:dyn-sub:terminate-x}).
If $v$ is a leaf, then the cost of executing \cref{alg:dyn-sub:leaf-x} is $\Ot(|S_v|)$.
Otherwise, due to $\qsX{i}{v}=\rsX{i}{v}=[-K\dd K]$, the cost of executing \cref{alg:dyn-sub:combine-x} is 
$\Ot(\sum_{h=0}^{b_v-1}(|S_v|+|S_{v_h}|))$.
Since $\Exp[|S_w|]\le 48K (b_w+1) \ln n\cdot \Exp[\beta_w^{-1}]=b \cdot (\log n)^{\Oh(\log_b n)}$ holds by \cref{lem:exp-precision} for every node $w$, this is $b^2\cdot (\log n)^{\Oh(\log_b n)}$ expected per node.
Across all $\Oh(\log n)$ nodes on the path $P$, the total expected time is still $b^2(\log{n})^{\Oh(\log_b{n})}$.
\end{proof}

\paragraph{Substitution in $Y$.}
We now consider the case when the $i$-th character in $Y$ is substituted. Here we observe that the set of nodes that need to be updated does not necessarily lie on a single path in $T$ as in the previous case. This is because of the asymmetric nature of the computation on the precision sampling tree w.r.t the strings $X$ and $Y$. However, we show that the number of quasi-relevant shifts is $\Oh(K\log_b n)$ (across all the nodes) 
and that the set of active nodes within the scope of a substitution of $Y[i]$ forms a connected subtree of $T$ of expected size $n^{\Oh(\epsilon)}$.

\begin{lemma}\label{lem:locality}
    Consider a node $v$ with $I_v = [i_v\dd j_v]$ of a precision sampling tree $T$ with $n$ leaves.
    Moreover, consider strings $X,X'\in \Sigma^n$ and $Y,Y'\in \Sigma^*$,
    as well as shifts $s,s'\in \mathbb{Z}$ and a threshold $t\in \mathbb{Z}_{\ge 0}$.
    If $X[i_v\dd j_v]=X'[i_v\dd j_v]$ and $Y[i_v+s-t\dd j_v+s+t]=Y'[i_v+s'-t\dd j_v+s'+t]$,
    then $\TD^{\le K}_{v,s'}(X,Y')\le t$ if and only if $\TD^{\le K}_{v,s}(X,Y)\le t$.
  \end{lemma}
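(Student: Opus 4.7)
The plan is to proceed by induction on the depth of the subtree rooted at $v$, exploiting the symmetry of the hypothesis under $(X,Y,s)\leftrightarrow (X',Y',s')$ to reduce to a single implication: assume $\TD^{\le K}_{v,s}(X,Y)\le t$ and derive $\TD^{\le K}_{v,s'}(X,Y')\le t$. The central move will be shift translation: given an optimal choice of child shifts $(s_h^*)_h\subseteq [-K\dd K]$ realizing the recurrence \eqref{eq:capped-td} on the left-hand side, I define the candidate shifts $s_h':=s_h^*+(s'-s)$ on the $(X,Y',s')$ side. Because $|s-s_h^*|=|s'-s_h'|$, the shift-difference penalty $2|s-s_h^*|$ is preserved exactly, so the only remaining task is to transport the leaf contents from $Y$ to $Y'$, which is precisely what the $Y$-window hypothesis of the lemma provides.

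At the base case, $v$ is a leaf with $I_v=\{i_v\}$, and a direct case split suffices: for $t\ge 1$ both capped tree distances are trivially $\le 1\le t$, and for $t=0$ the first hypothesis gives $X[i_v]=X'[i_v]$, the $Y$-window hypothesis specializes to $Y[i_v+s]=Y'[i_v+s']$, and the assumption $\TD^{\le K}_{v,s}(X,Y)=0$ (together with $K-|s|>0$) yields $X[i_v]=Y[i_v+s]$, whence $\TD^{\le K}_{v,s'}(X,Y')=0$. For the inductive step at an internal node, set $t_h:=\TD^{\le K}_{v_h,s_h^*}(X,Y)$, so that $t_h+2|s-s_h^*|\le t$ and in particular $|s-s_h^*|\le(t-t_h)/2$. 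The main bookkeeping step is to verify the hypothesis of the lemma at each child $v_h$ with shifts $(s_h^*,s_h')$ and threshold $t_h$: using $I_{v_h}\subseteq I_v$ and the bound on $|s-s_h^*|$, one checks that the required windows $[i_{v_h}+s_h^*-t_h\dd j_{v_h}+s_h^*+t_h)$ and $[i_{v_h}+s_h'-t_h\dd j_{v_h}+s_h'+t_h)$ sit inside the windows $[i_v+s-t\dd j_v+s+t)$ and $[i_v+s'-t\dd j_v+s'+t)$ given by the hypothesis at $v$. The inductive hypothesis then yields $\TD^{\le K}_{v_h,s_h'}(X,Y')\le t_h$, and plugging back into the recurrence at $v$ gives
\[\TD^{\le K}_{v,s'}(X,Y')\le \sum_h\bigl(\TD^{\le K}_{v_h,s_h'}(X,Y')+2|s'-s_h'|\bigr)\le \sum_h(t_h+2|s-s_h^*|)\le t.\]

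The main obstacle will be the shift-dependent cap $K-|s|$ built into \cref{def:captd}, which interacts awkwardly with the translation trick in two ways: if the bound $\TD^{\le K}_{v,s}(X,Y)\le t$ is achieved through the cap rather than the recurrence, there is no witness $(s_h^*)_h$ to translate; and even when a witness exists, the translated shifts $s_h'$ may land outside $[-K\dd K]$, which is the range permitted in the recurrence. I plan to sidestep both obstacles by first proving the uncapped version (with $\TD$ in place of $\TD^{\le K}$ and shifts ranging over all of $\mathbb{Z}$), where the translation argument runs without obstruction, and then to invoke \cref{lem:captd} to transfer back to the capped statement in the sub-cap regime; the cap-dominated regime $K-|s|\le t$ is handled separately by noting that it forces a one-sided bound on $|s|$ which, together with the fact that the window $Y'[i_v+s'-t\dd j_v+s'+t]$ must lie within $Y'$, constrains $|s'|$ symmetrically and activates the cap on the target side as well.
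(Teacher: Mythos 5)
Your uncapped core is sound and, if anything, \emph{more} complete than the paper's own argument. The shift-translation move $s_h' := s_h^* + (s'-s)$ (which preserves $|s'-s_h'|=|s-s_h^*|$ and lets the child windows at offset $c-a$ inside the parent window transport verbatim from $Y$ to $Y'$) is precisely what is needed to handle $s\ne s'$. The paper's written proof only handles $s=s'$: it reuses the \emph{same} witness shift $s_h$ to bound $A'_{h,s}$ and concludes $Y[i_v+s_h-t_h\dd j_v+s_h+t_h]=Y'[i_v+s_h-t_h\dd j_v+s_h+t_h]$, which presupposes the two $Y$-windows coincide. Your argument supplies the missing translation step.

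The genuine gap is the last clause of your plan, where you claim that $K-|s|\le t$ together with the $Y'$-window lying in bounds forces $K-|s'|\le t$. Nothing in the hypotheses ties $|s|$ to $|s'|$: the windows $Y[i_v+s-t\dd j_v+s+t]$ and $Y'[i_v+s'-t\dd j_v+s'+t]$ need only carry identical contents, and both can be in bounds for arbitrary $s,s'$ once $|Y|,|Y'|$ are large enough. Worse, the capped statement as literally written is false precisely in this regime: take $v$ a leaf with $I_v=\{i_v\}$, $t=0$, $s=K$, $s'=0$, with $X[i_v]\ne Y[i_v+K]=Y'[i_v]$; then $\TD^{\le K}_{v,s}(X,Y)=\min(1,\,K-K)=0\le t$ while $\TD^{\le K}_{v,s'}(X,Y')=\min(1,\,K)=1>t$. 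So the case split in your base case (which already has to assume $K-|s|>0$) cannot be closed. The correct, provable version is the uncapped one — $\TD_{v,s}(X,Y)\le t$ iff $\TD_{v,s'}(X,Y')\le t$ under the same window hypothesis — and this is exactly what the paper's downstream use in \cref{lem:reuse} extracts: it derives $\min(\TD_{v,s}(X,Y),K-|s|)=\min(\TD_{v,s+s_i}(X,X),K-|s|)$, with the cap applied symmetrically at the same threshold $K-|s|$, and only afterwards converts to $\TD^{\le K}$ on the left and $\TD^{\le 3K}$ on the right using the inequality $3K-|s+s_i|\ge K-|s|$. Restate and prove the lemma for uncapped $\TD$, and the cap-dominated case disappears.
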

    \begin{proof}
  Let us start with a proof that $\TD_{v,s}(X',Y')\le t$ if $\TD_{v,s}(X,Y)\le t$.
  We proceed by induction on the height of $v$.
  First, suppose that $v$ is a leaf, that is, $I_v = [i_v\dd i_v]$.
  Due to $t \ge 0$, we have $X[i_v]=X'[i_v]$ and $Y[i_v+s] = Y'[i_v+s]$.
  Consequently, $\TD_{v,s}(X',Y') = \ED(X'[i_v],Y'[{i_v+s}]) = \ED(X[i_v], \allowbreak Y[{i_v+s}]) = \TD_{v,s}(X,Y)$,
  Next, suppose that $v$ is an internal node with $b_v$ children $v_0,\ldots,v_{b_v-1}$.
  Recall that $\TD_{v,s}(X,Y)=\sum_{h=0}^{b_v-1} A_{h,s}$, where $A_{h,s}=\min_{s_h\in [-K\dd K]} \left(\TD_{v_h,s_h}(X,Y)+2|s-s_h|\right)$. The value $\TD_{v,s}(X,Y')$ can be expressed analogously using $A'_{h,s}$.
  Since $\TD_{v,s}(X,Y)\le t$, we must have $A_{h,s}\le t$ for every $h\in [0\dd b_v)$.
  Let us fix a shift $s_h$ such that $A_{h,s}=\TD_{v_h,s_h}(X,Y)+2|s-s_h|$ and denote $t_h = \TD_{v_h,s_h}(X,Y)$.
  Observe that $t_h \le t - 2|s-s_h| \le t - |s-s_h|$.
  Consequently, $[s_h-t_h\dd s_h+t_h]\sub [s-t\dd s+t]$ and, in particular, $Y[i_v+s_h-t_h\dd j_v+s_h+t_h]=Y'[i_v+s_h-t_h\dd j_v+s_h+t_h]$.
  Since $I_{v_h}\sub I_v$, we can use the inductive hypothesis, $\TD_{v_h,s_h}(X',Y')\le t_h = \TD_{v_h,s_h}(X,Y)$, which implies $A'_{h,s}\le A_{h,s}$. This inequality is valid for all $h\in[0\dd b_v)$, and therefore $\TD_{v,s}(X',Y')\le \sum_{h=0}^{b_v-1} A'_{h,s}\le  \sum_{h=0}^{b_v-1} A_{h,s} = \TD_{v,s}(X,Y)\le t$ holds as claimed.
  
  The proof of the converse implication is symmetric (swap the roles of $(Y',s')$ and $(Y,s)$).
  \end{proof}

\lemlocality
  \begin{proof}
  Let us start with a proof that $\TD^{\le K}_{v,s}(X',Y')\le \TD^{\le K}_{v,s}(X,Y)$.
  If $t = K-|s|$, then $\TD^{\le K}_{v,s}(X',Y') \le K-|s|=t=\TD^{\le K}_{v,s}(X,Y)$ holds by \cref{def:captd}.
  Otherwise, by \cref{lem:captd}, $t=\TD^{\le K}_{v,s}(X,Y)$, and \cref{lem:locality} implies $\TD^{\le K}_{v,s}(X',Y')\le t = \TD^{\le K}_{v,s}(X,Y)$.
  
  Thus, we have $t':=\TD^{\le K}_{v,s}(X',Y') \le \TD^{\le K}_{v,s}(X,Y)$.
  Consequently, $Y'[i_v+s-t'\dd j_v+s+t']=Y'[i_v+s-t'\dd j_v+s+t']$, so the claim above, with the roles of $(X,Y,t)$ and $(X',Y',t')$ swapped,
  implies $\TD^{\le K}_{v,s}(X,Y)\le \TD^{\le K}_{v,s}(X',Y')$, which completes the proof of the lemma.
  \end{proof}

\corlocality
\begin{proof}
    Denote $I_v = [i_v\dd j_v]$. Since $s\notin \rsY{i}{v}$ is not relevant, we have $i\notin [i_v+s+K_v\dd j_v+s+K_v]$.
    Due to $t:=\TD^{\le K}_{v,s}(X,Y)\le K_v$, this means $Y[i_v+s-t\dd j_v+s+t]$ has not been affected by the substitution, and thus, by \cref{lem:locality2},  $\TD^{\le K}_{v,s}(X,Y)$ is not affected either.
\end{proof}

We now use \cref{def:relevant} to characterize shifts quasi-relevant to a substitution at $Y[i]$.

\begin{lemma}\label{lem:subtree}
 The nodes within the scope of a substitution at $Y[i]$ form a connected subtree of $T$. 
 Moreover, the total number of quasi-relevant shifts across all these nodes is $\Oh(K\cdot d)$, where $d$ is the height of $T$.
\end{lemma}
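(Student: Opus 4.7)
I would split the lemma into two independent parts: the connected-subtree structure and the $O(K d)$ bound on quasi-relevant shifts.

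For connectivity, the strategy is to show that being in scope is upward-closed in $T$. Suppose $v$ is in scope with parent $u$. Then $I_v \sub I_u$ and, because $K_w = \min(K, |I_w|)$ is monotone in $|I_w|$, we have $K_u \ge K_v$. Hence, for any $s \in \rsY{i}{v}$, the band $[i-s-K_u\dd i-s+K_u]$ contains $[i-s-K_v\dd i-s+K_v]$ and therefore meets $I_v \sub I_u$, so $s \in \rsY{i}{u}$. Thus the in-scope set is upward-closed, which together with the root being in scope (true whenever $i \in [0 \dd n+K]$, i.e.\ in the regime where the lemma is non-vacuous) gives a rooted subtree of $T$.

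For the shift count I would first give a per-node bound $|\qsY{i}{v}| = O(K_v)$ and then a per-depth bound of $O(K)$, and sum. Writing $I_v = [\ell_v\dd r_v)$, the condition $[i-s-K_v\dd i-s+K_v] \cap I_v \ne \emptyset$ rearranges to $s \in (i-r_v-K_v\dd\; i-\ell_v+K_v)$, so $\rsY{i}{v}$ is an interval of length at most $\min(|I_v|+2K_v,\, 2K+1)$. Quasi-relevance enlarges this by $K_v$ on each side, giving
\[
|\qsY{i}{v}| \;\le\; \min(|I_v|+4K_v,\; 2K+1).
\]
When $|I_v| \le K$ we have $K_v = |I_v|$ and this is $5K_v$; when $|I_v| > K$ we have $K_v = K$ and the $2K+1$ clause gives $O(K) = O(K_v)$. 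Either way, $|\qsY{i}{v}| = O(K_v)$.

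For the per-depth bound, note that at depth $d'$ all intervals $I_v$ have length $\Theta(n/b^{d'})$ (the tree is approximately balanced), so $K_v$ is the same up to a constant. A node $v$ at depth $d'$ is in scope iff $I_v$ meets the band $B := [i-K-K_v\dd i+K+K_v]$ of length $O(K+K_v) = O(K)$. If $|I_v| \le K$, then there are $O(K/|I_v|)$ in-scope nodes, each contributing $O(|I_v|)$ quasi-relevant shifts, for $O(K)$ per depth. If $|I_v| > K$, then only $O(1)$ nodes meet $B$, and each contributes $O(K)$ shifts, again $O(K)$ per depth. Summing over the $\le d+1$ depths gives the claimed $O(K d)$ bound.

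The main obstacle is the per-node $O(K_v)$ bound in the regime $|I_v| \gg K$: the raw estimate $|I_v|+4K_v$ is enormous there, and the saving comes only from clipping to $[-K\dd K]$. Once this case split is handled carefully (and matches the case split used in the per-depth count), the remainder is routine interval arithmetic.
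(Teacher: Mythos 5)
Your proof is correct and largely parallels the paper's. The connectivity argument is identical in spirit: both you and the paper observe that $\rsY{i}{v}\ne\emptyset$ is monotone in $I_v$ (you spell this out via the nested bands $[i-s-K_v\dd i-s+K_v]\subseteq[i-s-K_u\dd i-s+K_u]$), giving upward-closedness and hence a rooted subtree. Likewise, the per-node bound $|\qsY{i}{v}|=O(K_v)=O(\min(|I_v|,K))$ matches the paper's $\min(5|I_v|,\,2K+1)$ up to constants.

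Where you diverge is in the per-level aggregation. The paper argues only from disjointness of same-depth intervals: at most two in-scope nodes at a given level have $I_v$ intersecting but not contained in $[i-2K\dd i+2K]$ (contributing $O(K)$ each), and the remaining in-scope nodes have $I_v$ fully contained, so their interval lengths sum to at most $4K+1$, giving $O(K)$ total via the $O(|I_v|)$ per-node bound. You instead appeal to the balance invariant that all intervals at a fixed depth have length $\Theta(n/b^{d'})$, and then count $O(K/|I_v|)$ nodes of that common length hitting a band of width $O(K)$. Both give $O(K)$ per level, but the paper's version holds for an \emph{arbitrary} precision sampling tree (as the lemma is stated), whereas yours imports an extra structural hypothesis. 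In the paper's actual deployment (perfectly balanced trees in Sections 4--5, weight-balanced B-trees with $\frac12 b^{d-d_v}\le|I_v|\le 2b^{d-d_v}$ in Section 6) the balance property does hold, so your proof is sound in context; but if you want the lemma to be self-contained, you should replace the balance appeal with the two-partial-nodes-plus-contained-nodes decomposition, which costs nothing and removes the dependence. A minor point: your remark that the root is in scope ``whenever $i\in[0\dd n+K]$'' is not needed for connectivity — upward-closedness alone makes the in-scope set a rooted subtree (possibly empty), which is what the paper uses.
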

\begin{proof}
Recall that $s\in \rsY{i}{v}$ if and only if $[i-s-K_v\dd i-s+K_v]\cap I_v \ne \emptyset$.
It is easy to observe that this property is monotone with respect to $I_v$. Consequently, $s\in \rsY{i}{v}$ implies that $s\in \rsY{i}{w}$
holds for every ancestor $w$ of $v$. This concludes the proof that the nodes within the scope of a substitution at $Y[i]$ form a connected subtree of $T$.

Next, observe that $s\in \rsY{i}{v}$ only if $[i-s-|I_v| \dd i-s+|I_v|]\cap I_v\ne \emptyset$,
and $s\in \qsY{i}{v}$ only if $[i-s-2|I_v| \dd i-s+2|I_v|]\cap I_v\ne \emptyset$.
Consequently, each node in scope contributes at most $5|I_v|$ quasi-relevant shifts. 
At the same time, each such node satisfies $[i-2K\dd i+2K]\cap I_v \ne \emptyset$ (because $|s|+K_v \le 2K$).
Let us fix a single level of the tree $T$. At this level, there are at most two nodes $v$ for which $I_v$ intersects $[i-2K\dd i+2K]$
yet is not contained in $[i-2K\dd i+2K]$. These two nodes contribute at most $4K+2$ quasi-relevant shifts.
As for the remaining nodes, the total size of their intervals $|I_v|$ does not exceed $4K+1$,
so their total contribution to the number of quasi-relevant shifts does not exceed $20K+5$.
Overall, we conclude that each level of $T$ contributes at most $24K+6$ quasi-relevant shifts.
Across all the $d$ levels, this gives a total of $\Oh(K\cdot d)$.
\end{proof}

We are now almost ready to bound the update time for substitution in $Y$. We need one additional lemma that bounds the number of active nodes in the scope of a given substitution.

\begin{lemma}\label{lem:active}
The expected number of active nodes in scope of a substitution at $Y[i]$ is $(\log{n})^{\Oh(\log_b{n})}$.
\end{lemma}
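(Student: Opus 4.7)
The plan is to apply linearity of expectation, bounding $\sum_{v\text{ in scope}} \Pr[v \text{ is active}]$ via the single-node estimate $\Pr[v \text{ active}] \leq \min\bigl(1,\; \tfrac{|I_v|}{K}(\log n)^{\Oh(d_v)}\bigr)$ that follows from Markov's inequality together with \cref{lem:exp-precision}, exactly as in the derivation of \cref{eq:active}. The main idea is to convert the factor $|I_v|$ into something controllable by \cref{lem:subtree}, which bounds the total number of quasi-relevant shifts across all nodes in scope by $\Oh(K \cdot d)$.

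I would split the nodes in scope of the substitution into ``small'' ones ($|I_v| \leq K$) and ``large'' ones ($|I_v| > K$). For a small node, $K_v = |I_v|$, so the definition of quasi-relevance immediately yields $|\qsY{i}{v}| \geq K_v = |I_v|$, and the probability bound becomes $\Pr[v \text{ active}] \leq \tfrac{|\qsY{i}{v}|}{K}(\log n)^{\Oh(d_v)}$. Summing this over all small nodes in scope and substituting $\sum_v |\qsY{i}{v}| = \Oh(K \cdot d)$ from \cref{lem:subtree} gives a total contribution of $\Oh(d \cdot (\log n)^{\Oh(d)}) = (\log n)^{\Oh(\log_b n)}$, using $d = \Oh(\log_b n)$.

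For large nodes I would fall back to the trivial bound $\Pr \leq 1$, since once $|I_v| > K$ the Markov estimate is no longer informative. The key observation is that, at any fixed level of $T$, the node intervals are pairwise disjoint, and only those intersecting the window $[i-2K \dd i+2K]$ can be in scope; since each large node has $|I_v| > K$, only $\Oh(1)$ large nodes per level qualify, and summing over the $d = \Oh(\log_b n)$ levels yields $\Oh(\log_b n)$. Combining the two contributions gives $(\log n)^{\Oh(\log_b n)}$, as claimed.

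The only subtle point I expect is the handling of the large nodes: there the quasi-relevance inequality $|I_v| \leq |\qsY{i}{v}|$ can fail (because $|\qsY{i}{v}|$ is capped at roughly $\Oh(K)$ while $|I_v|$ can be much larger), so the contribution from \cref{lem:subtree} alone does not suffice; the separate per-level disjointness/packing argument is what closes this gap without losing more than a $\log_b n$ factor.
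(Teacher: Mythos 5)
Your proof is correct and follows the same core strategy as the paper: linearity of expectation, the Markov estimate derived from \cref{lem:exp-precision}, and the global bound $\sum_v |\qsY{i}{v}| = \Oh(K\cdot d)$ from \cref{lem:subtree}. The only divergence is in how nodes with $|I_v| > K$ are handled, and there the paper is actually slicker than you anticipated. The paper restates the Markov estimate in the form $\Pr[v\text{ active}] \le \tfrac{K_v}{K}(\log n)^{\Oh(\log_b n)}$ where $K_v = \min(K,|I_v|)$; for large nodes this right-hand side is $\ge 1$, so the inequality holds vacuously. Moreover, \cref{def:relevant} gives $|\qsY{i}{v}| \ge K_v$ (not $|I_v|$) for every node in scope, including large ones, so the bound $\Pr[v\text{ active}] \le \tfrac{|\qsY{i}{v}|}{K}(\log n)^{\Oh(\log_b n)}$ and the appeal to \cref{lem:subtree} apply uniformly, with no small/large split needed. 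Your replacement for the large-node case --- the trivial $\Pr \le 1$ together with a per-level packing argument (disjoint intervals of length $>K$ meeting the $\Oh(K)$-wide scope window are $\Oh(1)$ per level, giving $\Oh(\log_b n)$ in total) --- is sound and yields the same bound, and it is arguably more transparent about why the Markov estimate cannot be used naively there. In short: same overall route, but you worked slightly harder than the paper to close a case the paper handles by the careful choice of $K_v$.
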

\begin{proof}
Recall from~\eqref{eq:active} (Section~\ref{sec:ako-static}) that the probability of $v$ being active does exceed $\frac{|I_v|}{K}\cdot (\log{n})^{\Oh(\log_b{n})}$,
which we can also bound as $\frac{K_v}{K}\cdot (\log{n})^{\Oh(\log_b{n})}$.
Moreover, note that each node $v$ in scope of a substitution at $Y[i]$ contributes at least $|\qsY{i}{v}|\ge K_v$ quasi-relevant shifts.
Consequently, for such node, the probability of being active does not exceed $\frac{|\qsY{i}{v}|}{K}\cdot (\log{n})^{\Oh(\log_b{n})}$.
By \cref{lem:subtree}, the total size $|\qsY{i}{v}|$ across all nodes is $\Oh(K\log_b n)$, so the expected number of active nodes in scope 
does not exceed $\Oh(\log_b n)\cdot (\log{n})^{\Oh(\log_b{n})} = (\log{n})^{\Oh(\log_b{n})}$.
\end{proof}

Finally, we analyze the update time for every substitution in $Y$. 

\begin{lemma}\label{lem:subY}
Upon any substitution in $Y$, we can update the values $\Delta_{v,s}^{\le K}(X,Y)$
across all active nodes $v\in T$ and shifts $s \in S_v$ in expected time $b^2(\log{n})^{\Oh(\log_b{n})}$.
\end{lemma}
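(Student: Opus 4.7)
The plan is to combine the per-node cost analysis from Lemma \ref{lem:ako-time} with the scope-restriction tools (Lemmas \ref{lem:locality2}, \ref{lem:subtree}, \ref{lem:active}) already established for substitutions in $Y$.

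First I would settle correctness. By Corollary \ref{cor:locality}, a substitution at $Y[i]$ can change $\TD^{\le K}_{v,s}(X,Y)$ only when $v$ is in scope and $s\in \rsY{i}{v}$, so the recursion that skips nodes failing this test is sound. For internal $v$, the {\sc Combine} call receives only the child values $\Delta^{\le K}_{v_h,s_h}(X,Y)$ with $s_h\in \qsY{i}{v}\cap S_{v_h}$; I would check that any $s_h^{*}$ achieving the minimum in \eqref{eq:capped-td} for some relevant $s$ lies in $\qsY{i}{v}$ (because $\TD^{\le K}_{v_h,s_h^{*}}+2|s-s_h^{*}|\le K_v$ forces $|s-s_h^{*}|\le K_v/2$), and the sparsified surrogate $\tilde{s}_h^{*}$ is still within $K_v$ of some relevant shift. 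Thus the set of inputs supplied to {\sc Combine} is adequate, and the $(\alpha_v,\beta_v)$-approximation invariant is preserved by Lemma \ref{lem:delta}.

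Second, I would bound the expected work at a single in-scope node $v$. By Lemma \ref{lem:range-minimum} and the recovery cost tallied in Lemma \ref{lem:ako-time}, this work is $\Ot\bigl(\sum_{h=0}^{b_v-1}(|\rsY{i}{v}\cap S_v|+|\qsY{i}{v}\cap S_{v_h}|)\bigr)$. Since each shift enters $S_w$ independently with probability $\min(1,48\beta_w^{-1}(b_w+1)\ln n)$, linearity of expectation together with Lemma \ref{lem:exp-precision} gives
\[
\Exp[|\qsY{i}{v}\cap S_w|]\;\le\; 48(b_w+1)\ln n\cdot |\qsY{i}{v}|\cdot \Exp[\beta_w^{-1}]\;\le\; \frac{|\qsY{i}{v}|}{K}\cdot b\cdot (\log n)^{\Oh(\log_b n)},
\]
and the analogous (smaller) bound for $\rsY{i}{v}\cap S_v$. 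Thus the expected per-node cost is at most $b^2\cdot \tfrac{|\qsY{i}{v}|}{K}\cdot (\log n)^{\Oh(\log_b n)}$; notice that this bound already incorporates the $\mathbf 1[v\text{ active}]$ indicator implicitly, because inactive nodes abort at \cref{alg:dyn-sub:terminate-y} after $\Oh(1)$ work.

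Third, I would sum over all nodes in scope, which form a connected subtree by Lemma \ref{lem:subtree}. That same lemma yields $\sum_{v\text{ in scope}} |\qsY{i}{v}| = \Oh(K\log_b n)$, so the processing contribution telescopes to
\[
\sum_{v\text{ in scope}} b^2\cdot \frac{|\qsY{i}{v}|}{K}\cdot (\log n)^{\Oh(\log_b n)} \;=\; b^2\cdot (\log n)^{\Oh(\log_b n)}.
\]
The only remaining overhead is visiting the children of active in-scope nodes to evaluate the guard at \cref{alg:dyn-sub:terminate-y}; this is $\Oh(b)$ per active in-scope node, and by Lemma \ref{lem:active} the expected number of active in-scope nodes is $(\log n)^{\Oh(\log_b n)}$, so this overhead is absorbed into the bound.

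The main obstacle I expect is the correctness check in the first step: one must verify that restricting child inputs to quasi-relevant shifts does not break the sparsification analysis underlying Lemma \ref{lem:delta} (i.e., the closest $\tilde s_h^{*}\in S_{v_h}$ to an optimal $s_h^{*}$ remains quasi-relevant). The remaining two steps are bookkeeping on top of Lemmas \ref{lem:exp-precision} and \ref{lem:subtree}.
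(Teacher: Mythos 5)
Your proposal is correct and follows essentially the same route as the paper's proof: Corollary~\ref{cor:locality} for correctness of the relevant-shift restriction, Lemma~\ref{lem:range-minimum}/Lemma~\ref{lem:ako-time} for the per-node combine cost $\Ot(\sum_h(|\rsY{i}{v}\cap S_v|+|\qsY{i}{v}\cap S_{v_h}|))$, Lemma~\ref{lem:exp-precision} for the expectation bound on $|\qsY{i}{v}\cap S_w|$, Lemma~\ref{lem:subtree} for summing $|\qsY{i}{v}|$ over the in-scope subtree, and Lemma~\ref{lem:active} for the control-instruction overhead at active in-scope nodes. The one place you add detail is in arguing why quasi-relevant child shifts suffice: you note that the true optimum $s_h^*$ satisfies $2|s-s_h^*|\le K_v$ hence $|s-s_h^*|\le K_v/2$; the paper instead appeals (a bit loosely) to the cap $\le K_v$ on the values themselves. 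Your version is a cleaner phrasing of the same observation and closes the loop with the sparsification surrogate $\tilde s_h^*$, which the paper leaves implicit.
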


\begin{proof}
Consider a substitution at $Y[i]$. By \cref{lem:subtree}, the procedure of {\sc Substitute-$Y$} visits all active nodes in scope of the substitution.
By \cref{cor:locality}, we value $\Delta_{v,s}^{\le K}(X,Y)$ needs to be updated only if $s\in \rsY{i}{v}$.
Moreover, since $\Delta_{v_h,s_h}^{\le K}(X,Y) \le K_v$, when updating the value $\Delta_{v,s}^{\le K}(X,Y)$,
we only need to inspect the values $\Delta_{v_h,s_h}^{\le K}(X,Y)$ with $|s-s_h|\le K_v$ (which implies $s_h \in \qsY{i}{v}$)
Thus, \cref{alg:dyn-sub:combine-y} correctly restricts the output of {\sc Combine} to $S_v\cap \rsY{i}{v}$,
and the inputs from the child $v_h$ to $S_{v_h}\cap \qsY{i}{v}$. Consequently, the recursive {\sc Substitute-$Y$} procedure correctly maintains 
the values $\Delta_{v,s}^{\le K}(X,Y)$.

As for the running time, observe that the {\sc Substitute-$Y$} procedure is called only for active nodes in scope of the substitution 
and for their children (which fail the text in \cref{alg:dyn-sub:terminate-y}). The total cost of the control instructions is proportional to $b\cdot (\log{n})^{\Oh(\log_b{n})}$ by \cref{lem:active}.
The cost of \cref{alg:dyn-sub:leaf-y} is $\Ot(|\rsY{i}{v}\cap S_v|)=\Ot(|\qsY{i}{v}\cap S_v|)$.
By \cref{lem:exp-precision}, we have $\Exp[|\qsY{i}{v}\cap S_v|] \le 48|\qsY{i}{v}|\cdot ({b_v+1})\ln n \cdot \Exp[\beta_v^{-1}]
= \frac{|\qsY{i}{v}|}{K}\cdot b (\log{n})^{\Oh(\log_b{n})}$. 
The cost of \cref{alg:dyn-sub:combine-y} is $\Ot(\sum_{h=0}^{b_v-1}(|\rsY{i}{v}\cap S_v|+|\qsY{i}{v}\cap S_{v_h}|))$,
which is $b^2\cdot \frac{|\qsY{i}{v}|}{K}\cdot (\log{n})^{\Oh(\log_b{n})}$ by the same reasoning.
The expected time spent at a fixed node $v$ is therefore 
$b^2\cdot \frac{|\qsY{i}{v}|}{K}\cdot (\log{n})^{\Oh(\log_b{n})}$. 
By \cref{lem:subtree}, this sums up to $b^2(\log{n})^{\Oh(\log_b{n})}$.
\end{proof}

We can summarize the main result of this section below.
\begin{theorem}\label{thm:sub}
For every $b\in [2\dd n]$, the dynamic $(k, \Theta(k  b \log_b n))$-gap edit distance problem can be maintained under substitutions in $b^2(\log{n})^{\Oh(\log_b{n})}$  expected amortized update time with probability at least $1-\frac{1}{n}$.
For $b =(\log n)^{\Theta(1/\epsilon)}$, the gap and the time become $(k, k\cdot (\log n)^{\Theta(1/\epsilon)})$ and $n^{\Oh(\epsilon)}$, respectively.
\end{theorem}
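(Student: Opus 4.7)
The plan is to combine the static initialization described in \cref{sec:ako-static} with the two substitution procedures of \cref{alg:dyn-sub} analyzed by \cref{lem:subX} and \cref{lem:subY}. Concretely, set $K = \Theta(k b \log_b n)$ with the hidden constant chosen large enough that \cref{lem:equiv-ed-td} separates the two sides of the gap, and initialize the precision sampling tree $T$ by one invocation of \cref{alg:ako}. This produces the random sets $S_v$, the sampled precisions $u_v$, and the values $\Delta_{v,s}^{\le K}(X,Y)$ for every active node $v$ and every $s \in S_v$; by \cref{cor:ako-correctness} these values are $(\alpha_v,\beta_v)$-approximations of $\TD^{\le K}_{v,s}(X,Y)$ with probability at least $1 - 1/n^2$. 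Each subsequent substitution triggers a call to {\sc Substitution-$X$} or {\sc Substitution-$Y$} at the root, and a query is answered by thresholding $\Delta_{v_{root},0}^{\le K}(X,Y)$ against a constant fraction of $K$, exactly as in the static algorithm.

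For correctness I would maintain the invariant that at every moment in time, for every active $v$ and every $s \in S_v$, the stored $\Delta_{v,s}^{\le K}(X,Y)$ is an $(\alpha_v,\beta_v)$-approximation of $\TD^{\le K}_{v,s}(X,Y)$. The base case follows from \cref{cor:ako-correctness}. For the inductive step, \cref{cor:locality} together with its $X$-analog (a trivial consequence of \cref{lem:locality2}) certifies that the only shifts whose $\TD^{\le K}_{v,s}(X,Y)$ can change under a substitution at position $i$ are those in $\rsX{i}{v}$ or $\rsY{i}{v}$; these are precisely the shifts that \cref{alg:dyn-sub} recomputes via the {\sc Combine} routine, whose correctness is guaranteed by \cref{lem:delta}. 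A union bound over the $\poly(n)$ updates and the $\Oh(\log n)$ recomputed nodes per update amplifies the per-call failure probability $1/n^3$ to a global failure probability below $1/n$. Given the invariant at the root, \cref{lem:captd} together with \cref{lem:equiv-ed-td} converts the multiplicative-plus-additive approximation of $\TD^{\le K}_{v_{root},0}(X,Y)$ into the promised $(k,\Theta(kb\log_b n))$-gap decision for $\ED(X,Y)$.

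The running-time bound is an immediate combination of \cref{lem:subX} and \cref{lem:subY}, each of which gives expected time $b^2 (\log n)^{\Oh(\log_b n)}$ per substitution; amortizing the initialization cost $n b^2 (\log n)^{\Oh(\log_b n)}$ from \cref{lem:ako-time} over the first $n$ updates contributes at most the same order per update. Setting $b = (\log n)^{\Theta(1/\epsilon)}$ gives $\log_b n = \Theta(\epsilon \log n / \log\log n)$ and hence $(\log n)^{\Oh(\log_b n)} = n^{\Oh(\epsilon)}$, yielding the second part of the claim. The only subtlety worth flagging is that the sparsification bound \cref{eq:shiftsparse} is invoked anew at every update; this causes no trouble because substitutions do not alter $T$, so each $S_v$ remains an untouched i.i.d.\ sample of $[-K\dd K]$ at the prescribed rate, and the high-probability guarantee of sparsification continues to hold throughout the update sequence. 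With these pieces in place, the theorem follows by packaging the lemmas already proved in this section.
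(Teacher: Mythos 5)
Your proposal matches the paper's proof in both structure and content: initialize via one call to Algorithm~\ref{alg:ako}, maintain the per-node $(\alpha_v,\beta_v)$-approximation invariant via the locality lemmas (Corollary~\ref{cor:locality} and the $X$-analog implicit in Lemma~\ref{lem:subX}) and Lemma~\ref{lem:delta}, answer queries by thresholding $\Delta^{\le K}_{v_{root},0}(X,Y)$, and charge the running time to Lemmas~\ref{lem:subX} and~\ref{lem:subY} with the initialization cost of Lemma~\ref{lem:ako-time} amortized over $n$ updates. One small inaccuracy worth fixing: the figure ``$\Oh(\log n)$ recomputed nodes per update'' is true only for substitutions in $X$; for substitutions in $Y$ the expected number of active nodes in scope is $(\log n)^{\Oh(\log_b n)}$ (and can be larger in the worst case), so the union bound should simply count at most $n$ Combine calls per substitution, giving $\Oh(n^2)$ calls over $n$ substitutions and still a global failure probability $\le 1/n$ — which is also the reason the paper explicitly rebuilds the entire structure every $n$ updates (to reset the error budget), a point your amortization remark gestures at but should state more plainly.
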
 

\begin{proof}
Correctness of the approximation guarantee follows from the correctness analysis of Section~\ref{sec:ako-static} and from \cref{lem:subX,lem:subY}.
All the updates are successful with probability $1-\frac{1}{n^2}$  (Lemma~\ref{lem:delta}) on each substitution, and hence all updates over $n$ substitutions are successful with probability $1-\frac{1}{n}$. The expected update time per substitution follows from \cref{lem:subX,lem:subY}. Upon $n$ updates, the precision sampling tree can be rebuilt by paying $n b^2(\log{n})^{\Oh(\log_b{n})}$ time which amortized over $n$ updates is $b^2(\log{n})^{\Oh(\log_b{n})}$.
\end{proof}
\section{Dynamic Edits with Substitutions in X and Edits in Y}\label{sec:subXedY}
In this section, we consider a more complex dynamic scenario than the previous section where we allow all edits (insertions, deletions, and substitutions) in $Y$, but only allow substitutions in $X$. 

We now face a major bottleneck. Recall from Section~\ref{sec:subXsubY}, upon a substitution on $X$ or $Y$, we need to update $\Delta_{v,s}^{\le K}(X,Y)$ for all relevant shifts $s\in S_v$. In \cref{lem:subtree}, we show that number of such relevant shifts is small.
However, upon an insertion or deletion of $Y[y]$, all nodes may contain relevant shifts. In particular, if $v$ is a leaf with $I_v=\{i\}$,
when $\TD_{v,s}^{\le K}(X,Y)=\min(\ED(X[i],Y[{i+s}]))$ changes as long as $y < i+s$.  Therefore, running the algorithm from the previous section would require rebuilding the entire tree from scratch.

\paragraph*{Key Insight.} Our main idea here lies in the following observation. Given two strings $X$ and $Y$ with $\ED(X,Y) \leq k$, consider an optimal alignment of $Y$ with $X$. Removing at most $k$ characters corresponding to insertions and deletions and substituting according to the optimal alignment, we can decompose $X$ and $Y$ into at most $k$ disjoint parts such that each corresponding part is a perfect match. If we can find this decomposition under dynamic updates, then we can also maintain the edit distance. While we do not know how to find such a decomposition dynamically, we show it is possible to maintain a weaker decomposition as follows in $\tilde{O}(k)$ query time.
The string $Y$ can be decomposed into $2k$ disjoint parts, $Y=\bigodot_{i=0}^{2k} Y[y_i\dd y_{i+1})$ such that each even phrase $Y[y_i\dd y_{i+1})$  ($i$ is even) is matched with some substring $X[y_i+s_i\dd y_{i+1}+s_i)$ exactly with $s_i \in [-2K\dd2K]$ (it is possible $Y[y_i\dd y{i+1})$ is empty) and $|Y[y_i\dd y_{i+1})|=1$ for $i$ being odd (this captures the edits). Note that it is \emph{not} necessary that the substrings $X[y_i+s_i\dd y_{i+1}+s_i)$ are disjoint.

Now, consider the value $\TD^{\le K}_{v,s}(X,Y)$ for some node $v$ and shift $s$ such that \[I_v\sub (y_i-s+K_v\dd y_{i+1}-s-K_v)\] holds for some $i\in [0\dd 2k]$.
Then, if we denote $I_v = [i_v\dd j_v]$, we must have $[i_v+s-K_v\dd j_v+s+K_v)\sub (y_i\dd y_{i+1})$ and thus 
$Y[i_v+s-t\dd j_v+s+t) = X[i_v+s-t+s_i\dd j_v+s-t+s_i)$. Now, \cref{lem:locality2}
yields $\min(\TD_{v,s}(X,Y),K)=\min(\TD_{v,s+s_i}(X,X),K)$. 
Since the capping in $\TD^{\le K}_{v,s}(X,Y)$ depends on $s$ and since $|s_i|\le 2K$,
we can derive $\TD^{\le K}_{v,s}(X,Y) = \min(K-|s|, \TD_{v,s+s_i}^{\le 3K}(X,X))$; see \cref{lem:reuse}.

Therefore, if we can maintain $\Delta^{\le 3K}_{v,s+s_i}(X,X)$ efficiently (which provides an $(\alpha_v,\beta_v)$-approximation of $\TD_{v,s+s_i}^{\le 3K}(X,X)$), we can use that to quickly compute $\Delta^{\le K}_{v,s}(X,Y)$ just by copying the value and capping it by $K-|s|$.
Excitingly, we can use the results of \cref{sec:subXsubY} in a black-box fashion to maintain $\Delta^{\le 3K}_{v,s}(X,X)$ (since we only need to support substitutions in $X$).

On the other hand, if $I_v\sub [y_i-s+K_v\dd y_{i+1}-s+K_v)$ does not hold for any $i\in [0\dd 2k]$,
then we must have $I_v \cap [y_i-s-K_v\dd y_{i}-s+K_v]$ for some $i\in [0\dd 2k]$, which is equivalent to $s\in \rsY{y_i}{v}$ (see \cref{def:relevant}).
Due to \cref{lem:subtree}, the number of shifts $s$ satisfying this condition (for fixed $i$) is $\Oh(K\cdot d)$ and, if we focus on the sampled ones ($s\in S_v$) for which we actually need to compute $\Delta^{\le K}_{v,s}(X,Y)$, the number drops to $b^2 \cdot (\log n)^{\Oh(\log_b n)}$ in expectation.
Across all $2k+2$ values $i\in [0\dd 2k]$, this is $k\cdot b^2 \cdot (\log n)^{\Oh(\log_b n)}$ in expectation.
We can compute these values every $k'=\Theta(k)$ updates since a single update changes the edit distance by at most 1.
This way, the amortized expected update time will be $b^2 \cdot (\log n)^{\Oh(\log_b n)}$.

\subsection{Algorithm}

As before $\Delta^{\le K}_{v,s}(X,Y)$ represents an $(\alpha_v, \beta_v)$-approximation of $\TD_{v, s}^{\leq K}(X, Y)$. We use $\Delta^{\le 3K}_{v,s}(X,X)$ to represent $(\alpha_v, \beta_v)$-approximation of $\TD_{v, s}^{\leq 3K}(X, X)$. We maintain a precision sampling tree $T$ on $X$ along with the values $\Delta_{v,s}^{\le 3K}(X,X)$ for all $s\in S_v$ (which is now sampled from $[-3K\dd 3K]$).

Since the precision sampling tree is now built only on $X$, even though we allow insertions and deletions in $Y$ (along with substitutions in both $X$ and $Y$) the structure of the tree remains the same after an update. As argued above, the algorithm of \cref{sec:subXsubY} is used in a black-box fashion to maintain $\Delta_{v,s}^{\le 3K}(X,X)$. 
The challenge is to simulate compute the necessary values $\Delta_{v,s}^{\le K}(X,Y)$. 

We compute them from scratch at the beginning of every \emph{epoch} consisting of $k'=\Theta(k)$ updates (where $k'$ is small enough that we can accommodate additional additive error of $k'$). 
Even though our approximation of $\ED(X,Y)$ is maintained in a lazy fashion, we maintain up-to-date copies of $X$ and $Y$ in an instance of the following {\sc Dynamic Decomposition} algorithm.

\prpipm

Upon the beginning of a new epoch, we query the {\sc Dynamic Pattern Matching}.
If we return NO, then we are guaranteed that $\ED(X,Y)>k$, so we can forward the negative answer.
Otherwise, we obtain a partition $Y=\bigodot_{i=0}^{2k} Y[y_i\dd y_{i+1})$ and a sequence of shifts $(s_0,s_2,\ldots,s_{2k})\in [-k\dd k]$.
Let $P=\{y_0,\ldots,y_{2k}\}$ denote the set of phrase boundaries.

For each node $v$, we denote $\rsY{P}{v}=\bigcup_{y\in P} \rsY{y}{v}$ and $\qsY{P}{v}=\bigcup_{y\in P} \qsY{y}{v}$.
Moreover, we say that a node $v$ is relevant with respect to $P$ if $\rsY{P}{v}\ne \emptyset$.

Then, we recursively traverse the precision sampling tree, simulating the static algorithm.
The main difference is that we only aim at computing $\Delta_{v,s}^{\le K}(X,Y)$ for a subset of allowed shifts.
At the root, this subset contains only $\{0\}$. Otherwise, it is defined as $\qsY{P}{w}\cap S_v$, where $w$ is the parent of~$v$. 

For all shifts $s\in S\setminus \rsY{P}{v}$, we exploit the aforementioned observation that $\TD^{\le K}_{v,s}(X,Y)=\min(K-|s|,\TD^{\le 3K}_{v,s+s_i}(X,X))$ holds for an appropriate $i\in [0\dd 2k]$ (\crefrange{alg:dyn-edy:copy0}{alg:dyn-edy:copy}). We binary search for the appropriate $i$ and for a shift $\tilde{s}\in S_v$ that is closest to $s+s_i$. Then, we set $\Delta_{v,s}^{\le K}(X,Y)=\min(K-|s|,\Delta_{v,\tilde{s}}^{\le 3K}(X,X))$.

For the relevant shifts $s\in S\cap \rsY{P}{v}$, we proceed as in the static algorithm: If the node $v$ is not active, we set the values $\Delta_{v,s}^{\le K}(X,Y)$ to $0$. If it is a leaf, we compute $\Delta_{v,s}^{\le K}(X,Y)$.
Otherwise, we recurse, asking each child to compute the values $\Delta_{v_h,s'}^{\le K}(X,Y)$ for all quasi-relevant shifts $s'\in \qsY{P}{v}\cap S_{v_h}$.
Based on these answers, the {\sc Combine} procedure computes the sought values  $\Delta_{v,s}^{\le K}(X,Y)$ for $s\in S\cap \rsY{P}{v}$.

\begin{algorithm}[H]
\caption{Dyn-Edit-Y}\label{alg:dyn-edY}
\begin{algorithmic}[1]
\medskip
\Statex \textbf{Input:} 
Strings $X,Y$, a precision tree with value $\Delta_{v,s}^{\le 3K}(X,X)$ for $s\in S_v$. 
\Statex \textbf{Output:} $\Delta_{root, 0}(X,Y)$
\medskip
\Statex \hrule
\Statex \hrule
\State Compute $Y=\bigodot_{i=0}^{2k} Y[y_i\dd y_{i+1})$ and $S=\{s_0,s_2,\dd, s_{2k}\}$ \Comment{\cref{prp:ipm}}
\label{alg:dyn-edY:line:partition}
\If{Received NO instead} \Return ``$\ED[X,Y] > k$''
\EndIf
\State Let $P = \{y_0,\ldots,y_{2k}\}$
\label{alg:dyn-edY:line:edits}
\State \Call{SubXEditY}{$root,\{0\}, y_0,\ldots,y_{2k}, s_0,s_2,\ldots,s_{2k}$}
\Statex \hrule
\Procedure{SubXEditY}{$v, S,y_0,\ldots,y_{2k}, s_0,s_2,\ldots,s_{2k}$}
\label{alg:dyn-edY:proc-suby}
\For{$s \in S \setminus \rsY{P}{v}$}\label{alg:dyn-edy:copy0}
    \State Find $i\in [0\dd 2k]$ such that $I_v\sub (y_i-s+K_v\dd y_{i+1}-s-K_v)$
    \State Identify a shift $\tilde{s}\in S_v$ that is closest to $s+s_i$
    \State Set $\Delta^{\le K}_{v,s}(X,Y)=\min(K-|s|, \Delta^{\le 3K}_{v,\tilde{s}}(X,X))$\label{alg:dyn-edy:copy}
\EndFor
\If{$\rsY{i}{v}\ne \emptyset$}
\If{$|I_v| \ge \beta_v$}
    \State Set $\Delta^{\le K}_{v,s}(X,Y)=0$ for all $s\in S\cap \rsY{P}{v}$
\ElsIf{$v$ is a leaf}
    \State $\Delta_{v,s}^{\le K}(X,Y) = \TD_{v,s}^{\le K}(X,Y)$ for all $s \in S\cap \rsY{P}{v}$  \label{alg:dyn-edy:leaf}
\Else
\State Recursively run \Call{SubXEditY}{$v_h,\qsY{P}{v}\cap S_{v_h},y_0,\ldots,y_{2k}, s_0,s_2,\ldots,s_{2k}$} for $h\in [0\dd b_v)$
\State \Call{Combine}{$v,\rsY{P}{v}\cap S,\{(h,s_h,\Delta_{v_h,s_h}^{\le K}(X,Y)): h\in [0\dd b_v), s_h\in \qsY{P}{v}\cap S_{v_h}\}$}\label{alg:dyn-edy:combine}
\EndIf
\EndIf
\EndProcedure
\end{algorithmic}
\end{algorithm}

\subsubsection{Analysis}
We first establish the correctness of the procedure {\sc Dyn-Edit-Y} followed by the analysis of the update time. 
We need the following crucial lemma which basically states that copying from $\tilde{T}(X,X)$ in \cref{alg:dyn-edy:copy} of \cref{alg:dyn-edY} is correct. 

\begin{lemma}\label{lem:reuse}
Consider a node $v$ and a shift $s\in [-K\dd K]$ such that $I_v\sub (y_i-s+K_v\dd y_{i+1}-s-K_v)$
and $Y[y_i\dd y_{i+1})=X[y_i+s_i\dd y_{i+1}+s_i)$.
Then, \[\TD_{v,s}^{\le K}(X,Y) = \min(\TD_{v,s+s_i}^{\le 3K}(X,X),K-|s|).\] 
\end{lemma}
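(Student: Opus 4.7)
My approach is to reduce the claim to the locality lemma \cref{lem:locality}, after first stripping the caps via \cref{lem:captd}. By \cref{lem:captd},
\[ \TD_{v,s}^{\le K}(X,Y) = \min(\TD_{v,s}(X,Y),\,K-|s|), \qquad \TD_{v,s+s_i}^{\le 3K}(X,X) = \min(\TD_{v,s+s_i}(X,X),\,3K-|s+s_i|). \]
Since $|s|\le K$ and $|s_i|\le 2K$, the triangle inequality gives $3K-|s+s_i|\ge K-|s|$, so the $3K$-cap on the right is absorbed by the outer $\min$ with $K-|s|$. The identity to prove therefore collapses to $\min(\TD_{v,s}(X,Y),K-|s|) = \min(\TD_{v,s+s_i}(X,X),K-|s|)$.

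Writing $I_v = [i_v\dd j_v]$, the hypothesis $I_v\sub(y_i-s+K_v\dd y_{i+1}-s-K_v)$ is equivalent to $[i_v+s-K_v\dd j_v+s+K_v]\sub(y_i\dd y_{i+1})$. Combined with $Y[y_i\dd y_{i+1})=X[y_i+s_i\dd y_{i+1}+s_i)$, this means that for every $t\in [0\dd K_v]$ the window $[i_v+s-t\dd j_v+s+t]$ lies inside the matched zone, and is therefore obtained from the corresponding $X$-window by an $s_i$-shift:
\[ Y[i_v+s-t\dd j_v+s+t] = X[i_v+(s+s_i)-t\dd j_v+(s+s_i)+t]. \]
This is precisely the hypothesis of \cref{lem:locality} with $X' = X$, $Y' = X$, and $s' = s+s_i$, which yields the equivalence $\TD_{v,s}(X,Y) \le t \iff \TD_{v,s+s_i}(X,X) \le t$ for every $t \le K_v$, and in particular $\min(\TD_{v,s}(X,Y),K_v) = \min(\TD_{v,s+s_i}(X,X),K_v)$.

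To bridge the gap between $K_v$ and $K-|s|$, I would split into two cases. If $K-|s|\le K_v$, instantiating the equivalence at $t = K-|s|$ gives the target equality directly. Otherwise $K-|s| > K_v$, which (together with $K_v = \min(K,|I_v|)$) forces $K_v = |I_v|$; the elementary bound $\TD_{v,\sigma}(\cdot,\cdot) \le |I_v|$, proved by a one-line induction on the tree that takes $s_h = \sigma$ at each internal child and uses $\TD_{v,\sigma} \in \{0,1\}$ at every leaf, then shows that both uncapped tree distances are already $\le K_v \le K-|s|$, and the $t = K_v$ instance of the equivalence forces them to be equal.

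The main subtlety I anticipate is exactly this juggling of the three thresholds: the matching radius $K_v$ supplied by the hypothesis, the target cap $K-|s|$ in the statement, and the right-hand cap $3K-|s+s_i|$. The range $s_i \in [-2K\dd 2K]$ guaranteed by \cref{prp:ipm} is precisely what lets the $3K$-cap collapse to $K-|s|$, and the definition of $K_v$ is precisely wide enough for locality to cover the full regime where the cap $K-|s|$ might bind.
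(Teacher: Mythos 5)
Your plan follows the paper's proof essentially step for step: both absorb the $3K$ cap using $|s_i|\le 2K$ (so that $3K-|s+s_i|\ge K-|s|$), unfold both capped distances via \cref{lem:captd}, and then feed the matching window into \cref{lem:locality}. The only cosmetic difference is that the paper takes $t=\TD_{v,s}^{\le K}(X,Y)\le K_v$ directly, which simultaneously satisfies the $K_v$ and $K-|s|$ bounds and so avoids your concluding case split between $K-|s|\le K_v$ and $K-|s|>K_v$.
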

\begin{proof}
Denote $t = \TD_{v,s}^{\le K}(X,Y)$  and note that $t \le K_v$.
Thus, we also have $I_v \sub (y_i-s+t\dd y_{i+1}-s-t)$ or, equivalently, denoting $I_v= [i_v\dd j_v]$,
$(y_i\dd y_{i+1})\sub [i_v+s-t\dd j_v+s+t]$. 
Thus, we have $Y[i_v+s-t\dd j_v+s+t]=X[i_v+s+s_i-t\dd j_v+s+s_i+t]$.
By \cref{lem:locality}, this implies $\min(\TD_{v,s}(X,Y), K-|s|)=\min(\TD_{v,s+s_i}(X,X), K-|s|)$.
Since $\TD_{v,s}^{\leq K}(X,Y)=\min(\TD_{v,s}(X,Y), K-|s|)$ from \cref{lem:captd}, we have
\begin{align*}
    \TD_{v,s}^{\le K}(X,Y) 
    &= \min(\TD_{v,s}(X,Y), K-|s|) & \text{From \cref{lem:captd}}\\
    &= \min(\TD_{v,s+s_i}(X,X), K-|s|) & \text{From \cref{lem:locality}}\\
\end{align*}

Since $|s_i| \le 2K$, we have $3K - |s+s_i| \ge K-|s|$. Therefore, we have
\begin{align*}
     \min(\TD_{v,s+s_i}(X,X), K-|s|) 
    &= \min(\TD_{v,s+s_i}(X,X), K-|s|, 3K - |s+s_i|) \\
    &= \min(\TD_{v,s+s_i}^{\le 3K}(X,X), K-|s|) 
\end{align*}
The last equality again follows from \cref{lem:captd} as $|s+s_i| \leq 3K$ and
\[\TD_{v,s+s_i}^{\le 3K}(X,X) =\min(\TD_{v,s+s_i}(X,X), 3K - |s+s_i|).\qedhere\]
\end{proof} 

\begin{corollary}\label{cor:reuse}
    The value $\Delta_{v,s}^{\le K}(X,Y)$ set in \cref{alg:dyn-edy:copy} is, with high probability, an $(\alpha_v,2\beta_v)$-appro\-ximation of $\TD_{v,s}^{\le K}(X,Y)$.
\end{corollary}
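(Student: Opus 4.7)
The plan combines three ingredients: the identity of \cref{lem:reuse} relating $\TD^{\le K}_{v,s}(X,Y)$ to $\TD^{\le 3K}_{v,s+s_i}(X,X)$; the invariant, maintained black-box by the algorithm of \cref{sec:subXsubY}, that $\Delta^{\le 3K}_{v,\tilde s}(X,X)$ is with high probability an $(\alpha_v,\beta_v)$-approximation of $\TD^{\le 3K}_{v,\tilde s}(X,X)$ for every $\tilde s\in S_v$; and the shift sparsification estimate of \cref{eq:shiftsparse} applied to $\Tt(X,X)$, guaranteeing that with probability at least $1-n^{-6}$ the shift $\tilde s\in S_v$ closest to $s+s_i$ satisfies $|\tilde s-(s+s_i)|\le \tfrac{\beta_v}{8(b_v+1)}$.

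First I would verify the hypotheses of \cref{lem:reuse} for the index $i$ selected by the algorithm. Since $s\notin \rsY{P}{v}$, no phrase boundary $y_j\in P$ lies in $[i_v+s-K_v\dd j_v+s+K_v]$; monotonicity of the $y_j$'s then yields some $i\in[0\dd 2k]$ with $I_v\sub(y_i-s+K_v\dd y_{i+1}-s-K_v)$, and a short size argument (using $K_v\ge 1$ and $|I_v|\ge 1$) forces $i$ to be even. \cref{prp:ipm} thus supplies $s_i\in[-2k\dd 2k]$ with $Y[y_i\dd y_{i+1})=X[y_i+s_i\dd y_{i+1}+s_i)$, and \cref{lem:reuse} yields
\[\TD^{\le K}_{v,s}(X,Y)=\min\bigl(\TD^{\le 3K}_{v,s+s_i}(X,X),\,K-|s|\bigr),\]
where the right-hand side is well-defined since $|s+s_i|\le K+2k\le 3K$.

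Next, combining the invariant with the sparsification bound and the Lipschitz inequality $\bigl|\TD^{\le 3K}_{v,\tilde s}(X,X)-\TD^{\le 3K}_{v,s+s_i}(X,X)\bigr|\le 2(b_v+1)|\tilde s-(s+s_i)|\le \tfrac{\beta_v}{4}$, and using $1\le \alpha_v\le 2$, I obtain with high probability
\[\tfrac{1}{\alpha_v}\TD^{\le 3K}_{v,s+s_i}(X,X)-\tfrac{5\beta_v}{4}\;\le\;\Delta^{\le 3K}_{v,\tilde s}(X,X)\;\le\;\alpha_v\TD^{\le 3K}_{v,s+s_i}(X,X)+\tfrac{3\beta_v}{2}.\]
Since the map $x\mapsto\min(K-|s|,x)$ is $1$-Lipschitz, capping both sides by $K-|s|$ preserves the additive slack; and since $\alpha_v\ge 1$ gives $\alpha_v(K-|s|)\ge K-|s|\ge \tfrac{1}{\alpha_v}(K-|s|)$, the cap can be absorbed into the multiplicative factors as $\min(K-|s|,\alpha_v x)\le \alpha_v\min(K-|s|,x)$ and symmetrically for $\tfrac{1}{\alpha_v}$. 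Substituting via \cref{lem:reuse} then yields the claimed $(\alpha_v,2\beta_v)$-approximation, and a union bound over the $\Oh(n)$ nodes and $\Oh(K)$ shifts touched by the algorithm preserves the high-probability guarantee.

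The main obstacle will be carefully threading the $\min(K-|s|,\cdot)$ cap through both the multiplicative and additive errors: a careless argument can easily lose a factor of $\alpha_v$ on the $K-|s|$ term. The key observation avoiding this is that, thanks to $\alpha_v\ge 1$, the cap can be absorbed into the multiplicative factor (upwards on the upper side, downwards on the lower side) rather than paid as a separate error term.
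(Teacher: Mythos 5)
Your proof is correct and takes essentially the same route as the paper: invoke \cref{lem:reuse} to reduce to $\TD^{\le 3K}_{v,s+s_i}(X,X)$, use the $(\alpha_v,\beta_v)$-invariant for $\Delta^{\le 3K}_{v,\tilde s}(X,X)$ together with the sparsification/Lipschitz bound $|\TD^{\le 3K}_{v,\tilde s}(X,X)-\TD^{\le 3K}_{v,s+s_i}(X,X)|\le\tfrac{\beta_v}{4}$, and then argue the cap by $K-|s|$ preserves the approximation. You are slightly more explicit than the paper on two sub-points the paper leaves implicit (existence of the even phrase index $i$, and carefully threading the $\min(K-|s|,\cdot)$ cap through the multiplicative factor via monotonicity and $\alpha_v\ge1$), but the argument is the same.
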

\begin{proof}
    We repeat the reasoning within the proof of \cref{lem:delta}.
    Note that $|\TD_{v,s+s_i}^{\le 3K}(X,X)-\TD_{v,\tilde{s}}^{\le 3K}(X,X)|\le 2(b_v+1)|s+s_i-\tilde{s}|$
    and, with high probability, $|s+s_i-\tilde{s}|\le  \frac{\beta_{v}}{8(b_{v_h}+1)}$.
    Consequently, $|\TD_{v,s+s_i}^{\le 3K}(X,X)-\TD_{v,\tilde{s}}^{\le 3K}(X,X)|\le \frac14 \beta_v$.
    Since $\Delta_{v,\tilde{s}}^{\le 3K}(X,X)$ is an $(\alpha_v,\beta_v)$ approximation of $\TD_{v,\tilde{s}}^{\le 3K}(X,X)$,
    we conclude that it is also an $(\alpha_v,\beta_v +\alpha_v \cdot\frac14\beta_v)$ approximation of  $\TD_{v,s+s_i}^{\le 3K}(X,X)$.
    Due to $\alpha_v \le 2$, this means that  $\Delta_{v,\tilde{s}}^{\le 3K}(X,X)$ is an $(\alpha_v,2\beta_v)$-approximation of   $\TD_{v,s+s_i}^{\le 3K}(X,X)$, and thus $\Delta_{v,s}^{\le K}(X,Y)=\min(K-|s|,\Delta_{v,\tilde{s}}^{\le 3K}(X,X))$ is an $(\alpha_v,2\beta_v)$-approximation of 
    of $\TD_{v,s}^{\le K}(X,Y) = \min(\TD_{v,s+s_i}^{\le 3K}(X,X),K-|s|)$, where this equality follows from \cref{lem:reuse}.
\end{proof}

\begin{lemma}
    \label{lem:correctness}
   {\sc Dyn-Edit-Y} correctly solves the $(k, \Theta(k  b \log_b n))$-gap edit problem with probability $1-\frac{1}{n}$.
\end{lemma}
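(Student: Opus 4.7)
The plan is to prove correctness in two stages. First, I will argue that, at the end of each epoch, the value $\Delta^{\le K}_{\mathrm{root},0}(X,Y)$ returned by \cref{alg:dyn-edY} is, with high probability, an $(\alpha_{\mathrm{root}},\Oh(\beta_{\mathrm{root}}))$-approximation of $\TD^{\le K}(X,Y)$. Second, I will use \cref{lem:equiv-ed-td} together with the laziness analysis (an epoch spans $k'=\Theta(k)$ updates, each changing the edit distance by at most one) to convert this into the claimed $(k,\Theta(kb\log_b n))$-gap guarantee. The assumption here is that the black-box invocation of the algorithm from \cref{sec:subXsubY} on the pair $(X,X)$ with threshold $3K$ maintains, with high probability, values $\Delta^{\le 3K}_{v,s}(X,X)$ that are $(\alpha_v,\beta_v)$-approximations of $\TD^{\le 3K}_{v,s}(X,X)$ for every $v$ and $s\in S_v$; this follows directly from \cref{thm:sub} applied to $X$ against itself, since we have only substitutions in $X$.

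The core of the proof is an induction on the tree, showing that for every node $v$ visited by \textsc{SubXEditY} and every $s$ in its target set, the computed $\Delta^{\le K}_{v,s}(X,Y)$ is an $(\alpha_v,\Oh(\beta_v))$-approximation of $\TD^{\le K}_{v,s}(X,Y)$. I will split into two cases matching the two branches of the procedure. For shifts $s$ with $I_v\subseteq (y_i{-}s{+}K_v\dd y_{i+1}{-}s{-}K_v)$, i.e., those for which no phrase boundary is relevant at $v$, the decomposition returned by \cref{prp:ipm} guarantees the premise of \cref{lem:reuse}, so \cref{cor:reuse} immediately gives the desired $(\alpha_v,2\beta_v)$-approximation (the binary search producing $\tilde{s}$ is justified by the fact that $s+s_i\in [-3K\dd 3K]$ because $|s|\le K$ and $|s_i|\le 2K$). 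For relevant shifts $s\in \rsY{P}{v}\cap S$, inactive nodes return $0$ which is trivially a $(\cdot,\beta_v)$-approximation since $\TD^{\le K}_{v,s}(X,Y)\le |I_v|\le \beta_v$; leaves are handled exactly; and internal active nodes invoke \textsc{Combine} on the children. The crucial point is that \textsc{Combine}, as analyzed in \cref{lem:delta}, only requires the inputs $\Delta^{\le K}_{v_h,s_h}(X,Y)$ to be $(\alpha_{v_h},\beta_{v_h})$-approximations over all $s_h\in \qsY{P}{v}\cap S_{v_h}$, and by the inductive hypothesis these values are provided (either by recursion, when $s_h$ is relevant at $v_h$, or by the copying branch, noting that quasi-relevance at $v$ ensures the child range is covered).

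Once every $\Delta^{\le K}_{v,s}(X,Y)$ along the computation is a high-probability $(\alpha_v,\Oh(\beta_v))$-approximation, a union bound over the $\poly(n)$ events (from \cref{lem:delta}, \cref{cor:reuse}, the sparsification tails in \eqref{eq:shiftsparse}, and the $\Oh(k)$ queries to \cref{prp:ipm}) yields overall success probability $1-\Oh(1/n^2)$ per epoch and thus $1-1/n$ over the entire sequence of updates. Combining with \cref{lem:captd} and \cref{lem:equiv-ed-td}, the value at the root distinguishes $\TD^{\le K}(X,Y)\le \Oh(k)$ from $\TD^{\le K}(X,Y)=K=\Theta(kb\log_b n)$, which translates into the edit-distance gap after absorbing the $\le k'=\Theta(k)$ drift within an epoch into the slack between $k$ and $K$. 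The main obstacle I anticipate is bookkeeping the additive-error blow-ups coming from the \emph{composition} of two approximations along the copy branch: the value copied is already an approximation of $\TD^{\le 3K}(X,X)$, and \cref{cor:reuse} contributes an additional factor-two blow-up in $\beta$. I will handle this by redefining the target accuracy at every node as $(\alpha_v, c\cdot \beta_v)$ for a small absolute constant $c$, and verifying that this constant is preserved (not amplified) by \textsc{Combine}, since the recovery procedure of \cref{cor:precision-sampling} is linear in $\beta_v$ and the rescaling $\beta_w=\tfrac12\beta_v u_w$ in the precision sampling step provides the required slack.
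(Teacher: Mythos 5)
Your proposal is correct and follows essentially the same route as the paper's proof: the same two-case induction over the tree (the copy branch via \cref{lem:reuse}/\cref{cor:reuse} for $s\notin \rsY{P}{v}$, and simulating the static \textsc{Combine}/\cref{lem:delta} analysis for $s\in \rsY{P}{v}$), capped off by \cref{lem:captd}, \cref{lem:equiv-ed-td}, a union bound, and the epoch-drift observation. Your concern about the constant-factor blow-up in the additive error is the genuinely delicate point, and your resolution is exactly right: the rescaling $\beta_w=\tfrac12\beta_v u_w$ in the precision-sampling step gives the recovery procedure enough slack that a $(\alpha_{v_h},2\beta_{v_h})$-approximation at the children yields back an $(\alpha_v,O(\beta_v))$-approximation at $v$ without geometric amplification across levels; the paper glosses over this by asserting the factor $2$ is preserved ``as in \cref{lem:delta}'', but your explicit verification is the justification that makes the inductive claim actually close.
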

\begin{proof}
    Note that if $\ED(X,Y) \leq k$, then \cref{prp:ipm} will always find a decomposition. Hence, if it fails to do so, the algorithm correctly returns NO. 
    Using \cref{cor:locality} {\sc Dyn-Edit-Y} (\cref{alg:dyn-edY}) computes all $\Delta^{\le K}_{v,s}(X,Y)$ such that $s\in \qsY{P}{w}\cap S_v$,
    where $w$ is the parent of $v$; this follows from the fact that the recursive calls of {\sc Dyn-Edit-Y} set $S=\qsY{P}{v}\cap S_{v_h}$.
    If $s\notin \rsY{P}{v}$, the value $\Delta^{\le K}_{v,s}(X,Y)$ is computed in \cref{alg:dyn-edy:copy} and, by \cref{cor:reuse} is a $(\alpha_v,2\beta_v)$-approximation of $\TD^{\le K}_{v,s}(X,Y)$. In the remaining case of $s\in \rsY{P}{v}$, we simulate the static algorithm
    restricting the set of shifts at the children $v_h$ of $v$ to those in $\qsY{P}{v}\cap S_v$ (which is valid because any relevant shift is within a distance $K_v\le \TD^{\le K}_{v,s}(X,Y)$ from a quasi-relevant one).
    Consequently, as in the proof of \cref{lem:delta}, we can inductively argue that each value $\Delta^{\le K}_{v,s}(X,Y)$ computed by our algorithm is 
    an $(\alpha_v,2\beta_v)$-approximation of $\TD^{\le K}_{v,s}(X,Y)$.
    In particular, this is true for the value $\Delta^{\le K}_{root,0}(X,Y)$ computed at the topmost recursive call of {\sc Dyn-Edit-Y}.
    By \cref{lem:equiv-ed-td,lem:captd}, we conclude that {\sc Dyn-Edit-Y} correctly solves the $(k, \Theta(k  b \log_b n))$-gap edit distance problem with probability at least $1-\frac{1}{n}$.
\end{proof}

\paragraph{Running Time Analysis}
Maintaining the values $\Delta_{v,s}^{\le 3K}(X,X)$ and the data structure for {\sc Dynamic Pattern Matching} takes $b^2 (\log n)^{\Oh(\log_b n)}$-time 
due to \cref{lem:subX,lem:subY,prp:ipm}, respectively. 
It remains to bound the cost of {\sc Dyn-Edit-Y}, which is run every $k'=\Theta(k)$ updates.

\begin{lemma}\label{lem:runedY}
Every execution of the {\sc Dyn-Edit-Y} procedure takes $k b^2 (\log n)^{\Oh(\log_b n)}$ expected time.
\end{lemma}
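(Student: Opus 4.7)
The plan is to account for all the work done inside a single recursive call of \textsc{SubXEditY} and then aggregate across the recursion tree. A call on a node $v$ with input shift set $S$ (equal to $\qsY{P}{\mathrm{parent}(v)}\cap S_v$ at non-root nodes and $\{0\}$ at the root) performs: (i) for each $s\in S\setminus\rsY{P}{v}$, a constant number of binary searches and a copy from $\Delta^{\le 3K}_{v,\tilde s}(X,X)$, costing $\Ot(|S|)$ in total; (ii) if $\rsY{P}{v}\neq\emptyset$ and $v$ is either inactive or a leaf, $\Ot(|\rsY{P}{v}\cap S|)$ work; and (iii) otherwise a \textsc{Combine} invocation costing, by \cref{lem:range-minimum} and \cref{cor:precision-sampling}, $\Ot(b_v\,|\rsY{P}{v}\cap S|+\sum_h|\qsY{P}{v}\cap S_{v_h}|)$. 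Since each recursive call at a child $v_h$ of $v$ receives input $S=\qsY{P}{v}\cap S_{v_h}$, the copy-step costs aggregate into the combine-step inner sum, so the total running time is
\[
\Ot(1)+\Ot\Big(b\sum_v|\rsY{P}{v}\cap S_v|+\sum_v\sum_h|\qsY{P}{v}\cap S_{v_h}|\Big),
\]
where $v$ ranges over the in-scope nodes reached by the recursion.

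To bound the two sums in expectation, I would first lift \cref{lem:subtree} from a single substitution to the set $P=\{y_0,\dots,y_{2k}\}$ by a union bound: $\sum_v|\qsY{P}{v}|\le\sum_{i=0}^{2k}\sum_v|\qsY{y_i}{v}|=\Oh(k\,K\log_b n)$, and the same inequality holds for $\sum_v|\rsY{P}{v}|$ since $\rsY{P}{v}\subseteq\qsY{P}{v}$. Next, for any fixed $A\subseteq[-K\dd K]$ and any node $w$, independent sampling together with \cref{lem:exp-precision} yields
\[
\Exp[|A\cap S_w|]\le 48(b_w+1)(\ln n)\,|A|\,\Exp[\beta_w^{-1}]\le\tfrac{|A|}{K}\cdot b\cdot(\log n)^{\Oh(\log_b n)}.
\]
Applying this with $A=\rsY{P}{v}$ for the first sum and $A=\qsY{P}{v}$ for each term of the second sum (and using $b_v\le \Oh(b)$), both expected sums are bounded by $\tfrac{b^2}{K}(\log n)^{\Oh(\log_b n)}\cdot\Oh(kK\log_b n)=kb^2(\log n)^{\Oh(\log_b n)}$, establishing the claimed bound.

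The main obstacle is the first step: although a single insertion or deletion in $Y$ could a priori affect every shift at every node, the reuse trick based on \cref{lem:reuse} lets one restrict the work at each visited node to the copy loop plus a \textsc{Combine} call on the \emph{relevant} shifts, and the recursion telescopes cleanly because the input shift set at $v_h$ matches the contribution $|\qsY{P}{v}\cap S_{v_h}|$ charged to the parent. Once this bookkeeping is in place, the remaining calculation is a direct application of the $P$-wise version of \cref{lem:subtree} combined with the precision-expectation bound of \cref{lem:exp-precision}; notably, the ``active'' predicate need not appear explicitly in the analysis since inactive or non-relevant nodes contribute zero to the terms being bounded.
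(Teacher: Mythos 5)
Your proof follows the same strategy as the paper's: bound the per-node work via \cref{lem:range-minimum}, lift the $\Oh(K\,d)$ quasi-relevant-shift bound of \cref{lem:subtree} to the set $P$ via a union bound, apply \cref{lem:exp-precision} to convert quasi-relevant shifts into expected sampled shifts, and absorb each child's copy-loop cost into the parent's \textsc{Combine} contribution (which is the same charging as the paper, written as ``telescoping'' rather than ``charge the first term to the parent''). The accounting and the arithmetic check out.

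One claim should be retracted, though: you assert that the ``active'' predicate — hence \cref{lem:active} — can be dropped from the analysis ``since inactive or non-relevant nodes contribute zero to the terms being bounded.'' That is not quite right. The recursion visits every active in-scope node and all of its $\Oh(b)$ children, and each such visit incurs $\Oh(1)$ overhead (the procedure call, the tests $\rsY{P}{v}\ne\emptyset$ and $|I_v|\le\beta_v$, iterating over children), \emph{independently} of whether the sampled sets $\rsY{P}{v}\cap S_v$ and $\qsY{P}{v}\cap S_{v_h}$ happen to be empty. Those visits contribute nothing to the two sums in your displayed bound, so they are not captured. To bound this control cost you do need \cref{lem:active}: the expected number of active nodes in scope of $P$ is $\sum_{i=0}^{2k}(\log n)^{\Oh(\log_b n)}=k(\log n)^{\Oh(\log_b n)}$, giving a $kb(\log n)^{\Oh(\log_b n)}$ overhead, which is dominated by the claimed bound. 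Likewise you omit the $\Ot(k)$ cost of computing the decomposition via \cref{prp:ipm}, again dominated. Both omissions are harmless for the final bound, but the remark that the active predicate is unnecessary should be replaced by an explicit invocation of \cref{lem:active}, as the paper does.
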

\begin{proof}
The application of \cref{prp:ipm} to compute the decomposition $P$ costs $\Ot(k)$ time.
By \cref{lem:subtree}, the recursive procedure {\sc SubXEditY} visits all active nodes in scope of the $P$
as well as their children. For each such node, the set $S$ is set to $\qsY{P}{w}\cap S_v$, where $w$ is the parent of $v$;
the only exception is the root, when $S=\{0\}$.

The total cost of the control instructions is, in expectation, $k\cdot b\cdot (\log{n})^{\Oh(\log_b{n})}$;
this is every node within the scope of $P$ is, by definition, within the scope of a substitution at $Y[y_i]$ for some $i\in [0\dd 2k]$,
and the number of active nodes within the scope of a single substitution is $(\log{n})^{\Oh(\log_b{n})}$ by \cref{lem:active}.

The cost of handling each shift $s\notin \rsY{P}{v}$ is logarithmic since we can find the appropriate phrase $i\in [0\dd 2k]$ and shift $\tilde{s}$ by binary search. By \cref{lem:exp-precision}, we have $\Exp[|\qsY{P}{w}\cap S_v|] \le 48|\qsY{P}{w}|\cdot ({b_v+1})\ln n \cdot \Exp[\beta_v^{-1}]
= \frac{|\qsY{P}{w}|}{K}\cdot b (\log{n})^{\Oh(\log_b{n})}$. 
The cost of \cref{alg:dyn-edy:leaf} is $\Ot(|\rsY{P}{v}\cap S_v|)=\Ot(|\qsY{P}{v}\cap S_v|)$,
which is $\frac{|\qsY{P}{v}|}{K}\cdot b (\log{n})^{\Oh(\log_b{n})}$ by the same reasoning.
The cost of \cref{alg:dyn-edy:combine} is $\Ot(\sum_{h=0}^{b_v-1}(|\rsY{P}{v}\cap S_v|+|\qsY{P}{v}\cap S_{v_h}|))$,
which is $b^2\cdot \frac{|\qsY{P}{v}|}{K}\cdot (\log{n})^{\Oh(\log_b{n})}$, again due to \cref{lem:exp-precision}.

The expected time spent at a fixed node $v$ with parent $w$ is therefore 
$b\cdot \left( \frac{|\qsY{P}{w}|}{K} + b\cdot \frac{|\qsY{P}{v}|}{K}\right)\cdot (\log{n})^{\Oh(\log_b{n})}$. 
If we charge the first term to the parent $w$ (which can get charged by all of its $\Oh(b)$ children), the cost per node 
becomes $b^2 \cdot \frac{|\qsY{P}{v}|}{K}\cdot (\log{n})^{\Oh(\log_b{n})}$. 
Since, for each $v$, we have $\qsY{P}{v}=\bigcup_{i=0}^{2k}\qsY{y_i}{v}$, then the total number of quasi-relevant shifts (across all nodes)
is $\Oh(k\cdot K \cdot \log_b n)$. Consequently, the total expected time is  $kb^2(\log{n})^{\Oh(\log_b{n})}$.
\end{proof}

The following theorem summarizes the result of this section.
\begin{theorem}
For every $b\in [2\dd n]$, the dynamic $(k/2, \Theta(k  b \log_b n))$-gap edit distance problem can be maintained under substitutions in $X$ and edits in $Y$ in $b^2(\log{n})^{\Oh(\log_b{n})}$  expected amortized update time with probability at least $1-\frac{1}{n}$.
For $b =(\log n)^{\Theta(1/\epsilon)}$, the gap and the time become $(k, k\cdot (\log n)^{\Theta(1/\epsilon)})$ and $n^{\Oh(\epsilon)}$, respectively.
\end{theorem}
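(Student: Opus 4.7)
The plan is to assemble three components that have already been developed in the excerpt. Component (1) is the Section~4 substitution-only machinery, which we apply to maintain the precision sampling tree $T$ on $X$ together with the values $\Delta^{\le 3K}_{v,s}(X,X)$: since only substitutions occur in $X$, \cref{thm:sub} applies verbatim (with the second string also set to $X$ and the threshold raised from $K$ to $3K$), giving $b^2(\log n)^{\Oh(\log_b n)}$ amortized expected update time. Component (2) is the dynamic decomposition structure of \cref{prp:ipm}, which we feed with every edit to either string at amortized cost $\Ot(1)$. Component (3) is the lazy refresh of the cached approximation $\Delta^{\le K}_{root,0}(X,Y)$: after every $k' := \lceil k/2 \rceil$ updates, I would invoke \cref{alg:dyn-edY}, which internally queries the structures from (1) and (2). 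The threshold is set to $K = \Theta(k)$ with a sufficiently large constant to absorb the shift-sparsification slack and the staleness slack discussed below.

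For correctness, \cref{lem:correctness} guarantees that each invocation of \cref{alg:dyn-edY} either certifies $\ED(X,Y) > k$ (when \cref{prp:ipm} returns NO) or returns a value $\Delta^{\le K}_{root,0}(X,Y)$ which, with probability at least $1-1/n^2$, is an $(\alpha_{root},2\beta_{root})$-approximation of $\TD^{\le K}_{root,0}(X,Y)$. Via \cref{lem:captd} and \cref{lem:equiv-ed-td}, this is a $\Theta(b\log_b n)$-factor approximation of $\min(\ED(X,Y),K)$ at the moment of the invocation. Between refreshes, the cached answer may be up to $k/2$ updates stale, but a single character edit changes $\ED(X,Y)$ by at most $1$, so after inflating the upper threshold by an additive $k/2$ (which only affects the hidden constant), the stale value still distinguishes $\ED(X,Y)\le k/2$ from $\ED(X,Y) > \Theta(kb\log_b n)$. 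A union bound over the $\Oh(n/k)\le n$ invocations of \cref{alg:dyn-edY} and the high-probability guarantees of \cref{thm:sub} keeps the overall error probability below $1/n$.

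For the running time, component (1) contributes $b^2(\log n)^{\Oh(\log_b n)}$ amortized expected per update by \cref{thm:sub}, component (2) contributes $\Ot(1)$ amortized per update by \cref{prp:ipm}, and component (3) contributes $kb^2(\log n)^{\Oh(\log_b n)}$ per invocation by \cref{lem:runedY}, which amortizes to $b^2(\log n)^{\Oh(\log_b n)}$ per update across an epoch of length $k'=\Theta(k)$. Summing these bounds yields the claimed $b^2(\log n)^{\Oh(\log_b n)}$ amortized expected update time. The parametric specialization to $b = (\log n)^{\Theta(1/\epsilon)}$ is immediate arithmetic: $b^2(\log n)^{\Oh(\log_b n)} = (\log n)^{\Oh(1/\epsilon)}\cdot n^{\Oh(\epsilon)} = n^{\Oh(\epsilon)}$, while the approximation ratio $\Theta(b\log_b n)$ becomes $(\log n)^{\Theta(1/\epsilon)}$.

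The main obstacle is the bookkeeping around the interaction between the stale cache and the queries answered between refreshes: the two thresholds ``$\le k/2$'' and ``$>\Theta(kb\log_b n)$'' must simultaneously accommodate the multiplicative approximation slack from \cref{lem:equiv-ed-td} (a factor of $\Theta(b\log_b n)$), the extra additive $2\beta_{root}=\Theta(K)$ slack of the root-level estimate that arises because \cref{alg:dyn-edY} uses the same copying trick as \cref{lem:reuse}, and the additive staleness slack of up to $k/2$. Choosing $K = \Theta(k)$ with an adequate hidden constant and placing the reporting threshold (i.e., the cutoff applied to the cached value to decide the answer) at the right point of the interval $[\Theta(K),\,\Theta(Kb\log_b n)]$ is the step that requires care, but is otherwise a routine constant fight.
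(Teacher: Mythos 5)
Your assembly of components matches the paper's proof: maintain $T(X,X)$ via the substitution-only machinery of \cref{thm:sub}, maintain the dynamic decomposition via \cref{prp:ipm}, run {\sc Dyn-Edit-Y} lazily every $k/2$ updates with cost amortized via \cref{lem:runedY}, and appeal to \cref{lem:correctness} together with the observation that the edit distance drifts by at most $k/2$ inside an epoch. That is exactly the paper's argument, and the running-time accounting and the union-bound sketch are correct.

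There is, however, a genuine error in the parameter setting. You write that ``the threshold is set to $K = \Theta(k)$'' and later speak of a reporting cutoff ``at the right point of the interval $[\Theta(K),\Theta(Kb\log_b n)]$.'' This cannot be made to work. In this framework, $K$ is the cap applied to the tree distance, and by \cref{lem:captd} the quantity being estimated is $\TD^{\le K}_{root,0}(X,Y) = \min(\TD_T(X,Y),K)$. By \cref{lem:equiv-ed-td}, for $\ED(X,Y) \le k/2$ the uncapped tree distance can be as large as $\Theta(kb\log_b n)$. If $K = \Theta(k)$, the cap saturates well below that, so $\TD^{\le K}_{root,0}(X,Y) = K$ both when $\ED(X,Y) \le k/2$ (in the worst case) and when $\ED(X,Y) > \Theta(kb\log_b n)$; since the root estimate carries additive error $2\beta_{root}=\Theta(K)$, the two regimes produce indistinguishable outputs. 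For the scheme of \cref{sec:ako-static} the cap \emph{is} the upper gap threshold, so one must take $K = \Theta(kb\log_b n)$, exactly as \cref{lem:correctness} (and its static predecessor) does: with that $K$, the case $\ED \le k/2$ gives $\TD^{\le K}$ a small constant fraction of $K$, the case $\ED > K$ gives $\TD^{\le K}=K$, and the decision cutoff is a single constant fraction of $K$ (e.g.\ $0.099K$), not a point of an interval stretching up to $\Theta(Kb\log_b n)$. The $b\log_b n$ factor lives inside $K$, not outside it. The rest of your argument — in particular the invocation of \cref{lem:reuse} and \cref{cor:reuse} for the extra additive slack, and the staleness-by-$k/2$ bound — survives unchanged once $K$ is set correctly, and then your constants do close.
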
 
\begin{proof}
Since we run {\sc Dyn-Edit-Y} in a lazy fashion after every $k/2$ updates, amortized over these $k/2$ updates the expected running time is $b^2(\log{n})^{\Oh(\log_b{n})}$ from \cref{lem:runedY}. The approximation bound follows from the correctness analysis using \cref{lem:correctness} and by noting that within each epoch, the true edit distance can only change by $k/2$.
\end{proof}

\subsection{Proof of \cref{prp:ipm}}\label{subsec:ipm}

The workhorse behind \cref{prp:ipm} is the following lemma:
\lemipm
\begin{proof}
Our algorithm maintains four components borrowed from~\cite{GKKLS18,CKW20,KK22}:
\begin{itemize}
  \item A data structure for Longest Common Extension queries in $S:=X\cdot Y$. These queries return the length of the longest common prefix of two fragments 
  $S[i\dd j)$ and $S[i'\dd j')$, identified by their endpoints. This component supports updates and queries in $\Ot(1)$ time~\cite{GKKLS18,CKW20}.
  \item A data structure for Internal Pattern Matching queries in $S:=X\cdot Y$. These queries, given two fragments $P=S[i\dd j)$ and $T=S[i'\dd j')$ of $S$ (identified by their endpoints) such that $|P| \le |T| < 2|P|$, return an arithmetic progression representing the exact occurrences of $P$ in $T$,
  that is, positions $t\in [0\dd |T|-|P|]$ with $P = T[t\dd t+|P|)$. This component supports updates and queries in $\Ot(1)$ time~\cite{CKW20}.
  \item For every integer \emph{level} $\tau\in [0\dd \lfloor{\log |X|}\rfloor]$, we maintain a partition of $X$ into blocks of length between $2^{\tau}$ and $2^{\tau+2}$.
  Every edit in $X$ affects exactly one block at a given level~$\tau$. Once the edit is performed, the block might become too long (and we split it into two halves) or too short (and then we merge it with its neighbor, possibly splitting the resulting block if is too long). Every such rebalancing operation can be charged to $\Omega(2^\tau)$ individual edits.
  \item  The dynamic suffix array of the entire string $X$ and of the concatenation of any up to 10 consecutive blocks at any level $\tau$. The dynamic suffix array of $X$, given a rank $r\in [1\dd |X|]$ identifies the $r$th lexicographically smallest suffix of $r$. The implementation of~\cite{KK22} supports updates and queries in $\Ot(1)$ time and, by the discussion above regarding block updates, the amortized update time among all dynamic suffix array instances is still $\Ot(1)$.
\end{itemize}

We proceed with the description of the query algorithm. If $i-k > |X|$, 
then $i-s > |X|$ holds for all $s\in [-k\dd k]$. Consequently, there is not shift $s\in [-k\dd k]$ such that $Y[i\dd j)\ne X[i+s\dd j+s)$ (because the latter fragment is always ill-defined), and thus the algorithm may return NO.

If $|Y[i\dd j)|>2k$ and $i-k \le |X|$, then we simply use an Internal Pattern Matching query with $P=Y[i\dd j)$ and $T=X[i-k\dd j+k)$ (trimmed appropriately if $i-k < 0$ or $j+k > |X|$). This is valid because $|T|\le |P|+2k < 2|P|$. Moreover, positions $t\in [|T|-|P|]$ with $P=T[t\dd t+|P|)$ correspond to shift $s\in [-k\dd k]$ such that $Y[i\dd j)=X[i-s\dd i+s)$. In this case, the query time is $\Ot(1)$, dominated by the IPM query.

The same strategy can be applied if $2k\ge |Y[i\dd j)|>\frac14k$ and $i-k \le |X|$. In that case, however, instead of a single IPM query, we perform up to $8$ queries with the pattern $P=Y[i\dd j)$ and the texts $T$ of length $2|P|$ covering $X[i-k\dd j+k)$ with overlaps of length at least $|P|-1$.

If $i-k \le |X|\le k$, then $i-2k \le 0 \le |X|\le j+2k$. Hence, it suffices to check if $Y[i\dd j)$ has any occurrence in $X$.
For this, we perform a binary search on top of the dynamic suffix array of $X$ to obtain the lexicographically smallest suffix $X[x\dd |X|)$ with $X[x\dd |X|)\succeq Y[i\dd j)$. If $Y[i\dd j)$ has any occurrence in $X$, then one such occurrence must start at position $x$.
At each step, we retrieve the $r$th lexicographically smallest suffix $X[x\dd |X|)$ of $X$ and the length $\ell$ of the longest common prefix of $X[x\dd |X|)$ and $Y[i\dd j)$. The lexicographic order of $X[x\dd |X|)$ and $Y[i\dd j)$ is determined by the lexicographic order between $X[x+\ell]$ and $Y[i+\ell]$
(with the corner cases of $x+\ell=|X|$ and $i+\ell=j$ handled appropriately).

It remains to consider the case when $|Y[i\dd j)| \le \frac14k$,  $|X| > k$, and $i-k \le |X|$. 
We pick a level $\tau = \lfloor{\log(\frac14k)}\rfloor$ and consider the level-$\tau$ blocks covering the fragment $X[i-k\dd j+k)$ (again, trimmed appropriately if $i-k < 0$ or $j+k>|X|$). Since the fragment is of length at most $\frac54k \le 5\cdot 2^{\tau+1}$, it is covered by at most 10 level-$\tau$ blocks.
Moreover, since the level-$\tau$ blocks are of length at most $2^{\tau+2} \le k$, the concatenation of these blocks is contained within $X[i-2k\dd j+2k)$.
Thus, it suffices to check if $Y[i\dd j)$ occurs at least once in the concatenation of these blocks. For these, we use the dynamic suffix array of this concatenation just like we used the dynamic suffix array of the whole string $X$ in the previous case.
\end{proof}

We are now ready to prove \cref{prp:ipm}, whose statement is repeated next for convenience.

\prpipm*
\begin{proof}
  We maintain $X$ and $Y$ using the dynamic algorithm of \cref{lem:ipm}. Below we describe an $\Ot(k)$-time query algorithm.
  If $|Y| > |X|-k$, we report that $\ED(X,Y) > k$. Otherwise, we process $Y$ from left to right in at most $2k+1$ rounds.
  Suppose we have already constructed a decomposition $Y=\bigodot_{i=0}^{j-1} Y[y_i\dd y_{i+1})$
  and a sequence of shifts $s_0,\ldots, s_{2\lceil{j/2}\rceil-2}\in [-2k\dd 2k]$ satisfying the following invariant:
  \begin{itemize}
    \item If $i$ is odd, then $|Y[y_i\dd y_{i+1})|=1$.
    \item If $i$ is even, then $Y[y_i\dd y_{i+1})=X[y_i+s_i\dd y_{i+1}+s_i)$ and $Y[y_i\dd y_{i+1}]\ne X[y_i+s\dd y_{i+1}+s]$ for any $s\in [-k\dd k]$.
  \end{itemize}
  If $j$ is odd, we simply set $y_{j+1}=y_j+1$ so that $|Y[y_j\dd y_{j+1})|=1$.
  If $j$ is even, we perform binary search on top of the query algorithm of \cref{lem:ipm}. Throughout the binary search, we maintain positions $y_{j+1}\le y'_{j+1}$ and a shift $s_j$ such that $Y[y_j\dd y_{j+1})=X[y_j+s_i\dd y_{j+1}+s_j)$ and $Y[y_j\dd y'_{j+1}]\ne X[y_j+s\dd y_{j+1}+s]$ for any $s\in [-k\dd k]$. Initially, we set $y_{j+1}=y_j$, $y'_{j+1}=|Y|$, and $s_j = \min(0, |X|-y_j)$.
  At each step of the binary search, we pick $\bar{y}_{j+1} = \lceil\frac12(y_{j+1}+y'_{j+1})\rceil$ and make a \cref{lem:ipm} query for $Y[y_j \dd \bar{y}_j)$. In case of a no answer, we update $y'_{j+1}:=\bar{y}_{j+1}-1$.
  Otherwise, we update $y_{j+1}:=\bar{y}_{j+1}$ and underlying witness shift $s$.
  In $\Oh(\log |Y|)$ iterations, this process converges to $y_{j+1}=y_{j+1}$, which satisfies our invariant.

  There are two ways that this process may terminate. If we have constructed $2k+1$ phrases without reaching the end of $Y$,
  we report that $\ED(X,Y) > k$. This is valid because any alignment of cost $k$ yields a decomposition of $Y$ into at most $k+1$ fragments matched exactly 
  and at most $k$ individual characters that have been inserted or substituted. Moreover, if a fragment $Y[y\dd y')$ is matched exactly, then $Y[y\dd y')=X[y+s\dd y'+s)$ for some $s\in [-k\dd k]$.
  The other possibility is that we reach the end of $Y$ having before we have generated $2k+1$ phrases. Since even-numbered phrases can be empty, we may assume that this happens after we have generated an even-number phrase (and cannot generate the next odd-numbered phrase of length 1).
  In that case, we increment the number of phrases in the decomposition by repeatedly splitting a non-empty even-number phrase into 
  an empty even-numbered numbered phrase, a length-1 odd-numbered phrase, and a shorter even-numbered phrase (sharing the original shift).
  Since $k\le |Y|$, we can continue this process until the number of phrases reaches $2k+1$.

  Our query algorithm takes $\Oh(k\log |Y|)$ time on top of making $\Oh(k\log |Y|)$ queries of \cref{lem:ipm}.
  Hence, the total query time is $\Ot(1)$.
\end{proof}

\section{Edits in both X and Y}\label{sec:edXedY}

The algorithm we develop here is similar to the one in \cref{sec:subXedY}.
The difference is that we now maintain approximately the values $\Delta^{\le 3K}_{v,s}(X,X)$ 
subject to insertions and deletions rather than substitutions.
\subsection{Maintaining $T(X,X)$ subject to insertions and deletions}

To ease notation, we consider the problem of maintaining the values  $\Delta^{\le K}_{v,s}(X,X)$.
Our final algorithm, described in \cref{sec:edXedY:alg}, will use this procedure with parameter $3K$ instead of $K$.
Moreover, in this section we distinguish $|X|$ from $n$, which will be a static upper bound on $|X|$;
occasional rebuilding of the entire data structure lets us assume that $n=\Theta(|X|)$.

Recall that, for each active node $v$ of the precision tree $T$, we maintain a set of allowed shifts $S_v\sub [-K\dd K]$
whose elements are sampled independently with rate $\Theta(\beta_v^{-1}\cdot b_v \log n)$.
For each shift $s\in S_v$, we need to maintain the value $\Delta^{\le K}_{v,s}(X,X)$,
which should be an $(\alpha_v,\beta_v)$-approximation of $\TD^{\le K}_{v,s}(X,X)$.

\paragraph{Balancing the precision sampling tree}
The first difficulty is that the tree $T$ cannot be static anymore: insertions and deletions require adding and removing leaves.
Multiple insertions can violate our upper bound $\Oh(b)$ on the node degrees, 
and thus occasional rebalancing is needed to simultaneously bound the degrees and the tree depth $d \le 2\log_b n$.
What makes rebalancing challenging in our setting is that the data we maintain has bi-directional dependencies.
On the one hand, the values $\TD^{\le K}_{v,s}(X,X)$ we approximate depend on the shape of the subtree rooted at $v$.
On the other hand, our additive approximation rates $\beta_v$ depend on the values $u_w$ sampled at the ancestors of $w$.
Thus, we cannot use rotations or other local rules.
Instead, we use weight-balanced B-trees~\cite{BDF05,AV03}, which support $\Oh(\log_b n)$-amortized-time updates and satisfy several useful properties:
\begin{itemize}
  \item All leaves of $T$ have the same depth $d=\log_b |T| + \Theta(1)$.
  \item The root has between $2$ and $4b$ children, all internal nodes have between $\frac14b$ and $4b$ children.
  \item The number of leaves in the subtree of a non-root node $v$ at depth $d_v$ is between $\frac12b^{d-d_v}$ and $2b^{d-d_v}$;
  in other words, $\frac12 b^{d-d_v}\le |I_v|\le 2b^{d-d_v}$.
  \item If a non-root node $v$ at depth $d_v$ is rebalanced, then $\Omega(b^{d-d_v})$ updates in its subtree are necessary before it is rebalanced again.
\end{itemize}
The last property is crucial for us: whenever a node $v$ is rebalanced, we will recompute all the data associated to all its descendants $w$ (including the approximation rates $\beta_w$, the precisions $u_w$, the sets $S_w$, and the values $\Delta^{\le K}_{w,s}(X,X)$). Moreover,
we will recompute the values $\Delta^{\le K}_{w,s}(X,X)$ for all ancestors $w$ of $v$.

We separate the description of the rebalancing subroutine from the description of the actual update algorithm. 
Technically, the weight-balanced B-tree $T_B$ might contain one \emph{invisible} leaf that is not present in our precision sampling tree $T$.
Upon insertion of a new character, we first insert an invisible leaf to $T_B$ (which may cause some rebalancing),
and only then insert this leaf in $T$. Similarly, upon deletion of a character, we first make the underlying leaf invisible, and only then remove it from $T_B$. Thus, to maintain the precision sampling tree $T$, we only need to implement rebalancing (which happens when the invisible leaf is added to or removed from $T_B$), as well as leaf insertion and deletion (which changes the string $X$ but preserves the shape of $T$, modulo the leaf in question).

As alluded earlier, rebalancing is quite straightforward.
\lemrebalance
\begin{proof}
 If $v$ is not active, there is nothing to do because rebalancing changes neither $|I_v|$ nor $\beta_v$.
 Otherwise, we call the \Call{AKO}{$v,\beta_v$} procedure of \cref{alg:ako} that recomputes all the necessary data for the descendants of $v$.
 As analyzed in the proof of \cref{lem:ako-time}, the cost of this call is $b^2 \cdot (\log n)^{\Oh(\log_b n)}$ per node,
 which is $|I_v|\cdot b^2 \cdot (\log n)^{\Oh(\log_b n)}$.
 Finally, we traverse the path from $v$ to the root and, at each node $w$ with children $w_0,\ldots,w_{b_{w}-1}$,
 run the \Call{Combine}{$w,S_w,\{(h,s_h,\Delta_{w_h,s_h}^{\le K}(X,X)):h\in [0\dd b_w), s_h\in S_{w_h}\}$} procedure to recompute the values $\Delta_{w,s}^{\le K}(X,X)$ for all $s\in S_w$. This call is a part of the regular \Call{AKO}{$w,\beta_w$} procedure, so its running time is also $b^2 \cdot (\log n)^{\Oh(\log_b n)}$ per node and $\log_b n \cdot b^2 \cdot (\log n)^{\Oh(\log_b n)}= (\log n)^{\Oh(\log_b n)}$ in total.

 The correctness follows from the fact that $\TD^{\le K}_{w,s}(X,X)$ does not depend on the shape of the precision sampling tree outside the subtree of $w$.
\end{proof}

\paragraph{Leaf insertions and deletions}

From now on, we assume that an insertion or a deletion of a single character $X[x]$ does not alter the precision sampling tree except that the underlying leaf is added or removed.
Our strategy is essentially the same as in \cref{sec:subXsubY}: for each node $v$, identify the set of \emph{relevant} shifts $\ri{x}{v}$,
for each shift $s\in S_v\cap \ri{x}{v}$, update the underlying values $\Delta^{\le K}_{v,s}(X,X)$ based on the values $\Delta^{\le K}_{v_h,s_h}(X,X)$
obtained from the children $v_h$, with the shift $s_h$ restricted to $s_h \in S_{v_h}\cap \qi{x}{v}$, where $\qi{x}{v}$ consists of quasi-relevant shifts that are close to relevant ones.

Unfortunately, this approach would yield $\Omega(K^2)$ relevant shifts: for every leaf $v$ with $I_v=\{i\}$, the value $\TD^{\le K}_{v,s}(X,X)$
changes whenever $x\in (i_v \dd i_v+s)$. 
This is simply because if $X$ has been obtained by inserting $X[x]$ to  $X'=X[0\dd x)\cdot X(x\dd |X|)$,
then the character $X[i_v+s]$ is derived from $X'[i_v+s-1]$ rather than $X'[i_v+s]$. In other words, we then have  $\TD^{\le K}_{v,s}(X,X)=\TD^{\le K}_{v,s-1}(X',X')$. Thus, instead of recomputing $\Delta^{\le K}_{v,s}(X,X)$ we should rather observe that we can copy $\Delta^{\le K}_{v,s-1}(X',X')$ instead.
Consequently, our notion of relevant shifts does not cover the cases when we are guaranteed that $\TD^{\le K}_{v,s}(X,X)=\TD^{\le K}_{v,s\pm 1}(X',X')$.

\newrelevance

Our strategy of handling the shifts $s$ such that $\TD^{\le K}_{v,s}(X,X)=\TD^{\le K}_{v,s\pm 1}(X',X')$
is to associate labels with characters of $X$ that stay intact even while insertions and deletions change the character's indices. 
In particular, the shift $s$ at node $v$ should be identified with the label of $X[i_v+s]$. 
Since $i_v+s$ might be an out-of-bound index, we actually assign labels to characters of a longer string $\Xd = \$^K\cdot X \cdot \$^K$. 
The labels can be stored in a balanced binary search tree built on top of $\Xd$ so that the conversion between labels and (current) indices takes logarithmic time. This strategy has already been used in~\cite{KK22}, and thus we repurpose their formalism.

For an alphabet $\Sigma$, we say that a \emph{labelled string} over $\Sigma$ is a string over $\Sigma \times \Zz$.
For $c:=(a,\ell)\in \Sigma\times \Zz$, we say that $\mathrm{val}(c):=a$ is the \emph{value} of $c$ and $\mathrm{label}(c):=\ell$ is the \emph{label}
of $c$. For a labelled string $S\in (\Sigma\times\Zz)^*$, we define the set $\Lbl(S)=\{\mathrm{label}(S[i]): i\in [0\dd |S|)\}$.
We say that $S$ is uniquely labelled if $|\Lbl(S)|=\|S\|$; equivalently, for each label $\ell\in \Lbl(S)$,
there exists a unique position $i\in [0\dd |S|)$ such that $\ell=\mathrm{label}(S[i])$.
\begin{lemma}[{Labelled String Maintenance; \cite[ Lemma 10.1]{KK22}}]\label{cor:labelled}
  There is a data structure maintaining a uniquely labelled string $S\in (\Sigma\times \Zz)^*$ using the following interface:
  \begin{description}
    \item[$\ins(S,i,a,\ell)$:] Given $i\in [0\dd |S|]$, $a\in \Sigma$, and $\ell\in \Zz \setminus \Lbl(S)$, set $S:= S[0\dd i)\cdot (a,\ell)\cdot S[i\dd |S|)$.
    \item[$\del(S,i)$:] Given $i\in [0\dd |S|)$, set $S:= S[0\dd i)\cdot S(i\dd |S|)$.
    \item[$\lbl(S,i)$:] Given $i\in [0\dd |S|)$, return $\mathrm{label}(S[i])$.
    \item[$\val(S,i)$:] Given $i\in [0\dd |S|)$, return $\mathrm{val}(S[i])$.
    \item[$\unlbl(S,\ell)$:] Given $\ell\in \Lbl(S)$, return $i\in [0\dd |S|)$ such that $\ell=\mathrm{label}(S[i])$.
  \end{description}
  Assuming that the labels belong to $[0\dd L)$, each operation can be implemented in $\Oh(\log L)$ time.
\end{lemma}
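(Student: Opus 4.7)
The plan is to implement this as two coordinated balanced binary search trees with parent pointers: a \emph{position tree} $T_S$ in which the $i$th in-order node stores the pair $(a,\ell)$ currently at position $i$ of $S$, augmented with subtree sizes; and a \emph{label tree} $T_L$, keyed by $\ell \in [0\dd L)$, whose nodes store pointers to the corresponding nodes of $T_S$. For both trees I would use a variant with a worst-case $\Oh(\log L)$ height bound and whose rebalancing is done by pointer-preserving rotations (e.g., red--black or weight-balanced trees); this is crucial because the nodes of $T_S$ are pointed to from $T_L$, so pointers must remain valid after rebalancing.

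The five operations are then implemented as follows. For $\val(S,i)$ and $\lbl(S,i)$, descend $T_S$ from the root using subtree-size augmentations to locate the $i$th in-order node in $\Oh(\log L)$ time and read the stored value or label. For $\ins(S,i,a,\ell)$, descend $T_S$ to locate the target position, splice in a new node carrying $(a,\ell)$, update subtree-size fields on the affected path, rebalance, and then insert $\ell$ into $T_L$ with a pointer to this new $T_S$ node. For $\del(S,i)$, first locate the node in $T_S$, read its label $\ell$, delete it from $T_S$ (updating subtree sizes and rebalancing), and then delete $\ell$ from $T_L$. The interesting case is $\unlbl(S,\ell)$: look up $\ell$ in $T_L$, follow the stored pointer to the associated node $u$ of $T_S$, and then compute $u$'s in-order rank by walking $u \to \mathrm{root}$ using parent pointers and accumulating, for each step that comes up from a right child, the left subtree size of the current node plus one. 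Every step of this walk touches one $T_S$ node, so it costs $\Oh(\mathrm{depth}(T_S)) = \Oh(\log L)$.

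The main subtlety, and the step that needs the most care, is making sure that the pointers from $T_L$ into $T_S$ remain valid throughout the sequence of updates, and that the rank computation in $\unlbl$ is indeed $\Oh(\log L)$ worst-case. Both are achieved by choosing a balanced BST whose rotations rearrange parent/child pointers rather than swapping key--value contents between nodes, and whose worst-case height is $\Oh(\log |S|)\subseteq \Oh(\log L)$; red--black trees and weight-balanced B-trees both qualify. One also needs to verify that subtree-size augmentations can be maintained in $\Oh(1)$ per rotation, which is standard. Everything else—locating the $i$th element, inserting, deleting, rebalancing—is the textbook order-statistic tree machinery, so the total cost per operation is $\Oh(\log L)$ as claimed, which completes the proof.
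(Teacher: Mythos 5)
The paper does not actually prove this lemma; it imports it verbatim as [KK22, Lemma~10.1], so there is no in-paper argument to compare against. Your construction---an order-statistic balanced BST over positions coordinated with a label-keyed balanced BST holding pointers into it, with $\unlbl$ implemented by a pointer-follow plus a parent-walk that accumulates left-subtree sizes---is the standard way to realize exactly this interface and is correct; in particular you correctly flag the one real pitfall, namely that deletion/rebalancing must relink nodes rather than swap their payloads so that the cross-pointers from $T_L$ into $T_S$ stay valid, and you correctly observe that $|S|\le L$ (unique labels drawn from $[0\dd L)$) makes all tree heights $\Oh(\log L)$. This matches the intent of the cited result, so I would accept it as a self-contained proof of the black-boxed lemma.
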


Our data structure uses \cref{cor:labelled} to maintain $\Xd = \$^{K}\cdot X \cdot \$^{K}$ as a uniquely labelled string (upon insertions, we assign a fresh label to the inserted character). This string extends $X$ with sentinel characters so that we can assume that the label of $X[i]$ is defined as long as $i\in [-K\dd |X|+K)$.
For a node $v\in T$ with $I_v = [i_v\dd j_v]$, we shall store $\lft_v = \lbl(\Xd,i_v+K)$ and $\rgt_v = \lbl(\Xd,j_v+K)$.%
\footnote{Observe that $X=\Xd[K\dd K+|X|)$, so $\Xd[i_v+K\dd j_v+K]$ corresponds to $X_v = X[i_v\dd j_v]$.}
Observe that, unlike indices $i_v,j_v$, which are shifted by edits at position $x<i_v$, the labels $\lft_v$ and $\rgt_v$ generally stay intact (unless the leaves in the subtree of $v$ are inserted or deleted). For a similar reason, we use labels to represent shifts. Specifically, for a node $v$ and label $\ell\in \Lbl(\Xd)$, we define $\shf_v(\Xd,\ell) = \unlbl(\Xd,\ell)-\unlbl(\Xd,\lft_v)$. 
Furthermore, if $s:= \shf_v(\Xd,\ell)\in [-K\dd K]$, we denote $\TD_{v,\ell}^{\le K}(X,X):=\TD_{v,s}^{\le K}(X,X)$.
We also say that $(v,\ell)$ is a \emph{relevant} for an insertion at $X[x]$ whenever $(v,\shf_v(\Xd,\ell))$ is relevant for that insertion.

The remaining challenge is that if an insertion at $X[x]$ transforms $X'$ to $X$ and $\shf_v(\Xd,\ell)\ne \shf_v(\Xd',\ell)$, then we should also reflect the change in the contents of the set $S_v$ of allowed shifts. In particular, our algorithm now maintains a set $L_v = \{\ell\in \Lbl(\Xd) : \shf_v(\Xd,\ell)\in S_v\}$, which is more `stable' than $S_v$ itself. The only issue is that, unlike $[-K\dd K]$, which is static, the set $\{\ell\in \Lbl(\Xd) : \shf_v(\Xd,\ell)\in [-K\dd K]\}$ may change. Fortunately, subject to an insertion at $X[x]$, it may change only for nodes in the scope of the insertion,
with at most one element entering this set (the label of the newly inserted character) and one leaving it (which used to correspond to $\pm K$ but now corresponds to $\pm (K+1)$). Thus, our algorithm can keep track of these changes, and, whenever a new label is inserted, we can insert it to $L_v$ with probability $\max(1, 48\beta_v^{-1}(b_v+1)\ln n)$ so that the set $S_v$ is still distributed as in \cite{AKO10}.

We are now ready to formally describe our insertion algorithm.
The algorithm for deletions is symmetric; in particular, at each node $v$, the set of \emph{irrelevant labels} is the same as for insertions.

\begin{algorithm}[t]
\caption{Dynamic-Insertion-X} \label{alg:dyn-insert}
\begin{algorithmic}[1]
\medskip
\Statex \textbf{Input:}
The location $x$ of the insertion, a node $v\in T$.
\Statex \textbf{Output:} The updated sets $L_v$ and values $\Delta_{v,\ell}^{\le K}(X,X)$ for $\ell\in L_v$.
\medskip
\Statex \hrule
\Procedure{Insert-$X$}{$x, v$}\label{alg:dyn-ins}
\If{$|I_v| > \beta_v$ \textbf{and} $\ri{x}{v}\ne \emptyset$}\label{alg:dyn-insert:terminate-y}
\State $i_v := \unlbl(\Xd, \lft_v)-K$
\If{$x\in [i_v-K\dd i_v+K]$}
  \State With probability $\max(1, 48\beta_v^{-1}(b+1)\ln n)$, insert $\lbl(\Xd,x+K)$ to $L_v$
  \State Remove $\lbl(\Xd,i_v-1)$ and $\lbl(\Xd,i_v+2K+1)$ from $L_v$
\EndIf
\If{$v$ is a leaf}
    \State\Return $\Delta_{v,s}^{\le K}(X,X) = \ED(X_v, X_{v, s})$ for all $s \in S_{v}\cap \ri{x}{v}$ \label{alg:dyn-insert:leaf-x}
\EndIf
\State Recursively run \Call{Insert-$X$}{$x,v_h$} for $h\in [0\dd b_v)$
\State \Return \Call{Combine}{$v,\ri{x}{v}\cap S_v,\{(h,s_h,\Delta_{v_h,s_h}^{\le K}(X,X)): h\in [0\dd b_v), s_h\in \qi{x}{v}\cap S_{v_h}\}$}\label{alg:dyn-insert:combine-x}
\EndIf
\EndProcedure
\end{algorithmic}
\end{algorithm}

\paragraph*{Analysis}

\begin{lemma}\label{lem:locality_edit}
  Consider a string $X\in \Sigma^+$, a position $x\in [0\dd |X|)$, and a string $X'=X[0\dd x)\cdot X(x\dd |X|)$.
  Consider a node $v\in T$ with $x\notin I_v = [i_v\dd j_v]$ as well as a shift $s\in [-K\dd K]$ such that $\TD^{\le K}_{v,s}(X,X) = t$.
  \begin{enumerate}
    \item If $x > \max(j_v,j_v+s+t)$, then $\TD^{\le K}_{v,s}(X,X)=\TD^{\le K}_{v,s}(X',X')$.
    \item If $i_v+s-t > x > j_v$ and $|s|+t<K$, then $\TD^{\le K}_{v,s}(X,X)=\TD^{\le K}_{v,s-1}(X',X')$.
    \item If $j_v+s+t < x < i_v$ and $|s|+t<K$, then $\TD^{\le K}_{v,s}(X,X)=\TD^{\le K}_{v,s+1}(X',X')$.
    \item If $x < \min(i_v,i_v+s-t)$, then $\TD^{\le K}_{v,s}(X,X)=\TD^{\le K}_{v,s}(X',X')$.
  \end{enumerate}
\end{lemma}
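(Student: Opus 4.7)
The plan is to prove \cref{lem:locality_edit} by induction on the height of $v$ in $T$, following the template of \cref{lem:locality} but with a refined four-way case analysis to accommodate the change of length between $X$ and $X'$ and the corresponding re-indexing of $T'$.

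For the base case, when $v$ is a leaf with $I_v=\{i_v\}$, I would evaluate $\TD^{\le K}_{v,s}(X,X)=\min(\ED(X[i_v], X[i_v+s]), K-|s|)$ and its counterpart in $X'$ directly. The new leaf has index $i_{v,T'}=i_v$ in items 1, 2 (because $x > j_v \ge i_v$) and $i_{v,T'}=i_v-1$ in items 3, 4 (because $x < i_v$); the shift used on the $X'$-side is $s$ in items 1, 4 and $s \mp 1$ in items 2, 3. A short coordinate check using $X'[j]=X[j]$ for $j < x$ and $X'[j]=X[j+1]$ for $j \ge x$ then shows $X'[i_{v,T'}] = X[i_v]$ and $X'[i_{v,T'}+s'] = X[i_v+s]$ (with out-of-bounds contributions of $1$ matching on both sides) in every item, establishing equality of the leaf-level tree distances. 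In items 2, 3 the hypothesis $|s|+t<K$ is additionally needed so that $s \mp 1 \in [-K\dd K]$ and the cap $K-|s\mp 1|$ stays $\ge t$.

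For the inductive step, I would apply the recursive formula at $v$ and pick an optimal family of child shifts $(s_h^*)_h$ realizing $\TD^{\le K}_{v,s}(X,X)=t$, so that $t_h := \TD^{\le K}_{v_h, s_h^*}(X,X)$ satisfies $2|s-s_h^*|+t_h \le t$ for every child. The key claim is that each $(v_h, s_h^*)$ lies in the same item as $(v,s)$: for item 1 this is $j_{v_h}+s_h^*+t_h \le j_v+s+t < x$; for item 2 it combines $i_{v_h}+s_h^*-t_h \ge i_v+s-(t+t_h)/2 \ge i_v+s-t > x$ with $|s_h^*|+t_h \le |s|+(t+t_h)/2 \le |s|+t < K$; items 3, 4 are symmetric. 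The induction hypothesis then supplies $\TD^{\le K}_{v_h, s_h^*}(X',X')=t_h$ in items 1, 4 and $\TD^{\le K}_{v_h, s_h^* \mp 1}(X',X')=t_h$ in items 2, 3. Using the shifts $(s_h^*)$ or $(s_h^* \mp 1)$ at the children of $v$ in $T'$ and noting $|s' - (s_h^* \mp 1)| = |s-s_h^*|$, the sum $\sum_h(t_h + 2|s-s_h^*|) = t$ is realized in $T'$ as well, giving $\TD^{\le K}_{v,s'}(X',X') \le t$ with $s'=s$ (items 1, 4) or $s'=s\mp 1$ (items 2, 3). The matching lower bound will follow by swapping roles and running the argument starting from an optimal family for $\TD^{\le K}_{v,s'}(X',X')$.

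The main obstacle will be the case-propagation claim above. The triangle-type inequality $2|s-s_h^*|+t_h \le t$ coming from the recursive definition of tree distance is exactly what forces the item at each child to match the item at $v$; for items 2 and 3 one must additionally track the strict bound $|s_h^*|+t_h < K$ so that $s_h^* \mp 1$ stays in $[-K\dd K]$ and the cap $K-|s_h^* \mp 1|$ does not activate further down the recursion. The $-1$ re-indexing of $T'$ in items 3, 4 is handled uniformly by the observation that intra-tree shift differences $|s-s_h^*|$ are invariant under the re-indexing, so the alignment cost computed in $T'$ matches that in $T$ coordinate by coordinate.
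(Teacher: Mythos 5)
Your proof follows essentially the same route as the paper's: induction on the height of $v$, with the recursive bound $t_h + 2|s-s_h^*|\le t$ propagating the case hypotheses to the children, and a leaf-level coordinate check. The paper proves only items 1 and 2 (where $x>j_v$ and the interval $I_v$ keeps the same numerical indices in $T'$) and appeals to symmetry for items 3 and 4; your explicit re-indexing $i_{v,T'}=i_v-1$ for items 3 and 4 is a perfectly fine alternative to that shorthand and is actually a bit more self-contained.

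Two places where you gloss over a detail that the paper handles explicitly. First, in items 1 and 4 the cap may bind: if $t = K-|s|$, then the optimal family of child shifts need \emph{not} satisfy $\sum_h A_{h,s}=t$, and the individual bound $2|s-s_h^*|+t_h\le t$ on which your case-propagation rests is no longer guaranteed. The paper dispatches this subcase trivially ($\TD^{\le K}_{v,s}(X',X')\le K-|s|=t$ is immediate from the cap) before passing to the non-degenerate case; your argument needs the same branch. (In items 2 and 3 the hypothesis $|s|+t<K$ already excludes this.) Second, "swapping roles'' for the matching lower bound is not quite a literal symmetry, because the case conditions are phrased in terms of $t=\TD^{\le K}_{v,s}(X,X)$, not $t'=\TD^{\le K}_{v,s'}(X',X')$. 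What makes the argument go through — and what the paper states explicitly — is that the upper bound already gives $t'\le t$, so for an optimal family of child shifts on the $X'$-side one has $2|s'-s_h'|+t_h'\le t'\le t$, which is exactly the inequality needed to push the \emph{same} item's hypothesis (stated with the $X$-side $t$) down to the children. With these two small additions your argument matches the paper's proof.
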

\begin{proof}
By symmetry, it suffices to consider the two cases involving $x>j_v$.

\paragraph*{$x > \max(j_v,j_v+s+t)$.}
We proceed by induction on the height of $v$.
If $v$ is a leaf, then $\TD^{\le K}_{v,s}(X,X)=\min(\ED(X[i_v],X[i_v+s]),K-|s|)=\min(\ED(X'[i_v],X'[i_v+s]),K-|s|)=\TD^{\le K}_{v,s}(X',X')$.
Next, suppose that $v$ is an internal node with $b_v$ children $v_0,\ldots,v_{b_v-1}$.
Recall that $\TD^{\le K}_{v,s}(X,X)=\min\left(K-|s|,\sum_{h=0}^{b_v-1} A_{h,s}\right)$, where $A_{h,s}=\min_{s_h\in [-K\dd K]} \left(\TD^{\le K}_{v_h,s_h}(X,X)+2|s-s_h|\right)$. The value $\TD^{\le K}_{v,s}(X',X')$ can be expressed analogously using $A'_{h,s}$.

Let us start by proving $\TD^{\le K}_{v,s}(X',X')\le \TD^{\le K}_{v,s}(X,X)$.
If $\TD^{\le K}_{v,s}(X,X)=K-|s|$, then $\TD^{\le K}_{v,s}(X',X')\le K-|s|=\TD^{\le K}_{v,s}(X,X)$ holds trivially.
Otherwise, we must have $A_{h,s}\le t$ for every $h\in [0\dd b_v)$.
Let us fix a shift $s_h$ such that $A_{h,s}=\TD^{\le K}_{v_h,s_h}(X,X)+2|s-s_h|$ and denote $t_h = \TD^{\le K}_{v_h,s_h}(X,X)$.
Observe that $t_h \le t - 2|s-s_h| \le t - |s-s_h|$.
Consequently, $x > j_v+s_h+t_h$.
Since $I_{v_h}\sub I_v$, we can use the inductive hypothesis, $\TD^{\le K}_{v_h,s_h}(X',X')=\TD^{\le K}_{v_h,s_h}(X,X)$, which implies $A'_{h,s}\le A_{h,s}$. This inequality is valid for all $h\in[0\dd b_v)$, and therefore $\TD^{\le K}_{v,s}(X',X')\le \sum_{h=0}^{b_v-1} A'_{h,s}\le  \sum_{h=0}^{b_v-1} A_{h,s} = \TD^{\le K}_{v,s}(X,X)$ holds as claimed.

It remains to show $\TD^{\le K}_{v,s}(X,X)\le \TD^{\le K}_{v,s}(X',X')$; note that we already have $\TD^{\le K}_{v,s}(X',X')\le t$.
If $\TD^{\le K}_{v,s}(X',X')=K-|s|$, then $\TD^{\le K}_{v,s}(X,X)\le K-|s|=\TD^{\le K}_{v,s}(X',X')$ holds trivially.
Otherwise, we must have $A'_{h,s}\le t$ for every $h\in [0\dd b_v)$.
Let us fix a shift $s'_h$ such that $A'_{h,s}=\TD^{\le K}_{v_h,s'_h}(X',X')+2|s-s'_h|$ and denote $t'_h = \TD^{\le K}_{v_h,s'_h}(X',X')$.
Observe that $t'_h \le t - 2|s-s'_h| \le t - |s-s'_h|$.
Consequently, $x > j_v+s'_h+t'_h$.
Since $I_{v_h}\sub I_v$, we can use the inductive hypothesis, $\TD^{\le K}_{v_h,s'_h}(X,X)=\TD^{\le K}_{v_h,s'_h}(X',X')$, which implies $A_{h,s}\le A'_{h,s}$. This inequality is valid for all $h\in[0\dd b_v)$, and therefore $\TD^{\le K}_{v,s}(X,X)\le \sum_{h=0}^{b_v-1} A_{h,s}\le  \sum_{h=0}^{b_v-1} A'_{h,s} = \TD^{\le K}_{v,s}(X',X')$ holds as claimed.

\paragraph*{$i_v+s-t > x > j_v$ and $|s|+t<K$.}
We proceed by induction on the height of $v$.
If $v$ is a leaf, then $\TD^{\le K}_{v,s}(X,X)=\min(\ED(X[i_v],X[i_v+s]),K-s)=\ED(X[i_v],X[i_v+s])=\ED(X'[i_v],X'[i_v+s-1])=\min(\ED(X'[i_v],X'[i_v+s-1]),K-s+1) = \TD^{\le K}_{v,s-1}(X',X')$.
Next, suppose that $v$ is an internal node with $b_v$ children $v_0,\ldots,v_{b_v-1}$.
Recall that $\TD^{\le K}_{v,s}(X,X)=\min\left(K-|s|,\sum_{h=0}^{b_v-1} A_{h,s}\right)$, where $A_{h,s}=\min_{s_h\in [-K\dd K]} \left(\TD^{\le K}_{v_h,s_h}(X,X)+2|s-s_h|\right)$. The value $\TD^{\le K}_{v,s-1}(X',X')$ can be expressed analogously using $A'_{h,s-1}$.

Let us start by proving $\TD^{\le K}_{v,s-1}(X',X')\le \TD^{\le K}_{v,s}(X,X)$.
Since $t < |K|-s$, we must $A_{h,s}\le t$ for every $h\in [0\dd b_v)$.
Let us fix a shift $s_h$ such that $A_{h,s}=\TD^{\le K}_{v_h,s_h}(X,X)+2|s-s_h|$ and denote $t_h = \TD^{\le K}_{v_h,s_h}(X,X)$.
Observe that $t_h \le t - 2|s-s_h| \le t - |s-s_h|$.
Consequently, $i_v+s_h-t_h > x$ and $|s_h|+t_h < K$. 
Since $I_{v_h}\sub I_v$, we can use the inductive hypothesis, $\TD^{\le K}_{v_h,s_h-1}(X',X')=\TD^{\le K}_{v_h,s_h}(X,X)$, which implies $A'_{h,s-1}\le A_{h,s}$. This inequality is valid for all $h\in[0\dd b_v)$, and therefore $\TD^{\le K}_{v,s-1}(X',X')\le \sum_{h=0}^{b_v-1} A'_{h,s-1}\le  \sum_{h=0}^{b_v-1} A_{h,s} = \TD^{\le K}_{v,s}(X,X)$ holds as claimed.

It remains to show $\TD^{\le K}_{v,s}(X,X)\le \TD^{\le K}_{v,s-1}(X',X')$; note that we already have $\TD^{\le K}_{v,s-1}(X',X')\le t$.
If $\TD^{\le K}_{v,s-1}(X',X')=K-s+1$, then $\TD^{\le K}_{v,s}(X,X)\le K-s < K-s+1=\TD^{\le K}_{v,s-1}(X',X')$ holds trivially.
Otherwise, we must have $A'_{h,s-1}\le t$ for every $h\in [0\dd b_v)$.
Let us fix a shift $s'_h$ such that $A'_{h,s-1}=\TD^{\le K}_{v_h,s'_h-1}(X',X')+2|s-s'_h|$ and denote $t'_h = \TD^{\le K}_{v_h,s'_h-1}(X',X')$.
Observe that $t'_h \le t - 2|s-s'_h| \le t - |s-s'_h|$.
Consequently, $i_v+s'_h-t'_h > x$ and $|s'_h|+t'_h < K$.
Since $I_{v_h}\sub I_v$, we can use the inductive hypothesis, $\TD^{\le K}_{v_h,s'_h}(X,X)=\TD^{\le K}_{v_h,s'_h-1}(X',X')$, which implies $A_{h,s}\le A'_{h,s-1}$. This inequality is valid for all $h\in[0\dd b_v)$, and therefore $\TD^{\le K}_{v,s}(X,X)\le \sum_{h=0}^{b_v-1} A_{h,s}\le  \sum_{h=0}^{b_v-1} A'_{h,s-1} = \TD^{\le K}_{v,s-1}(X',X')$ holds as claimed.
\end{proof}

\corlocalityedit
\begin{proof}
  Let $I_v = [i_v\dd j_v]$ and note that $t := \TD_{v,\ell}^{\le K}(X,X) \le  K_v$.
  By symmetry, we assume without loss of generality that $x > j_v$.
  If $x > i_v + s$, then $\shf_v(\Xd',\ell)=\unlbl(\Xd',\ell)-\unlbl(\Xd',\lft_v)=(i_v+s)-i_v = s$.
  Due $x\notin [i_v+s-K_v\dd j_v+s+K_v]$, we further have 
  $x > j_v + s + t$, so \cref{lem:locality_edit} implies $\TD_{v,\ell}^{\le K}(X,X)=\TD_{v,s}^{\le K}(X,X)=\TD_{v,s}^{\le K}(X',X')=\TD_{v,\ell}^{\le K}(X',X')$.
  If $x < i_v + s$, on the other hand, then $\shf_v(\Xd',\ell)=\unlbl(\Xd',\ell)-\unlbl(\Xd',\lft_v)= (i_v+s-1)-i_v = s-1$.
  Since $x\notin [i_v+s-K_v\dd j_v+s+K_v]$, we further have 
  $x < i_v + s - t$. 
  In particular, $x\in [i_v \dd j_v+s]$, which additionally implies $|s|< K-|I_v| \le K-t$, and thus $|s|+t < K$.
  Consequently, \cref{lem:locality_edit} implies $\TD_{v,\ell}^{\le K}(X,X)=\TD_{v,s}^{\le K}(X,X)=\TD_{v,s-1}^{\le K}(X',X')=\TD_{v,\ell}^{\le K}(X',X')$.
\end{proof}

\begin{lemma}\label{lem:subtree-edit}
  The nodes within the scope of an insertion at $Y[i]$ form a connected subtree of $T$. 
  Moreover, the total number of quasi-relevant shifts across all these nodes is $\Oh(K\cdot d)$, where $d$ is the height of $T$.
 \end{lemma}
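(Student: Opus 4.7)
The plan is to mirror the proof of \cref{lem:subtree}, extending it to accommodate the new third clause of \cref{def:newrelevance} (which arises because of the $K-|s|$ cap in \cref{def:captd}). First I would establish the connected subtree claim by verifying that each of the three conditions defining $s \in \ri{x}{v}$ is monotone in $I_v$: condition~1 ($x \in I_v$) is clearly preserved when passing to an ancestor; condition~2 ($[x-s-K_v \dd x-s+K_v] \cap I_v \ne \emptyset$) is preserved because enlarging $I_v$ can only enlarge the intersection while $K_v$ does not decrease; and condition~3 is preserved for the same reason together with the fact that $K - |I_v|$ can only shrink as $I_v$ grows. Consequently, if $v$ is in scope of the insertion at $X[x]$, so is every ancestor of $v$, giving the claimed connected subtree structure rooted at the root of $T$.

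For the quasi-relevant shift count, I would treat the three conditions separately and show that each node contributes at most $O(|I_v|)$ quasi-relevant shifts. Conditions~1 and~2 together yield at most $O(|I_v|)$ relevant shifts exactly as in \cref{lem:subtree}: condition~2 forces $s$ into a range of length $|I_v| + 2K_v \le 3|I_v|$. For condition~3, the case $x \in I_v$ is already absorbed by condition~1, so assume $x \notin I_v$ and, by symmetry, $x > j_v$; then we need both $x - s \le j_v$ and $|s| \ge K - |I_v|$, so $s \in [\max(x - j_v,\, K - |I_v|) \dd K]$, an interval of length at most $|I_v|$. Hence at each node the relevant shifts occupy at most $O(|I_v|)$ values, and expanding each relevant shift by $\pm K_v \le \pm |I_v|$ to obtain $\qi{x}{v}$ still gives $O(|I_v|)$ shifts per node.

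Next I would perform the level-wise analysis. Every in-scope node $v$ has $I_v$ intersecting the band $B := [x - 2K \dd x + 2K]$: for conditions~1 and~2 this is because any relevant shift satisfies $|s| + K_v \le 2K$, and condition~3 is even tighter since then $[\min(x-s, x) \dd \max(x-s, x)] \subseteq [x-K \dd x+K]$ must already meet $I_v$. At a fixed level of $T$, at most two in-scope nodes have intervals that are not contained in $B$, and each contributes at most $|\qi{x}{v}| \le 2K + 1 = O(K)$ quasi-relevant shifts. For the remaining in-scope nodes at that level, $I_v \subseteq B$, so their lengths $|I_v|$ are disjoint and sum to at most $|B| \le 4K+1$, and hence their total contribution of quasi-relevant shifts is $O(K)$. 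Summing over the $d$ levels of $T$ yields the claimed $O(K \cdot d)$ bound.

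The main obstacle is verifying that the newly introduced clause~3 does not blow up either the per-node contribution or the per-level total, since it decouples relevance from the nearness condition used for conditions~1 and~2. Once its contribution is bounded by $O(|I_v|)$ per node and the band containment argument is confirmed to still apply (as above), the rest of the argument is essentially identical to \cref{lem:subtree}.
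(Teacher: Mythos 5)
Your overall strategy — establishing connectivity by monotonicity of each clause in $I_v$, then bounding quasi-relevant shifts per level via a band-containment argument — matches the paper's proof structure, and the treatment of condition~3 (the new clause forcing $|s|\ge K-|I_v|$) is handled correctly, as is the observation that all three clauses force $I_v$ to meet $B=[x-2K\dd x+2K]$.

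However, there is a genuine gap in the per-node bound. You claim that "each node contributes at most $O(|I_v|)$ quasi-relevant shifts," citing conditions~1 and~2 together, but this is false for the node $v$ satisfying condition~1, i.e.\ $x\in I_v$. For that node, $\ri{x}{v}=[-K\dd K]$, so $|\qi{x}{v}|=2K+1$ regardless of $|I_v|$; when $|I_v|\ll K$ (e.g.\ a leaf), this is vastly larger than $O(|I_v|)$. Your later level-wise accounting relies on this per-node bound for the nodes with $I_v\subseteq B$, and the $x\in I_v$ node may well have $|I_v|\le 2K+1$ and hence lie in that category — so the argument as written does not bound its contribution. (Your remark that condition~3 is "absorbed" by condition~1 when $x\in I_v$ does not help: condition~1 alone already makes all $2K+1$ shifts relevant.) The fix is to single out this at-most-one-per-level node and charge it $2K+1=\Oh(K)$ directly, exactly as the paper does ("Moreover, there is at most one node for which $x\in I_v$, and it contributes at most $2K+1$ quasi-relevant shifts"), after which the remaining nodes with $x\notin I_v$ and $I_v\subseteq B$ can safely be charged $\Oh(|I_v|)$ each. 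With that correction, the level totals and the final $\Oh(K\cdot d)$ bound go through.
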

 \begin{proof}
  As for the connectivity, it suffices to observe that if $s\in \ri{x}{v}$, then $s\in \ri{x}{w}$ holds for every ancestor of $w$ (because the definition of $\ri{x}{v}$ is monotone with respect to $I_v$).

  Next, observe that $s\in \ri{x}{v}$ only if $[x-2K\dd x+2K]\cap I_v \ne \emptyset$ (because $|s|+K_v \le 2K$).
  Let us fix a single level of the tree $T$. At this level, there are at most two nodes $v$ for which $I_v$ intersects $[i-2K\dd i+2K]$
  yet is not contained in $[x-2K\dd x+2K]$. These two nodes contribute at most $4K+2$ quasi-relevant shifts.
  Moreover, there is at most one node for which $x\in I_v$, and it contributes at most $2K+1$ quasi-relevant shifts.
  As for the remaining nodes, observe that $s\in \ri{x}{v}$ only if $[i-s-|I_v| \dd i-s+|I_v|]\cap I_v\ne \emptyset$
  or $|s|>K-|I_v|$. Consequently, $s\in \qi{x}{v}$ only if $[i-s-2|I_v| \dd i-s+2|I_v|]\cap I_v\ne \emptyset$
  or $|s|>K-2|I_v|$ Thus, each node $v$ (such that $x\notin I_v$) contributes at most $9|I_v|$ quasi-relevant shifts.
  Across all nodes $v$ (at the considered level) with $I_v \sub [x-2K\dd x+2K]$, this yields $36K+9$ quasi-relevant shifts.
 Overall, we conclude that each level of $T$ contributes at most $42K+12$ quasi-relevant shifts.
 Across all the $d$ levels, this gives a total of $\Oh(K\cdot d)$.
 \end{proof}

 \begin{lemma}\label{lem:active-edit}
 The expected number of active nodes in the scope of an insertion at $X[x]$ is $(\log{n})^{\Oh(\log_b{n})}$.
 \end{lemma}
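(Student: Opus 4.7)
The proof will mirror the strategy of \cref{lem:active} almost verbatim, combining the per-node active-probability bound from \cref{eq:active} with the global bound on the total number of quasi-relevant shifts provided by \cref{lem:subtree-edit}. The only new ingredient I need to verify is that \cref{def:newrelevance}, with its extra third clause, still forces every in-scope node $v$ to carry at least $\Omega(K_v)$ quasi-relevant shifts.

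First I would recall from \cref{eq:active} that, for any node $v$, Markov's inequality together with \cref{lem:exp-precision} gives
\[
\Pr[v\text{ is active}] \le \frac{|I_v|}{K}\cdot (\log n)^{\Oh(\log_b n)},
\]
and, since the probability is at most $1$ in any case, this can be rewritten as $\frac{K_v}{K}\cdot (\log n)^{\Oh(\log_b n)}$ using $K_v = \min(K,|I_v|)$. Second, I would show that every node $v$ in scope of an insertion at $X[x]$ satisfies $|\qi{x}{v}|\ge K_v$. This is immediate from \cref{def:newrelevance}: by assumption some $s^*\in \ri{x}{v}\ne \emptyset$ exists, and every shift $s\in [-K\dd K]$ with $|s-s^*|\le K_v$ belongs to $\qi{x}{v}$, so $|\qi{x}{v}|\ge \min(2K_v+1,\,2K+1)\ge K_v$. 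Combining the two, for every in-scope node $v$ we obtain
\[
\Pr[v\text{ is active}] \le \frac{|\qi{x}{v}|}{K}\cdot (\log n)^{\Oh(\log_b n)}.
\]

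Finally I would sum over all nodes in scope. By linearity of expectation and \cref{lem:subtree-edit}, which guarantees that $\sum_{v} |\qi{x}{v}| = \Oh(K\log_b n)$ where the sum runs over the connected subtree of in-scope nodes, the expected number of active in-scope nodes is at most
\[
\frac{1}{K}\cdot (\log n)^{\Oh(\log_b n)} \cdot \Oh(K\log_b n) = (\log n)^{\Oh(\log_b n)},
\]
which is the desired bound.

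The main potential obstacle is verifying that the third clause of \cref{def:newrelevance}, which creates relevant shifts $s$ with $|s|\ge K-|I_v|$ whenever $x$ lies between $I_v$ and its shift, does not break the $|\qi{x}{v}|\ge K_v$ lower bound. However this clause can only add shifts to $\ri{x}{v}$; since the bound only requires $\ri{x}{v}\ne \emptyset$ plus the $K_v$-neighborhood expansion, the argument goes through without modification. No other step requires changes beyond invoking \cref{lem:subtree-edit} in place of \cref{lem:subtree}.
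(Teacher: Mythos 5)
Your proof is correct and follows essentially the same three-step route as the paper: bound $\Pr[v\text{ active}]$ by $\frac{K_v}{K}\cdot(\log n)^{\Oh(\log_b n)}$ via \cref{eq:active}, observe that every in-scope node has $|\qi{x}{v}|\ge K_v$, and sum using the $\Oh(K\log_b n)$ total from \cref{lem:subtree-edit}. One small slip: the intermediate bound $|\qi{x}{v}|\ge \min(2K_v+1,2K+1)$ is not quite right when $s^*$ sits near $\pm K$ (the $K_v$-neighborhood then gets truncated to only $K_v+1$ shifts inside $[-K\dd K]$), but since you only use the weaker consequence $|\qi{x}{v}|\ge K_v$, the argument goes through unchanged and matches the paper's.
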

 \begin{proof}
 Recall from~\eqref{eq:active} that the probability of $v$ being active does not exceed $\frac{|I_v|}{K}\cdot (\log{n})^{\Oh(\log_b{n})}$,
 which we can also bound as $\frac{\min(|I_v|,K)}{K}\cdot (\log{n})^{\Oh(\log_b{n})}$.
 Moreover, note that each node $v$ in the scope of a insertion at $X[i]$ contributes at least $|\qi{x}{v}|\ge \min(|I_v|,K)$ quasi-relevant shifts.
 Consequently, for such a node, the probability of being active does not exceed $\frac{|\qi{x}{v}|}{K}\cdot (\log{n})^{\Oh(\log_b{n})}$.
 By \cref{lem:subtree-edit}, the total size $|\qsY{x}{v}|$ across all nodes is $\Oh(K\log_b n)$, so the expected number of active nodes in scope 
 does not exceed $\Oh(\log_b n)\cdot (\log{n})^{\Oh(\log_b{n})} = (\log{n})^{\Oh(\log_b{n})}$.
 \end{proof}

 \begin{lemma}\label{lem:insert}
 Upon any insertion in $X$, we can update the values $\Delta_{v,\ell}^{\le K}(X,X)$
 across all active nodes $v\in T$ and labels $\ell \in L_v$ in expected time $b^2(\log{n})^{\Oh(\log_b{n})}$.
 \end{lemma}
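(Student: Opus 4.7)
The plan is to mirror the structure of the proof of \cref{lem:subY} but with the edit-oriented relevance notion of \cref{def:newrelevance} and the label-based shift representation. The recursive procedure \textsc{Insert-$X$} visits only active nodes in scope (plus their immediate children, which fail the scope test in \cref{alg:dyn-insert:terminate-y} in $\Oh(1)$ time) and, at each such node, rebuilds the values $\Delta_{v,\ell}^{\le K}(X,X)$ for those $\ell\in L_v$ whose shift lies in $\ri{x}{v}$ by calling \textsc{Combine} on the children's values at labels with shifts in $\qi{x}{v}\cap S_{v_h}$.

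For correctness, I would first invoke \cref{cor:locality-edit} to argue that for any label $\ell$ with $\shf_v(\Xd,\ell)\notin \ri{x}{v}$, the stored value $\Delta_{v,\ell}^{\le K}$ automatically remains a valid $(\alpha_v,\beta_v)$-approximation of the post-insertion $\TD_{v,\ell}^{\le K}(X,X)$ and thus needs no update. For relevant labels, \cref{def:captd} together with the observation that any optimal sub-shift $s_h$ satisfies $|s-s_h|\le K_v$ (so $s_h\in \qi{x}{v}$) justifies restricting the children inputs to $\qi{x}{v}\cap S_{v_h}$. Applying \cref{lem:delta} inductively from the leaves upwards then yields the $(\alpha_v,\beta_v)$-approximation property for every recomputed value.

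A separate point requiring care is the maintenance of $L_v$: we want $S_v=\{\shf_v(\Xd,\ell):\ell\in L_v\}$ to remain an independent Bernoulli sample over $[-K\dd K]$ at rate $\min(1,48\beta_v^{-1}(b_v+1)\ln n)$ after the insertion. At each active node in scope, up to three labels change status: the label of the newly inserted character $X[x]$ may enter the window, while up to two labels previously mapping to $\pm K$ drift out to $\pm (K+1)$. Flipping a biased coin for the new label and deterministically evicting the two drifted ones preserves the product distribution, and each such operation runs in $\Ot(1)$ via \cref{cor:labelled}. The running-time calculation then proceeds as in \cref{lem:subY}: by \cref{lem:active-edit} there are $(\log n)^{\Oh(\log_b n)}$ active nodes in scope in expectation; at each such node $v$, the \textsc{Combine} call costs $\Ot(|\ri{x}{v}\cap S_v|+\sum_h |\qi{x}{v}\cap S_{v_h}|)$ by \cref{lem:range-minimum}; \cref{lem:exp-precision} gives $\Exp[|\qi{x}{v}\cap S_{w}|]\le 48|\qi{x}{v}|(b_{w}+1)\ln n\cdot \Exp[\beta_{w}^{-1}]=\frac{|\qi{x}{v}|}{K}\cdot b\,(\log n)^{\Oh(\log_b n)}$ for every child $w$; and \cref{lem:subtree-edit} caps $\sum_{v\in\mathrm{scope}} |\qi{x}{v}|$ at $\Oh(K\log_b n)$, collapsing the total to $b^2(\log n)^{\Oh(\log_b n)}$.

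The main obstacle is not the time-bound calculation, which follows almost verbatim from \cref{lem:subY} once \cref{lem:subtree-edit} and \cref{lem:active-edit} are in hand, but rather the correctness of the label bookkeeping. Because numerical shifts of unaffected labels shift by $\pm 1$ while the window $[-K\dd K]$ in shift-space stays fixed, some labels must enter and leave $L_v$ ``passively'' even though our notion of relevance is designed precisely to avoid touching them. Verifying that the explicit update rule at each active ancestor---probabilistic insertion of the new label and deterministic eviction of the displaced ones---exactly reproduces the target product measure on $S_v$, while simultaneously preserving the correspondence $S_v=\{\shf_v(\Xd,\ell):\ell\in L_v\}$ across the changing index-to-label mapping, is the delicate distribution-preservation heart of the argument.
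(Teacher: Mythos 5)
Your proposal is correct and follows essentially the same route as the paper's proof: correctness via \cref{cor:locality-edit} and the restriction $s_h\in \qi{x}{v}$ justified by $\Delta_{v_h,s_h}^{\le K}\le K_v$, combined with the inductive approximation guarantee; and the time bound via \cref{lem:active-edit}, \cref{lem:exp-precision}, and \cref{lem:subtree-edit} exactly as in the analogue for \cref{lem:subY}. The one thing you elevate that the paper's proof of this particular lemma glosses over is the maintenance of $L_v$ and the verification that $S_v$ keeps the right product distribution; the paper handles that entirely in the surrounding prose and in \cref{alg:dyn-insert}, so it is worth noting that your ``delicate heart'' observation is correct but lives just outside what the paper proves inside this lemma.
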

 \begin{proof}
 Consider an insertion at $X[i]$. By \cref{lem:subtree-edit}, the procedure {\sc Insert-$X$} visits all active nodes in scope of the substitution.
 By \cref{cor:locality-edit}, we value $\Delta_{v,\ell}^{\le K}(X,X)$ needs to be updated only if $s=\shf(\Xd,\ell)\in \ri{x}{v}$.
 Moreover, since $\Delta_{v_h,s_h}^{\le K}(X,X) \le K_v$, when updating the value $\Delta_{v,s}^{\le K}(X,X)$,
 we only need to inspect the values $\Delta_{v_h,s_h}^{\le K}(X,X)$ with $|s-s_h|\le K_v$ (which implies $s_h \in \qi{x}{v}$)
 Thus, \cref{alg:dyn-insert:combine-x} correctly restricts the output of {\sc Combine} to $S_v\cap \ri{x}{v}$,
 and the inputs from the child $v_h$ to $S_{v_h}\cap \qsY{i}{v}$. Consequently, the recursive {\sc Insert-$X$} procedure correctly maintains 
 the values $\Delta_{v,\ell}^{\le K}(X,X)$.
 
 As for the running time, observe that the {\sc Insert-$X$} procedure is called only for active nodes in the scope of the substitution 
 and for their children (which fail the text in \cref{alg:dyn-insert:terminate-y}). The total cost of the control instructions is proportional to $b\cdot (\log{n})^{\Oh(\log_b{n})}$ by \cref{lem:active-edit}.
 The cost of \cref{alg:dyn-insert:leaf-x} is $\Ot(|\rsY{i}{v}\cap S_v|)=\Ot(|\qsY{i}{v}\cap S_v|)$.
 By \cref{lem:exp-precision}, we have $\Exp[|\qsY{i}{v}\cap S_v|] \le 48|\qsY{i}{v}|\cdot ({b_v+1})\ln n \cdot \Exp[\beta_v^{-1}]
 = \frac{|\qsY{i}{v}|}{K}\cdot b (\log{n})^{\Oh(\log_b{n})}$. 
 The cost of \cref{alg:dyn-insert:combine-x} is $\Ot(\sum_{h=0}^{b_v-1}(|\rsY{i}{v}\cap S_v|+|\qsY{i}{v}\cap S_{v_h}|))$,
 which is $b^2\cdot \frac{|\qsY{i}{v}|}{K}\cdot (\log{n})^{\Oh(\log_b{n})}$ by the same reasoning.
 The expected time spent at a fixed node $v$ is therefore 
 $b^2\cdot \frac{|\qsY{i}{v}|}{K}\cdot (\log{n})^{\Oh(\log_b{n})}$. 
 By \cref{lem:subtree}, this sums up to $b^2(\log{n})^{\Oh(\log_b{n})}$.
 \end{proof}

 By a symmetric argument, we obtain the following result:
 \begin{lemma}\label{lem:delete}
  Upon any deletion in $X$, we can update the values $\Delta_{v,\ell}^{\le K}(X,X)$
  across all active nodes $v\in T$ and labels $\ell \in L_v$ in expected time $b^2(\log{n})^{\Oh(\log_b{n})}$.
\end{lemma}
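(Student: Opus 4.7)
The plan is to mirror the proof of \cref{lem:insert} step by step, highlighting only the places where deletion requires a slightly different treatment from insertion. Concretely, I would design a procedure {\sc Delete-$X$}($x,v$) that, for each active node $v$ in scope of the deletion, performs the label bookkeeping for $L_v$ and then either recomputes $\TD_{v,s}^{\le K}(X,X)$ directly at the leaves or invokes {\sc Combine} with the restricted shift sets $\ri{x}{v}\cap S_v$ at the output and $\qi{x}{v}\cap S_{v_h}$ at each child $v_h$. Correctness at each node reduces, just as in the insertion case, to \cref{cor:locality-edit}: if the label $\ell$ satisfies $\shf_v(\Xd,\ell) = s \notin \ri{x}{v}$, then $\TD_{v,\ell}^{\le K}(X,X)$ is unchanged after the deletion, so only labels with relevant shifts need to be revisited, and the quasi-relevant inputs at the children suffice to recompute them through \cref{eq:capped-td}.

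The first genuinely new ingredient is the bookkeeping for $L_v$. For an insertion, the label of the newly inserted character enters the candidate pool with probability $\max(1,48\beta_v^{-1}(b_v+1)\ln n)$, and at most one label at the $\pm(K{+}1)$ boundary leaves. For a deletion at $X[x]$, the analogous update is reversed: if the deleted character's label was in $L_v$, it must be removed, while at the boundaries the labels of the characters at positions $i_v\pm(K{+}1)$ (before the deletion) move to $\pm K$ and become eligible, so each should be inserted into $L_v$ independently with probability $\max(1,48\beta_v^{-1}(b_v+1)\ln n)$. This keeps the distribution of $S_v$ identical to the one posited in \cref{sec:ako-static}, which is what the correctness argument of \cref{lem:delta} requires.

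The remaining steps are verbatim applications of the tools already in place. \cref{lem:subtree-edit} shows that the nodes in scope form a connected subtree with $\Oh(K\log_b n)$ total quasi-relevant shifts, and \cref{lem:active-edit} bounds the expected number of active nodes in scope by $(\log n)^{\Oh(\log_b n)}$; both statements are phrased for insertions but depend only on $I_v$ and $x$, so they transfer without change to deletions. For an active node $v$ in scope, the leaf case \cref{alg:dyn-insert:leaf-x} costs $\Ot(|\ri{x}{v}\cap S_v|)$, while the {\sc Combine} call costs $\Ot\bigl(\sum_{h=0}^{b_v-1}(|\ri{x}{v}\cap S_v|+|\qi{x}{v}\cap S_{v_h}|)\bigr)$. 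By \cref{lem:exp-precision}, each of these is in expectation $b^2\cdot \frac{|\qi{x}{v}|}{K}\cdot(\log n)^{\Oh(\log_b n)}$, and summing against \cref{lem:subtree-edit} yields the overall $b^2(\log n)^{\Oh(\log_b n)}$ bound.

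I expect no genuinely new difficulty: the main thing to be careful about is the boundary update for $L_v$ so that inserting a label with probability $\max(1,48\beta_v^{-1}(b_v+1)\ln n)$ is performed exactly on those labels that transition from outside $[-K\dd K]$ to inside it, and removing the label of $X[x]$ itself is done unconditionally. Once this is formalized, correctness follows from \cref{cor:locality-edit} and the running-time analysis is identical to that of \cref{lem:insert}.
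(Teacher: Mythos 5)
Your proof is correct and matches the paper's approach: the paper disposes of \cref{lem:delete} with the single remark that it follows ``by a symmetric argument'' from \cref{lem:insert}, and your proposal is a faithful unpacking of that symmetry --- the reversed $L_v$ bookkeeping (remove the deleted label unconditionally, sample the newly eligible boundary label), the observation that \cref{lem:subtree-edit,lem:active-edit} depend only on $I_v$ and $x$ and hence transfer verbatim, and the reuse of \cref{cor:locality-edit} and \cref{lem:exp-precision} to drive the cost accounting. Two tiny slips worth fixing: the sampling rate should be $\min(1,48\beta_v^{-1}(b_v+1)\ln n)$ rather than $\max$ (a typo you inherited from the paper's pseudocode for {\sc Insert-$X$}), and for any single deletion only one of the boundary labels at $i_v\pm(K{+}1)$ --- not both --- actually crosses into the $[-K\dd K]$ window, depending on which side of $i_v$ the deletion occurred.
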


\subsection{The full algorithm}\label{sec:edXedY:alg}

As mentioned earlier, we use \cref{lem:rebalance,lem:insert,lem:delete} to maintain a precision sampling tree $T$
and, for all active nodes $v\in T$, the values $\Delta_{v,s}^{\le 3K}(X,X)$ for every $s\in S_v$.
On top of that, we maintain the data structure of \cref{prp:ipm} so that, after every update,
we can use the procedure {\sc Dyn-Edit-Y} of \cref{sec:subXedY} that lets us compute $\Delta_{root,0}^{\le K}(X,Y)$
in $k \cdot b^2 \cdot (\log n)^{\Oh(\log_b n)}$ time. 
Altogether, this yields the following result:

\thmmain
\newpage

\bibliographystyle{alphaurl}
\bibliography{biblio}

\newcommand{\etalchar}[1]{$^{#1}$}
\begin{thebibliography}{AHWW16}

\bibitem[AAG{\etalchar{+}}19]{AAGPS19}
Amir Abboud, Raghavendra Addanki, Fabrizio Grandoni, Debmalya Panigrahi, and
  Barna Saha.
\newblock Dynamic set cover: improved algorithms and lower bounds.
\newblock In Moses Charikar and Edith Cohen, editors, {\em Proceedings of the
  51st Annual {ACM} {SIGACT} Symposium on Theory of Computing, {STOC} 2019,
  Phoenix, AZ, USA, June 23-26, 2019}, pages 114--125. {ACM}, 2019.
\newblock \href {https://doi.org/10.1145/3313276.3316376}
  {\path{doi:10.1145/3313276.3316376}}.

\bibitem[ABCK19]{ABCK19}
Amihood Amir, Itai Boneh, Panagiotis Charalampopoulos, and Eitan Kondratovsky.
\newblock {Repetition Detection in a Dynamic String}.
\newblock In Michael~A. Bender, Ola Svensson, and Grzegorz Herman, editors,
  {\em 27th Annual European Symposium on Algorithms (ESA 2019)}, volume 144 of
  {\em Leibniz International Proceedings in Informatics (LIPIcs)}, pages
  5:1--5:18, Dagstuhl, Germany, 2019. Schloss Dagstuhl--Leibniz-Zentrum fuer
  Informatik.
\newblock \href {https://doi.org/10.4230/LIPIcs.ESA.2019.5}
  {\path{doi:10.4230/LIPIcs.ESA.2019.5}}.

\bibitem[ABR00]{ABR00}
Stephen Alstrup, Gerth~St{\o}lting Brodal, and Theis Rauhe.
\newblock Pattern matching in dynamic texts.
\newblock In David~B. Shmoys, editor, {\em Proceedings of the Eleventh Annual
  {ACM-SIAM} Symposium on Discrete Algorithms, January 9-11, 2000, San
  Francisco, CA, {USA}}, pages 819--828. {ACM/SIAM}, 2000.

\bibitem[ABW15]{ABW15}
Amir Abboud, Arturs Backurs, and Virginia~Vassilevska Williams.
\newblock Tight hardness results for {LCS} and other sequence similarity
  measures.
\newblock In {\em 56th Annual IEEE Symposium on Foundations of Computer
  Science, FOCS 2015}, pages 59--78. {IEEE}, 2015.
\newblock URL: \url{https://doi.org/10.1109/FOCS.2015.14}.

\bibitem[AHWW16]{AHWW16}
Amir Abboud, Thomas~Dueholm Hansen, Virginia~Vassilevska Williams, and Ryan
  Williams.
\newblock Simulating branching programs with edit distance and friends: or: a
  polylog shaved is a lower bound made.
\newblock In {\em 48th Annual {ACM} Symposium on Theory of Computing, {STOC}
  2016}, pages 375--388. {ACM}, 2016.
\newblock URL: \url{https://doi.org/10.1145/2897518.2897653}.

\bibitem[AKO10]{AKO10}
Alexandr Andoni, Robert Krauthgamer, and Krzysztof Onak.
\newblock Polylogarithmic approximation for edit distance and the asymmetric
  query complexity.
\newblock In {\em 51th Annual {IEEE} Symposium on Foundations of Computer
  Science, {FOCS} 2010, October 23-26, 2010, Las Vegas, Nevada, {USA}}, pages
  377--386. {IEEE} Computer Society, 2010.
\newblock \href {https://doi.org/10.1109/FOCS.2010.43}
  {\path{doi:10.1109/FOCS.2010.43}}.

\bibitem[AKO11]{AndoniKO11}
Alexandr Andoni, Robert Krauthgamer, and Krzysztof Onak.
\newblock Streaming algorithms via precision sampling.
\newblock In Rafail Ostrovsky, editor, {\em {IEEE} 52nd Annual Symposium on
  Foundations of Computer Science, {FOCS} 2011, Palm Springs, CA, USA, October
  22-25, 2011}, pages 363--372. {IEEE} Computer Society, 2011.
\newblock \href {https://doi.org/10.1109/FOCS.2011.82}
  {\path{doi:10.1109/FOCS.2011.82}}.

\bibitem[AN20]{AN20}
Alexandr Andoni and Negev~Shekel Nosatzki.
\newblock Edit distance in near-linear time: it's a constant factor.
\newblock In Sandy Irani, editor, {\em 61st {IEEE} Annual Symposium on
  Foundations of Computer Science, {FOCS} 2020, Durham, NC, USA, November
  16-19, 2020}, pages 990--1001. {IEEE}, 2020.
\newblock \href {https://doi.org/10.1109/FOCS46700.2020.00096}
  {\path{doi:10.1109/FOCS46700.2020.00096}}.

\bibitem[And17]{Andoni17}
Alexandr Andoni.
\newblock High frequency moments via max-stability.
\newblock In {\em 2017 {IEEE} International Conference on Acoustics, Speech and
  Signal Processing, {ICASSP} 2017, New Orleans, LA, USA, March 5-9, 2017},
  pages 6364--6368. {IEEE}, 2017.
\newblock \href {https://doi.org/10.1109/ICASSP.2017.7953381}
  {\path{doi:10.1109/ICASSP.2017.7953381}}.

\bibitem[AO09]{AO09}
Alexandr Andoni and Krzysztof Onak.
\newblock Approximating edit distance in near-linear time.
\newblock In Michael Mitzenmacher, editor, {\em Proceedings of the 41st Annual
  {ACM} Symposium on Theory of Computing, {STOC} 2009, Bethesda, MD, USA, May
  31 - June 2, 2009}, pages 199--204. {ACM}, 2009.
\newblock \href {https://doi.org/10.1145/1536414.1536444}
  {\path{doi:10.1145/1536414.1536444}}.

\bibitem[AOSS18]{AOSS18}
Sepehr Assadi, Krzysztof Onak, Baruch Schieber, and Shay Solomon.
\newblock Fully dynamic maximal independent set with sublinear update time.
\newblock In Ilias Diakonikolas, David Kempe, and Monika Henzinger, editors,
  {\em Proceedings of the 50th Annual {ACM} {SIGACT} Symposium on Theory of
  Computing, {STOC} 2018, Los Angeles, CA, USA, June 25-29, 2018}, pages
  815--826. {ACM}, 2018.
\newblock \href {https://doi.org/10.1145/3188745.3188922}
  {\path{doi:10.1145/3188745.3188922}}.

\bibitem[AV03]{AV03}
Lars Arge and Jeffrey~Scott Vitter.
\newblock Optimal external memory interval management.
\newblock {\em {SIAM} J. Comput.}, 32(6):1488--1508, 2003.
\newblock \href {https://doi.org/10.1137/S009753970240481X}
  {\path{doi:10.1137/S009753970240481X}}.

\bibitem[BCFN22a]{BCFN22a}
Karl Bringmann, Alejandro Cassis, Nick Fischer, and Vasileios Nakos.
\newblock Almost-optimal sublinear-time edit distance in the low distance
  regime.
\newblock In Stefano Leonardi and Anupam Gupta, editors, {\em {STOC} '22: 54th
  Annual {ACM} {SIGACT} Symposium on Theory of Computing, Rome, Italy, June 20
  - 24, 2022}, pages 1102--1115. {ACM}, 2022.
\newblock \href {https://doi.org/10.1145/3519935.3519990}
  {\path{doi:10.1145/3519935.3519990}}.

\bibitem[BCFN22b]{BCFN22b}
Karl Bringmann, Alejandro Cassis, Nick Fischer, and Vasileios Nakos.
\newblock Improved sublinear-time edit distance for preprocessed strings.
\newblock In Mikolaj Bojanczyk, Emanuela Merelli, and David~P. Woodruff,
  editors, {\em 49th International Colloquium on Automata, Languages, and
  Programming, {ICALP} 2022, July 4-8, 2022, Paris, France}, volume 229 of {\em
  LIPIcs}, pages 32:1--32:20. Schloss Dagstuhl - Leibniz-Zentrum f{\"{u}}r
  Informatik, 2022.
\newblock \href {https://doi.org/10.4230/LIPIcs.ICALP.2022.32}
  {\path{doi:10.4230/LIPIcs.ICALP.2022.32}}.

\bibitem[BDF05]{BDF05}
Michael~A. Bender, Erik~D. Demaine, and Martin Farach{-}Colton.
\newblock Cache-oblivious b-trees.
\newblock {\em {SIAM} J. Comput.}, 35(2):341--358, 2005.
\newblock \href {https://doi.org/10.1137/S0097539701389956}
  {\path{doi:10.1137/S0097539701389956}}.

\bibitem[BEF{\etalchar{+}}23]{BateniEFHJMW23}
MohammadHossein Bateni, Hossein Esfandiari, Hendrik Fichtenberger, Monika
  Henzinger, Rajesh Jayaram, Vahab Mirrokni, and Andreas Wiese.
\newblock Optimal fully dynamic \emph{k}-center clustering for adaptive and
  oblivious adversaries.
\newblock In Nikhil Bansal and Viswanath Nagarajan, editors, {\em Proceedings
  of the 2023 {ACM-SIAM} Symposium on Discrete Algorithms, {SODA} 2023,
  Florence, Italy, January 22-25, 2023}, pages 2677--2727. {SIAM}, 2023.
\newblock \href {https://doi.org/10.1137/1.9781611977554.ch101}
  {\path{doi:10.1137/1.9781611977554.ch101}}.

\bibitem[BEG{\etalchar{+}}21]{BEGHS18}
Mahdi Boroujeni, Soheil Ehsani, Mohammad Ghodsi, MohammadTaghi Hajiaghayi, and
  Saeed Seddighin.
\newblock Approximating edit distance in truly subquadratic time: Quantum and
  mapreduce.
\newblock {\em Journal of the ACM}, 68(3):19:1--19:41, 2021.
\newblock URL: \url{https://doi.org/10.1145/3456807}.

\bibitem[BEK{\etalchar{+}}03]{BEKMRRS03}
Tugkan Batu, Funda Erg{\"{u}}n, Joe Kilian, Avner Magen, Sofya Raskhodnikova,
  Ronitt Rubinfeld, and Rahul Sami.
\newblock A sublinear algorithm for weakly approximating edit distance.
\newblock In {\em 35th Annual {ACM} Symposium on Theory of Computing, STOC
  2003}, pages 316--324. {ACM}, 2003.
\newblock URL: \url{https://doi.org/10.1145/780542.780590}.

\bibitem[Ber16]{Bernstein16}
Aaron Bernstein.
\newblock Maintaining shortest paths under deletions in weighted directed
  graphs.
\newblock {\em {SIAM} J. Comput.}, 45(2):548--574, 2016.
\newblock \href {https://doi.org/10.1137/130938670}
  {\path{doi:10.1137/130938670}}.

\bibitem[BES06]{BES06}
Tugkan Batu, Funda Erg{\"{u}}n, and S{\"{u}}leyman~Cenk Sahinalp.
\newblock Oblivious string embeddings and edit distance approximations.
\newblock In {\em 17th Annual {ACM-SIAM} Symposium on Discrete Algorithms, SODA
  2006}, pages 792--801. {ACM}, 2006.
\newblock URL: \url{https://doi.org/10.1145/1109557.1109644}.

\bibitem[BGS15]{BaswanaGS15}
Surender Baswana, Manoj Gupta, and Sandeep Sen.
\newblock Fully dynamic maximal matching in o(log n) update time.
\newblock {\em {SIAM} J. Comput.}, 44(1):88--113, 2015.
\newblock \href {https://doi.org/10.1137/130914140}
  {\path{doi:10.1137/130914140}}.

\bibitem[BGS20]{BernsteinGS20}
Aaron Bernstein, Maximilian~Probst Gutenberg, and Thatchaphol Saranurak.
\newblock Deterministic decremental reachability, scc, and shortest paths via
  directed expanders and congestion balancing.
\newblock In {\em 61st {IEEE} Annual Symposium on Foundations of Computer
  Science, {FOCS} 2020, Durham, NC, USA, November 16-19, 2020}, pages
  1123--1134. {IEEE}, 2020.
\newblock \href {https://doi.org/10.1109/FOCS46700.2020.00108}
  {\path{doi:10.1109/FOCS46700.2020.00108}}.

\bibitem[BGW20]{BernsteinGW20}
Aaron Bernstein, Maximilian~Probst Gutenberg, and Christian Wulff{-}Nilsen.
\newblock Near-optimal decremental {SSSP} in dense weighted digraphs.
\newblock In {\em 61st {IEEE} Annual Symposium on Foundations of Computer
  Science, {FOCS} 2020, Durham, NC, USA, November 16-19, 2020}, pages
  1112--1122. {IEEE}, 2020.
\newblock \href {https://doi.org/10.1109/FOCS46700.2020.00107}
  {\path{doi:10.1109/FOCS46700.2020.00107}}.

\bibitem[BHN19]{BHN19}
Sayan Bhattacharya, Monika Henzinger, and Danupon Nanongkai.
\newblock A new deterministic algorithm for dynamic set cover.
\newblock In David Zuckerman, editor, {\em 60th {IEEE} Annual Symposium on
  Foundations of Computer Science, {FOCS} 2019, Baltimore, Maryland, USA,
  November 9-12, 2019}, pages 406--423. {IEEE} Computer Society, 2019.
\newblock \href {https://doi.org/10.1109/FOCS.2019.00033}
  {\path{doi:10.1109/FOCS.2019.00033}}.

\bibitem[BI18]{BI18}
Arturs Backurs and Piotr Indyk.
\newblock Edit distance cannot be computed in strongly subquadratic time
  (unless {SETH} is false).
\newblock {\em {SIAM} J. Comput.}, 47(3):1087--1097, 2018.
\newblock \href {https://doi.org/10.1137/15M1053128}
  {\path{doi:10.1137/15M1053128}}.

\bibitem[BK15]{BK15}
Karl Bringmann and Marvin K{\"{u}}nnemann.
\newblock Quadratic conditional lower bounds for string problems and dynamic
  time warping.
\newblock In {\em 56th Annual IEEE Symposium on Foundations of Computer
  Science, {FOCS} 2015}, pages 79--97. {IEEE}, 2015.
\newblock URL: \url{https://doi.org/10.1109/focs.2015.15}.

\bibitem[BK19]{BK19}
Sayan Bhattacharya and Janardhan Kulkarni.
\newblock Deterministically maintaining a {(2 + {\(\epsilon\)})}-approximate
  minimum vertex cover in {O(1/{\(\epsilon\)}\({}^{\mbox{2}})\)} amortized
  update time.
\newblock In Timothy~M. Chan, editor, {\em Proceedings of the Thirtieth Annual
  {ACM-SIAM} Symposium on Discrete Algorithms, {SODA} 2019, San Diego,
  California, USA, January 6-9, 2019}, pages 1872--1885. {SIAM}, 2019.
\newblock \href {https://doi.org/10.1137/1.9781611975482.113}
  {\path{doi:10.1137/1.9781611975482.113}}.

\bibitem[BK20]{BhattacharyaK20}
Sayan Bhattacharya and Janardhan Kulkarni.
\newblock An improved algorithm for incremental cycle detection and topological
  ordering in sparse graphs.
\newblock In Shuchi Chawla, editor, {\em Proceedings of the 2020 {ACM-SIAM}
  Symposium on Discrete Algorithms, {SODA} 2020, Salt Lake City, UT, USA,
  January 5-8, 2020}, pages 2509--2521. {SIAM}, 2020.
\newblock \href {https://doi.org/10.1137/1.9781611975994.153}
  {\path{doi:10.1137/1.9781611975994.153}}.

\bibitem[BKS23]{BKS23}
Sayan Bhattacharya, Peter Kiss, and Thatchaphol Saranurak.
\newblock Dynamic {\textdollar}(1+{\(\epsilon\)}){\textdollar}-approximate
  matching size in truly sublinear update time.
\newblock {\em CoRR}, abs/2302.05030, 2023.
\newblock \href {https://arxiv.org/abs/2302.05030} {\path{arXiv:2302.05030}},
  \href {https://doi.org/10.48550/arXiv.2302.05030}
  {\path{doi:10.48550/arXiv.2302.05030}}.

\bibitem[BR20]{BR20}
Joshua Brakensiek and Aviad Rubinstein.
\newblock Constant-factor approximation of near-linear edit distance in
  near-linear time.
\newblock In {\em 52nd Annual {ACM} Symposium on Theory of Computing, {STOC}
  2020}, pages 685--698. {ACM}, 2020.
\newblock URL: \url{https://doi.org/10.1145/3357713.3384282}.

\bibitem[CCFM04]{CharikarCFM04}
Moses Charikar, Chandra Chekuri, Tom{\'{a}}s Feder, and Rajeev Motwani.
\newblock Incremental clustering and dynamic information retrieval.
\newblock {\em {SIAM} J. Comput.}, 33(6):1417--1440, 2004.
\newblock \href {https://doi.org/10.1137/S0097539702418498}
  {\path{doi:10.1137/S0097539702418498}}.

\bibitem[CCP13]{CCP13}
Alex Chen, Timothy Chu, and Nathan Pinsker.
\newblock Computing the longest increasing subsequence of a sequence subject to
  dynamic insertion.
\newblock {\em CoRR}, abs/1309.7724, 2013.
\newblock URL: \url{http://arxiv.org/abs/1309.7724}, \href
  {https://arxiv.org/abs/1309.7724} {\path{arXiv:1309.7724}}.

\bibitem[CDG{\etalchar{+}}20]{CDGKS18}
Diptarka Chakraborty, Debarati Das, Elazar Goldenberg, Michal Kouck{\'{y}}, and
  Michael~E. Saks.
\newblock Approximating edit distance within constant factor in truly
  sub-quadratic time.
\newblock {\em Journal of the {ACM}}, 67(6):36:1--36:22, 2020.
\newblock URL: \url{https://doi.org/10.1145/3422823}.

\bibitem[CGH{\etalchar{+}}20]{ChenGHPS20}
Li~Chen, Gramoz Goranci, Monika Henzinger, Richard Peng, and Thatchaphol
  Saranurak.
\newblock Fast dynamic cuts, distances and effective resistances via vertex
  sparsifiers.
\newblock In Sandy Irani, editor, {\em 61st {IEEE} Annual Symposium on
  Foundations of Computer Science, {FOCS} 2020, Durham, NC, USA, November
  16-19, 2020}, pages 1135--1146. {IEEE}, 2020.
\newblock \href {https://doi.org/10.1109/FOCS46700.2020.00109}
  {\path{doi:10.1109/FOCS46700.2020.00109}}.

\bibitem[CGK{\etalchar{+}}22]{CGKMU22}
Rapha{\"{e}}l Clifford, Pawel Gawrychowski, Tomasz Kociumaka, Daniel~P. Martin,
  and Przemyslaw Uznanski.
\newblock The dynamic k-mismatch problem.
\newblock In Hideo Bannai and Jan Holub, editors, {\em 33rd Annual Symposium on
  Combinatorial Pattern Matching, {CPM} 2022, June 27-29, 2022, Prague, Czech
  Republic}, volume 223 of {\em LIPIcs}, pages 18:1--18:15. Schloss Dagstuhl -
  Leibniz-Zentrum f{\"{u}}r Informatik, 2022.
\newblock \href {https://doi.org/10.4230/LIPIcs.CPM.2022.18}
  {\path{doi:10.4230/LIPIcs.CPM.2022.18}}.

\bibitem[CGL{\etalchar{+}}20]{ChuzhoyGLNPS20}
Julia Chuzhoy, Yu~Gao, Jason Li, Danupon Nanongkai, Richard Peng, and
  Thatchaphol Saranurak.
\newblock A deterministic algorithm for balanced cut with applications to
  dynamic connectivity, flows, and beyond.
\newblock In Sandy Irani, editor, {\em 61st {IEEE} Annual Symposium on
  Foundations of Computer Science, {FOCS} 2020, Durham, NC, USA, November
  16-19, 2020}, pages 1158--1167. {IEEE}, 2020.
\newblock \href {https://doi.org/10.1109/FOCS46700.2020.00111}
  {\path{doi:10.1109/FOCS46700.2020.00111}}.

\bibitem[CGP20]{CGP20}
Panagiotis Charalampopoulos, Pawel Gawrychowski, and Karol Pokorski.
\newblock Dynamic longest common substring in polylogarithmic time.
\newblock In Artur Czumaj, Anuj Dawar, and Emanuela Merelli, editors, {\em 47th
  International Colloquium on Automata, Languages, and Programming, {ICALP}
  2020, July 8-11, 2020, Saarbr{\"{u}}cken, Germany (Virtual Conference)},
  volume 168 of {\em LIPIcs}, pages 27:1--27:19. Schloss Dagstuhl -
  Leibniz-Zentrum f{\"{u}}r Informatik, 2020.
\newblock \href {https://doi.org/10.4230/LIPIcs.ICALP.2020.27}
  {\path{doi:10.4230/LIPIcs.ICALP.2020.27}}.

\bibitem[CKM20]{CKM20}
Panagiotis Charalampopoulos, Tomasz Kociumaka, and Shay Mozes.
\newblock Dynamic string alignment.
\newblock In Inge~Li G{\o}rtz and Oren Weimann, editors, {\em 31st Annual
  Symposium on Combinatorial Pattern Matching, {CPM} 2020, June 17-19, 2020,
  Copenhagen, Denmark}, volume 161 of {\em LIPIcs}, pages 9:1--9:13. Schloss
  Dagstuhl - Leibniz-Zentrum f{\"{u}}r Informatik, 2020.
\newblock \href {https://doi.org/10.4230/LIPIcs.CPM.2020.9}
  {\path{doi:10.4230/LIPIcs.CPM.2020.9}}.

\bibitem[CKW20]{CKW20}
Panagiotis Charalampopoulos, Tomasz Kociumaka, and Philip Wellnitz.
\newblock Faster approximate pattern matching: {A} unified approach.
\newblock In Sandy Irani, editor, {\em 61st {IEEE} Annual Symposium on
  Foundations of Computer Science, {FOCS} 2020, Durham, NC, USA, November
  16-19, 2020}, pages 978--989. {IEEE}, 2020.
\newblock \href {https://doi.org/10.1109/FOCS46700.2020.00095}
  {\path{doi:10.1109/FOCS46700.2020.00095}}.

\bibitem[DI04]{DI04}
Camil Demetrescu and Giuseppe~F. Italiano.
\newblock A new approach to dynamic all pairs shortest paths.
\newblock {\em J. {ACM}}, 51(6):968--992, 2004.
\newblock \href {https://doi.org/10.1145/1039488.1039492}
  {\path{doi:10.1145/1039488.1039492}}.

\bibitem[GJ21]{GJ21}
Pawel Gawrychowski and Wojciech Janczewski.
\newblock Fully dynamic approximation of {LIS} in polylogarithmic time.
\newblock In Samir Khuller and Virginia~Vassilevska Williams, editors, {\em
  {STOC} '21: 53rd Annual {ACM} {SIGACT} Symposium on Theory of Computing,
  Virtual Event, Italy, June 21-25, 2021}, pages 654--667. {ACM}, 2021.
\newblock \href {https://doi.org/10.1145/3406325.3451137}
  {\path{doi:10.1145/3406325.3451137}}.

\bibitem[GKK{\etalchar{+}}15]{GKKLS15}
Pawel Gawrychowski, Adam Karczmarz, Tomasz Kociumaka, Jakub Lacki, and Piotr
  Sankowski.
\newblock Optimal dynamic strings.
\newblock {\em CoRR}, abs/1511.02612, 2015.
\newblock URL: \url{http://arxiv.org/abs/1511.02612}, \href
  {https://arxiv.org/abs/1511.02612} {\path{arXiv:1511.02612}}.

\bibitem[GKK{\etalchar{+}}18]{GKKLS18}
Pawel Gawrychowski, Adam Karczmarz, Tomasz Kociumaka, Jakub Lacki, and Piotr
  Sankowski.
\newblock Optimal dynamic strings.
\newblock In Artur Czumaj, editor, {\em Proceedings of the Twenty-Ninth Annual
  {ACM-SIAM} Symposium on Discrete Algorithms, {SODA} 2018, New Orleans, LA,
  USA, January 7-10, 2018}, pages 1509--1528. {SIAM}, 2018.
\newblock \href {https://doi.org/10.1137/1.9781611975031.99}
  {\path{doi:10.1137/1.9781611975031.99}}.

\bibitem[GKKS22]{GKKS22}
Elazar Goldenberg, Tomasz Kociumaka, Robert Krauthgamer, and Barna Saha.
\newblock Gap edit distance via non-adaptive queries: Simple and optimal.
\newblock In {\em 63rd {IEEE} Annual Symposium on Foundations of Computer
  Science, {FOCS} 2022, Denver, CO, USA, October 31 - November 3, 2022}, pages
  674--685. {IEEE}, 2022.
\newblock \href {https://doi.org/10.1109/FOCS54457.2022.00070}
  {\path{doi:10.1109/FOCS54457.2022.00070}}.

\bibitem[GKS19]{GKS19}
Elazar Goldenberg, Robert Krauthgamer, and Barna Saha.
\newblock Sublinear algorithms for gap edit distance.
\newblock In {\em 60th Annual IEEE Symposium on Foundations of Computer
  Science, {FOCS} 2019}, pages 1101--1120. {IEEE}, 2019.
\newblock URL: \url{https://doi.org/10.1109/FOCS.2019.00070}.

\bibitem[GP13]{GP13}
Manoj Gupta and Richard Peng.
\newblock Fully dynamic {(1+} e)-approximate matchings.
\newblock In {\em 54th Annual {IEEE} Symposium on Foundations of Computer
  Science, {FOCS} 2013, 26-29 October, 2013, Berkeley, CA, {USA}}, pages
  548--557. {IEEE} Computer Society, 2013.
\newblock \href {https://doi.org/10.1109/FOCS.2013.65}
  {\path{doi:10.1109/FOCS.2013.65}}.

\bibitem[GRS20]{GRS20}
Elazar Goldenberg, Aviad Rubinstein, and Barna Saha.
\newblock Does preprocessing help in fast sequence comparisons?
\newblock In {\em 52nd Annual {ACM} Symposium on Theory of Computing, STOC
  2020}, pages 657--670. {ACM}, 2020.
\newblock URL: \url{https://doi.org/10.1145/3357713.3384300}.

\bibitem[GWW20]{GutenbergWW20}
Maximilian~Probst Gutenberg, Virginia~Vassilevska Williams, and Nicole Wein.
\newblock New algorithms and hardness for incremental single-source shortest
  paths in directed graphs.
\newblock In Konstantin Makarychev, Yury Makarychev, Madhur Tulsiani, Gautam
  Kamath, and Julia Chuzhoy, editors, {\em Proccedings of the 52nd Annual {ACM}
  {SIGACT} Symposium on Theory of Computing, {STOC} 2020, Chicago, IL, USA,
  June 22-26, 2020}, pages 153--166. {ACM}, 2020.
\newblock \href {https://doi.org/10.1145/3357713.3384236}
  {\path{doi:10.1145/3357713.3384236}}.

\bibitem[HHS22]{HHS22}
Kathrin Hanauer, Monika Henzinger, and Christian Schulz.
\newblock Recent advances in fully dynamic graph algorithms - {A} quick
  reference guide.
\newblock {\em {ACM} J. Exp. Algorithmics}, 27:1.11:1--1.11:45, 2022.
\newblock \href {https://doi.org/10.1145/3555806} {\path{doi:10.1145/3555806}}.

\bibitem[HK20]{HK20}
Monika Henzinger and Sagar Kale.
\newblock {Fully-Dynamic Coresets}.
\newblock In Fabrizio Grandoni, Grzegorz Herman, and Peter Sanders, editors,
  {\em 28th Annual European Symposium on Algorithms (ESA 2020)}, volume 173 of
  {\em Leibniz International Proceedings in Informatics (LIPIcs)}, pages
  57:1--57:21, Dagstuhl, Germany, 2020. Schloss Dagstuhl--Leibniz-Zentrum
  f{\"u}r Informatik.
\newblock \href {https://doi.org/10.4230/LIPIcs.ESA.2020.57}
  {\path{doi:10.4230/LIPIcs.ESA.2020.57}}.

\bibitem[HKN16]{HKN16}
Monika Henzinger, Sebastian Krinninger, and Danupon Nanongkai.
\newblock Dynamic approximate all-pairs shortest paths: Breaking the \$o(mn)\$
  barrier and derandomization.
\newblock {\em SIAM Journal on Computing}, 45(3):947--1006, 2016.
\newblock \href {https://doi.org/10.1137/140957299}
  {\path{doi:10.1137/140957299}}.

\bibitem[IIST05]{IIST05}
Yusuke Ishida, Shunsuke Inenaga, Ayumi Shinohara, and Masayuki Takeda.
\newblock Fully incremental lcs computation.
\newblock In Maciej Li{\'{s}}kiewicz and R{\"u}diger Reischuk, editors, {\em
  Fundamentals of Computation Theory}, pages 563--574, Berlin, Heidelberg,
  2005. Springer Berlin Heidelberg.

\bibitem[KK22]{KK22}
Dominik Kempa and Tomasz Kociumaka.
\newblock Dynamic suffix array with polylogarithmic queries and updates.
\newblock In Stefano Leonardi and Anupam Gupta, editors, {\em {STOC} '22: 54th
  Annual {ACM} {SIGACT} Symposium on Theory of Computing, Rome, Italy, June 20
  - 24, 2022}, pages 1657--1670. {ACM}, 2022.
\newblock \href {https://doi.org/10.1145/3519935.3520061}
  {\path{doi:10.1145/3519935.3520061}}.

\bibitem[KMS22]{KMS22}
Adam Karczmarz, Anish Mukherjee, and Piotr Sankowski.
\newblock Subquadratic dynamic path reporting in directed graphs against an
  adaptive adversary.
\newblock In Stefano Leonardi and Anupam Gupta, editors, {\em {STOC} '22: 54th
  Annual {ACM} {SIGACT} Symposium on Theory of Computing, Rome, Italy, June 20
  - 24, 2022}, pages 1643--1656. {ACM}, 2022.
\newblock \href {https://doi.org/10.1145/3519935.3520058}
  {\path{doi:10.1145/3519935.3520058}}.

\bibitem[KP04]{KP04}
Sung-Ryul Kim and Kunsoo Park.
\newblock A dynamic edit distance table.
\newblock {\em Journal of Discrete Algorithms}, 2(2):303--312, 2004.
\newblock Combinatiorial Pattern Matching.
\newblock \href {https://doi.org/10.1016/S1570-8667(03)00082-0}
  {\path{doi:10.1016/S1570-8667(03)00082-0}}.

\bibitem[KS20a]{KS20a}
Tomasz Kociumaka and Barna Saha.
\newblock Sublinear-time algorithms for computing {\&} embedding gap edit
  distance.
\newblock In {\em 61st Annual IEEE Symposium on Foundations of Computer
  Science, FOCS 2020}, pages 1168--1179. {IEEE}, 2020.
\newblock URL: \url{https://doi.org/10.1109/focs46700.2020.00112}.

\bibitem[KS20b]{KS20b}
Michal Kouck{\'{y}} and Michael~E. Saks.
\newblock Constant factor approximations to edit distance on far input pairs in
  nearly linear time.
\newblock In {\em 52nd Annual {ACM} Symposium on Theory of Computing, {STOC}
  2020}, pages 699--712. {ACM}, 2020.
\newblock URL: \url{https://doi.org/10.1145/3357713.3384307}.

\bibitem[KS21]{KS21}
Tomasz Kociumaka and Saeed Seddighin.
\newblock Improved dynamic algorithms for longest increasing subsequence.
\newblock In Samir Khuller and Virginia~Vassilevska Williams, editors, {\em
  {STOC} '21: 53rd Annual {ACM} {SIGACT} Symposium on Theory of Computing,
  Virtual Event, Italy, June 21-25, 2021}, pages 640--653. {ACM}, 2021.
\newblock \href {https://doi.org/10.1145/3406325.3451026}
  {\path{doi:10.1145/3406325.3451026}}.

\bibitem[LMS98]{LMS98}
Gad~M. Landau, Eugene~W. Myers, and Jeanette~P. Schmidt.
\newblock Incremental string comparison.
\newblock {\em {SIAM} J. Comput.}, 27(2):557--582, 1998.
\newblock \href {https://doi.org/10.1137/S0097539794264810}
  {\path{doi:10.1137/S0097539794264810}}.

\bibitem[LV88]{LV88}
Gad~M. Landau and Uzi Vishkin.
\newblock Fast string matching with k differences.
\newblock {\em J. Comput. Syst. Sci.}, 37(1):63--78, 1988.
\newblock \href {https://doi.org/10.1016/0022-0000(88)90045-1}
  {\path{doi:10.1016/0022-0000(88)90045-1}}.

\bibitem[MS20]{MS20}
Michael Mitzenmacher and Saeed Seddighin.
\newblock Dynamic algorithms for {LIS} and distance to monotonicity.
\newblock In Konstantin Makarychev, Yury Makarychev, Madhur Tulsiani, Gautam
  Kamath, and Julia Chuzhoy, editors, {\em Proccedings of the 52nd Annual {ACM}
  {SIGACT} Symposium on Theory of Computing, {STOC} 2020, Chicago, IL, USA,
  June 22-26, 2020}, pages 671--684. {ACM}, 2020.
\newblock \href {https://doi.org/10.1145/3357713.3384240}
  {\path{doi:10.1145/3357713.3384240}}.

\bibitem[MSU97]{m97}
Kurt Mehlhorn, Rajamani Sundar, and Christian Uhrig.
\newblock Maintaining dynamic sequences under equality tests in polylogarithmic
  time.
\newblock {\em Algorithmica}, 17:183--198, 1997.

\bibitem[NII{\etalchar{+}}20]{NIIBT20}
Takaaki Nishimoto, Tomohiro I, Shunsuke Inenaga, Hideo Bannai, and Masayuki
  Takeda.
\newblock Dynamic index and lz factorization in compressed space.
\newblock {\em Discrete Applied Mathematics}, 274:116--129, 2020.
\newblock Stringology Algorithms.
\newblock \href {https://doi.org/10.1016/j.dam.2019.01.014}
  {\path{doi:10.1016/j.dam.2019.01.014}}.

\bibitem[NS17]{NS17}
Danupon Nanongkai and Thatchaphol Saranurak.
\newblock Dynamic spanning forest with worst-case update time: adaptive, las
  vegas, and {O(n\({}^{\mbox{1/2 - {\(\epsilon\)}}}\))}-time.
\newblock In {\em Proceedings of the 49th Annual {ACM} {SIGACT} Symposium on
  Theory of Computing, {STOC} 2017, Montreal, QC, Canada, June 19-23, 2017},
  pages 1122--1129. {ACM}, 2017.
\newblock \href {https://doi.org/10.1145/3055399.3055447}
  {\path{doi:10.1145/3055399.3055447}}.

\bibitem[NSW17]{NSW17}
Danupon Nanongkai, Thatchaphol Saranurak, and Christian Wulff{-}Nilsen.
\newblock Dynamic minimum spanning forest with subpolynomial worst-case update
  time.
\newblock In Chris Umans, editor, {\em 58th {IEEE} Annual Symposium on
  Foundations of Computer Science, {FOCS} 2017, Berkeley, CA, USA, October
  15-17, 2017}, pages 950--961. {IEEE} Computer Society, 2017.
\newblock \href {https://doi.org/10.1109/FOCS.2017.92}
  {\path{doi:10.1109/FOCS.2017.92}}.

\bibitem[NW70]{NW70}
Saul~B Needleman and Christian~D Wunsch.
\newblock A general method applicable to the search for similarities in the
  amino acid sequence of two proteins.
\newblock {\em Journal of molecular biology}, 48(3):443--453, 1970.

\bibitem[RZ16]{RZ16}
Liam Roditty and Uri Zwick.
\newblock A fully dynamic reachability algorithm for directed graphs with an
  almost linear update time.
\newblock {\em SIAM Journal on Computing}, 45(3):712--733, 2016.
\newblock \href {https://doi.org/10.1137/13093618X}
  {\path{doi:10.1137/13093618X}}.

\bibitem[San04]{Sankowski04}
Piotr Sankowski.
\newblock Dynamic transitive closure via dynamic matrix inverse (extended
  abstract).
\newblock In {\em 45th Symposium on Foundations of Computer Science {(FOCS}
  2004), 17-19 October 2004, Rome, Italy, Proceedings}, pages 509--517. {IEEE}
  Computer Society, 2004.
\newblock \href {https://doi.org/10.1109/FOCS.2004.25}
  {\path{doi:10.1109/FOCS.2004.25}}.

\bibitem[San07]{Sankowski07}
Piotr Sankowski.
\newblock Faster dynamic matchings and vertex connectivity.
\newblock In {\em Proceedings of the Eighteenth Annual {ACM-SIAM} Symposium on
  Discrete Algorithms, {SODA} 2007, New Orleans, Louisiana, USA, January 7-9,
  2007}, pages 118--126. {SIAM}, 2007.

\bibitem[Sel74]{Sel74}
Peter~H. Sellers.
\newblock On the theory and computation of evolutionary distances.
\newblock {\em SIAM Journal on Applied Mathematics}, 26(4):787--793, 1974.
\newblock \href {https://doi.org/10.1137/0126070} {\path{doi:10.1137/0126070}}.

\bibitem[Sol16]{Solomon16}
Shay Solomon.
\newblock Fully dynamic maximal matching in constant update time.
\newblock In Irit Dinur, editor, {\em {IEEE} 57th Annual Symposium on
  Foundations of Computer Science, {FOCS} 2016, 9-11 October 2016, Hyatt
  Regency, New Brunswick, New Jersey, {USA}}, pages 325--334. {IEEE} Computer
  Society, 2016.
\newblock \href {https://doi.org/10.1109/FOCS.2016.43}
  {\path{doi:10.1109/FOCS.2016.43}}.

\bibitem[Tis08]{Tis08}
Alexandre Tiskin.
\newblock Semi-local string comparison: Algorithmic techniques and
  applications.
\newblock {\em Mathematics in Computer Science}, 1(4):571--603, 2008.
\newblock \href {https://doi.org/10.1007/s11786-007-0033-3}
  {\path{doi:10.1007/s11786-007-0033-3}}.

\bibitem[vdBLS22]{BLS23}
Jan van~den Brand, Yang~P. Liu, and Aaron Sidford.
\newblock Dynamic maxflow via dynamic interior point methods.
\newblock {\em CoRR}, abs/2212.06315, 2022.
\newblock \href {https://arxiv.org/abs/2212.06315} {\path{arXiv:2212.06315}},
  \href {https://doi.org/10.48550/arXiv.2212.06315}
  {\path{doi:10.48550/arXiv.2212.06315}}.

\bibitem[vdBN19]{BN19}
Jan van~den Brand and Danupon Nanongkai.
\newblock Dynamic approximate shortest paths and beyond: Subquadratic and
  worst-case update time.
\newblock In {\em 60th {IEEE} Annual Symposium on Foundations of Computer
  Science, {FOCS} 2019, Baltimore, Maryland, USA, November 9-12, 2019}, pages
  436--455. {IEEE} Computer Society, 2019.
\newblock \href {https://doi.org/10.1109/FOCS.2019.00035}
  {\path{doi:10.1109/FOCS.2019.00035}}.

\bibitem[vdBNS19]{BNS19}
Jan van~den Brand, Danupon Nanongkai, and Thatchaphol Saranurak.
\newblock Dynamic matrix inverse: Improved algorithms and matching conditional
  lower bounds.
\newblock In David Zuckerman, editor, {\em 60th {IEEE} Annual Symposium on
  Foundations of Computer Science, {FOCS} 2019, Baltimore, Maryland, USA,
  November 9-12, 2019}, pages 456--480. {IEEE} Computer Society, 2019.
\newblock \href {https://doi.org/10.1109/FOCS.2019.00036}
  {\path{doi:10.1109/FOCS.2019.00036}}.

\bibitem[Vin68]{Vin68}
T.~K. Vintsyuk.
\newblock Speech discrimination by dynamic programming.
\newblock {\em Cybernetics}, 4(1):52--57, 1968.
\newblock \href {https://doi.org/10.1007/BF01074755}
  {\path{doi:10.1007/BF01074755}}.

\bibitem[WF74]{WF74}
Robert~A. Wagner and Michael~J. Fischer.
\newblock The string-to-string correction problem.
\newblock {\em Journal of the ACM}, 21(1):168--173, 1974.

\end{thebibliography}

\appendix
\section{Proof of \cref{cor:precision-sampling}}

\precisionsampling*
\begin{proof}
Consider a random variable $u\sim \Dst(\epsilon,\frac12\delta)$ following the distribution of \cref{lem:precision-sampling}
and define a value $\tau\in (0,1]$ such that $\Pr[u < \tau] \le \frac12\delta \le \Pr[u \le \tau]$.
We define the distribution $\hDst(\epsilon,\delta)$ so that $\max(u,\tau)\sim \hDst(\epsilon,\delta)$.

As for accuracy, we use the original recovery algorithm for $\Dst(\epsilon,\frac12\delta)$.
To analyze its success probability, consider $v_1,\ldots, v_n \sim \hDst(\epsilon,\delta)$ and let $u_1,\ldots,u_n\sim \Dst(\epsilon,\frac12\delta)$ be such that $v_i = \max(u_i,\tau)$. If, for each $i$, the value $\widetilde A_i$ is an $(\alpha, \beta \cdot u_i)$-approximation of $A_i$,
then the algorithm $((1+\epsilon) \cdot \alpha, \beta)$-approximates $\sum_i A_i$ with probability at least $1-\frac12\delta$.
For a given $i$, the assumption that the value $\widetilde A_i$ is an $(\alpha, \beta \cdot u_i)$-approximation of $A_i$
may fail only if $u_i \ne v_i$, i.e., $u_i < \tau$, which happens with probability at most $\frac12\delta$.
By the union bound, the overall failure probability is at most $(n+1)\cdot\frac12\delta \le n\delta$.

Next, consider $u \sim \Dst(\epsilon, \frac12\delta)$ and the event $E=E(u,4\delta^{-1})$ of \cref{lem:precision-sampling}.
Moreover, let $F = \{u \le \tau\}$. 
Note that $\Pr[F]+\Pr[E] \ge \frac12\delta+1-\frac14\delta \ge 1+\frac14\delta$, so $\Pr[F\cap E]\ge \frac14\delta$.
Thus, \[\tau^{-1} \le \Exp\left[u^{-1} | F\cap E\right] \le \frac{\Pr[E]}{\Pr[F\cap E]}\cdot \Exp\left[u^{-1} | E\right] \le \tfrac{4}{\delta}\cdot \Exp\left[u^{-1} | E\right].\]
At the same time, if $v = \max(u,\tau)\sim \hDst(\epsilon,\delta)$, then
\begin{multline*}\Exp\left[v^{-1}\right] \le \Pr[E]\cdot \Exp\left[v^{-1} | E\right] + (1-\Pr[E])\cdot \tau^{-1} \le \Exp\left[u^{-1} | E\right]+\tfrac{\delta}{4}\cdot \tfrac{4}{\delta}\cdot \Exp\left[u^{-1} | E\right]\\ 
= 2\cdot \Exp\left[u^{-1} | E\right] = \widetilde{O}(\epsilon^{-2} \log(2\delta^{-1}) \log(4\delta^{-1}))=\widetilde{O}(\epsilon^{-2} \log^2(\delta^{-1})).\qedhere
\end{multline*}
\end{proof}

\end{document}